\newtheorem{theorem}{Theorem}[section]
\newtheorem{definition}[theorem]{Definition}
\newtheorem{lemma}[theorem]{Lemma}
\newtheorem{proposition}[theorem]{Proposition}
\newenvironment{proof}[1][Proof]{\noindent\textbf{#1.} }{\ \rule{0.5em}{0.5em}}
\numberwithin{equation}{section}
\begin{document}
\title{Universal exit probabilities in the TASEP}

\author{S.S. Poghosyan}
\email{spoghos@theor.jinr.ru}
\affiliation{Bogoliubov Laboratory of Theoretical Physics, Joint Institute for Nuclear Research, 141980 Dubna, Russia}
\author{A.M. Povolotsky}
\email{alexander.povolotsky@gmail.com}
\affiliation{Bogoliubov Laboratory of Theoretical Physics, Joint Institute for Nuclear Research, 141980 Dubna, Russia}
\affiliation{National Research University Higher School of Economics, 20 Myasnitskaya Ulitsa, Moscow 101000, Russia }
\author{V.B. Priezzhev}
\email{priezzvb@theor.jinr.ru}
\affiliation{Bogoliubov Laboratory of Theoretical Physics, Joint Institute for Nuclear Research, 141980 Dubna, Russia}

\bigskip
\begin{abstract}We study the joint exit probabilities of particles in the totally asymmetric simple exclusion process (TASEP)
from space-time sets of given form. We extend previous results on the space-time correlation functions of the TASEP, which correspond to exits from the sets bounded by straight vertical or horizontal lines. In particular, our approach allows us to remove ordering of time moments used in previous studies so that only a natural space-like ordering of particle coordinates remains. We consider sequences of general staircase-like boundaries going from the northeast to southwest in the space-time plane. The exit probabilities from the given sets are derived in the form of Fredholm determinant defined on the boundaries of the sets. In the scaling limit, the staircase-like boundaries are treated as approximations of continuous differentiable curves. The  exit probabilities with respect to  points of these curves   belonging to arbitrary space-like path are shown to converge to the universal Airy$_2$ process.
\end{abstract}
\pacs{05.40.+j, 02.50.-r, 82.20.-w}
\maketitle
\section{Introduction}

Consider the system of particles on the 1D integer lattice. At any time
moment a configuration of particles is specified by a set of $N$ strictly
increasing integers, $(x_1>x_2>\dots)$, denoting particle coordinates.
They evolve in a discrete time $t\in\mathbbm{Z}$ according to the TASEP \cite{Liggett}
dynamical rules:

\begin{enumerate}
\item[I.] A particle takes a step forward, $(x_i\to x_{i}+1)$, with
probability $p$ and stays at the same site, $(x_i\to x_i)$, with probability
$q\equiv1-p$ provided that the target site is empty, $(x_i+1\neq x_{i-1})$.

\item[II.] If the next site is occupied, $(x_i+1=x_{i-1})$, the particle
stays with probability $1$.

\item[III.] The backward sequential update is used \cite{update}: at each time step the
positions $x_i$ of all particles are updated one by one, in the order of
increasing of particle index: $i=1,2,3,\dots$
\end{enumerate}

These dynamical rules define transition probabilities for a Markov chain
constructed on the set of particle configurations. Given initial conditions,
one can inquire for probabilities of different events in course of
the Markov evolution. In present paper, we are interested in the correlation
functions which are the probabilities for events associated with a few specified
particles and given space-time positions.

\subsection{Spacial correlation functions of the TASEP.}

The first exact result on correlation functions in TASEP goes back to
prominent Johansson's work \cite{Johansson}, where he considered the
evolution of TASEP with parallel update and step initial conditions,
\begin{equation}  \label{step}
\bm{x}^0=\mathbbm{Z}_{\leq 0},
\end{equation}
and obtained the distribution, $P_t(x_N>M-N)$, of the distance $M$ traveled by $N
$-th particle up to time $t$. This result was later generalized to the
backward sequential update \cite{RS} and the flat initial conditions \cite{Nagao}.
The connection of the TASEP with the theory of determinantal point
processes revealed in \cite{Sasamoto,BFPS} allowed also calculation of the
multi-particle correlation functions, i.e. distribution $P_t(x_{n_1}>a_1,%
\dots x_{n_m}>a_m)$, of positions of $m$ selected particles at fixed time $t$%
, where $1\leq n_1< \dots<n_m $ are $m$ integers numbering the selected
particles. The multi-particle correlation functions were extensively studied
for different initial conditions in a series of papers \cite{BFPS,BFP,BFS1}. The
result can generally be represented in a form of the Fredholm determinant of
the operator with some integral kernel. An asymptotic analysis of the kernel
is of special interest as it allows one to study the scaling limit of the
correlation functions, which is believed to yield universal scaling
functions of the Kardar-Parisi-Zhang (KPZ) universality class \cite{KPZ}.

There is a law of large numbers, which implies that the stochastic evolution
converges to a deterministic limit \cite{Rost,Spohn}. Specifically, in the TASEP,
if we measure coordinate $x_n$ of $n$-th particle at time $t$, the
deterministic relation between rescaled variables
\begin{equation}  \label{scal lim vars}
\nu\equiv n/L,\omega \equiv t/L,,\gamma \equiv (x_n+n)/L
\end{equation}
holds with probability one as $L\to\infty$. An explicit form of this
relation can be found from the hydrodynamic conservation law
\begin{equation}  \label{hydrodynamics}
\partial_t \rho+\partial_x j=0.
\end{equation}
for the density of particles $\rho$. Here $j\equiv j(\rho)$ is the
stationary current of particles, which is a model-dependent function of the
density. In the case of backward update the current is
\begin{equation}  \label{current}
j(\rho)=\frac{p \rho(1-\rho)}{1-p\rho}.
\end{equation}
Then, the solution of   (\ref{hydrodynamics}) with initial conditions (\ref{step}), yields relation
\begin{equation}  \label{det lim rel}
\sqrt{p\omega}-\sqrt{q\nu}-\sqrt{\gamma}=0,
\end{equation}
which holds in the range $- p/q \leq (\gamma-\nu)/\omega \leq p$. For the the formula \eqref{current} and its relation to
\eqref{det lim rel} we address the reader to references \cite{Johansson,RS}.

An exact calculation of the correlation functions allows one to study
fluctuations of the random variables near their value on the deterministic
scale. Given $\nu$ and $\omega$, let $\gamma(\omega,\nu)$ be the rescaled
particle coordinate. The deviation $\delta x_n\equiv x_n-L(\gamma(\omega,\nu)-\nu)$
of the particle coordinate $x_n$ develops on the KPZ characteristic scale
fluctuations
\begin{equation}  \label{KPZ scaling}
\delta x_n \sim L^{\alpha},\alpha=1/3.
\end{equation}
The distribution of the rescaled variable
\begin{equation}  \label{s}
s= \kappa_x^{-1} \lim_{L\to\infty} \delta x_n L^{-\alpha},
\end{equation}
is a universal scaling function of the KPZ class, dependent only on the form
of the initial macroscopic density profile. Note that the model dependence
is incorporated into a single non-universal constant $\kappa_x$. The
examples of distributions obtained from the asymptotic analysis of the
one-point correlation function are the Tracy-Widom functions $F_1$ and
$F_2$ for flat and step initial conditions respectively. These functions are
well known for appearing in the theory of random matrices as the
distributions of the largest eigenvalue in the orthogonal and unitary
Gaussian ensembles \cite{TW,Mehta}. Their presence turns out to be a universal
feature of the KPZ class. Furthermore, the study of multipoint distributions
shows that the fluctuations of  coordinates of different particles, say $%
x_{n_1}$ and $x_{n_2}$, remain non-trivially correlated random variables on
the scale
\begin{equation}  \label{beta}
|n_1-n_2|\sim L^{\beta}, \beta=2/3.
\end{equation}
This is the second power law characterizing the KPZ class. The critical
exponents $\alpha=1/3$ and $\beta=2/3$ are called fluctuation and
correlation exponents respectively. After corresponding rescaling of
particle numbers, one arrives at the one-parametric family of correlated
random variables:
\begin{equation}  \label{s(u)}
s(u)=\lim_{L\to\infty}\frac{x_{[\nu N+u N^{\beta} \kappa_n
]}-L(\gamma(\omega, \nu +u L^{\beta-1} \kappa_n)-\nu-u L^{\beta-1} \kappa_n)}{L^\alpha \kappa_x},
\end{equation}
where $\kappa_n$ is another non-universal constant. For the cases of flat
and step initial conditions, the joint distributions of these variables
define universal Airy$_1$ \cite{Sasamoto} and Airy$_2$ \cite{PraehoferSpohn}
ensembles, whose one-point distributions are $F_1$ and $F_2$.

\subsection{Space-Time correlations and mapping to the last passage
percolation.}

So far we have been discussing only the spacial correlations between
positions of different particles at a fixed time moment. However, generally,
one can consider joint probability distributions of events associated with
different particles, positions and time moments, which happen in course of
the TASEP evolution. We will refer to these distributions as the space-time
correlation functions. An example of such a function, the distribution of
positions of a tagged particle at different moments of time, has been calculated
in \cite{ImamSasamoto}. A more general correlation function, the
distribution $P(x_{n_1}(t_1)>a_1,\dots,x_{n_m}(t_m)>a_m)$ of positions $%
x_{n_1},\dots,x_{n_m}$ of selected particles with numbers
\begin{equation}  \label{n order}
n_1\leq\dots\leq n_m
\end{equation}
at time moments $t_1,\dots,t_m$, was studied in \cite{BorodinFerrari,BFS}.
The method was used that restricted the analysis to the sets of space-time
points, such that the time coordinates decreased weakly with the particle
number and vice versa:
\begin{eqnarray}  \label{space-like 1a}
t_i&\geq& t_{i+1}, \,\,\mathrm{if} \,\, n_i<n_{i+1}, \\
t_i&>&t_{i+1},\,\, \mathrm{if} \,\, n_i=n_{i+1} .  \label{space-like 1b}
\end{eqnarray}
This arrangement of time moments was named space-like by the authors of  \cite{BorodinFerrari,BFS}. Another example of the space-time correlation
function, the current correlation function, was recently obtained in \cite{PovPrS}. This was the probability distribution $P(t_{n_1}<a_1,%
\dots,t_{n_m}<a_m)$ of time moments $t_{n_1},\dots,t_{n_m}$ at which $m$
selected particles with numbers
\begin{equation}  \label{order_n 2}
n_1<\dots<n_m
\end{equation}
jump from the respective sites $x_{n_1},\dots,x_{n_m}$ selected from the set
\begin{equation}  \label{fence}
\{x_i=x-i+N:\, i=1,\dots, n_m\},
\end{equation}
given $x\in \mathbbm{Z}$, $N\geq n_m$ and the initial configuration $%
x_{i}^0=1-i, i\in \mathbb{N}$. Due to non-crossing of space-time particle
trajectories, the range of time moments accessible for the dynamics is
\begin{equation}  \label{time order current}
t_{n_1}\leq\dots\leq t_{n_m}.
\end{equation}

The time orderings (\ref{space-like 1a},\ref{space-like 1b}) and (\ref%
{order_n 2},\ref{time order current}) are opposite to each other. These
orderings, however, have different origins. In \cite{ImamSasamoto,BorodinFerrari,BFS}, numbers of particles $n_1,n_2,\dots$ and
time moments $t_{n_1},t_{n_2},\dots$ are fixed, and particle coordinates $%
x_{n_1},x_{n_2},\dots$ are random variables. In the case of current
correlations \cite{PovPrS}, time moments $t_{n_i}$ are random, while
particle coordinates $x_{n_i}$ and numbers $n_i$ are related fixed
parameters. Therefore, unlike (\ref{space-like 1a},\ref{space-like 1b}) in
\cite{BorodinFerrari,BFS}, (\ref{time order current}) from \cite{PovPrS} is
not an external constraint, but is the consequence of dynamics: it shows
domains which can be reached in the random process with nonzero probability.

Which variable is chosen to be random is, however, not important in the
scaling limit, when the three variables, time and space coordinate and the
number of a particle, acquire   equivalent significance due to separation of fluctuation and
correlation scales. Indeed, once we have fixed the values of any two of the
parameters $n,x,t$ on the large scale, the value of the third one is
uniquely fixed to the same order by the deterministic relation (\ref{det lim
rel}).  Then, the random fluctuations of any of these quantities characterize the degree
of violation of this relation.
In other words, we fix a point on the 2D surface defined by the
relation (\ref{det lim rel}) in 3D space of parameters $\gamma,\omega,\nu$. Then, the small fluctuations in
the vicinity of this point are represented by an infinitesimal vector
normal to the surface, which can be projected to one of three directions $\gamma,\omega,\nu$
 or any other direction in 3D space. A choice of the direction affects
only the angle-dependent constants defining the fluctuation scale, while the
functional form of the distributions is universal. Furthermore, the
correlations between fluctuations associated with different points of the
surface are also universal, as far as the points are separated by a distance
of order of correlation scale, $N^\beta$. The universality holds as the
mutual positions of the points vary in a wide range. Indeed, the limiting
correlation functions of both positions \cite{ImamSasamoto,BorodinFerrari,BFS} and times \cite{PovPrS} chosen within the
domains (\ref{space-like 1a},\ref{space-like 1b}) and (\ref{order_n 2},\ref%
{time order current}), respectively, yield $Airy_2$ correlations for the
case of step initial conditions.

How rigid the universality with respect to the choice of points within the
correlation function was clarified by Ferrari in \cite{Ferrari}, whose
arguments were based on the observed slow decorrelation phenomena. He
explained that the limiting correlations can be of two types depending on
whether the point configurations under consideration are space-like or
time-like. The correlations for the space-like configurations are, up to a
non-universal scaling factor, of the same form as the purely spacial
correlations. Specifically, when the distance between points is of order $%
N^\beta$, the fluctuations at these points are described by the Airy$_1$,
Airy$_2$ e.t.c. ensembles, depending on the initial conditions, like in the
purely spacial case. However, if the point configuration is time-like, the
fluctuations, measured at the characteristic fluctuation scale $N^\alpha$,
remain fully correlated, i.e. identical, until the distance between the
points will be of order of $N$, which is much larger than $N^\beta$.

The definitions of space-like and time-like point configurations used in
\cite{Ferrari} for the polynuclear growth (PNG) model and extended by
Corwin, Ferrari and Peche (CFP), \cite{CorwinFerrariPeche}, to a wide range
of other models including TASEP were, however, different from the one
accepted in \cite{BorodinFerrari,BFS}. To classify  our results correctly,
we recap here the main idea of CFP. Their formulation used the language of
the last passage percolation \cite{Johansson}, which can be directly, mapped
to the TASEP as well as to many different models \cite{CorwinFerrariPeche}.
Let $\mathbbm{R}^2_+$ be the first quadrant of $\mathbbm{R}^2$. Each point
of $\mathbbm{R}^2_+$ with positive integer coordinates $(i,j)\in\mathbbm{N}%
^2\bigcap\mathbbm{R}^2_+$, is assigned a geometrically distributed random
variable $T_{i,j}$,
\begin{equation}  \label{geometr}
P(T_{i,j}=t)=q^t (1-q).
\end{equation}
A particular realization of the TASEP evolution is recorded in the values of
$T_{i,j}$. Namely, $T_{i,j}$ is the time the $i$-th particle is waiting for
before making $j$-th step after it has been allowed to move. A directed
lattice paths, $\Pi_{(x_1,y_1)\to(x_2,y_2)}$, is the path, which starts at
the point $(x_1,y_1)$ and, making only unit steps either upward, $%
(i,j)\to(i,j+1)$, or rightward, $(i,j)\to(i+1,j)$, ends at the point $%
(x_2,y_2)$. The sum of $T_{i,j}$ over the path is referred to as the last
passage time. As it was shown by Johansson for the TASEP with parallel
update \cite{Johansson}, the last passage time, maximized over the set of
all paths from $(1,1)$ to $(n,m)$,
\begin{equation}
\mathcal{T}_{n,m}= \max_{\{\Pi_{(0,0)\to(n,m)}\}} \sum_{(i,j)\in \Pi}T_{i,j}
\end{equation}
is related to time $t_n(m)$ the $n$-th particle takes to make $m$ steps, $%
t_n(m)=\mathcal{T}_{n,m}+n$. For the TASEP with backward sequential update
these two times are simply equal, $t_n(m)=\mathcal{T}_{n,m}$. Other models
can be obtained as limiting cases. In the limit $q\to 1$ with rescaling of
time $t\to t(1-q)$ we obtain the exponential distribution of waiting times,
which defines the continuous time TASEP. In the opposite limit $q\to 0$ the
first quadrant is filled mainly by zeroes, while ``one'' appears rarely
having concentration $q$. After going to the continuous limit with rescaled
coordinates $(x,y)\to (qx,qy)$, the distribution of ``ones'' on the
background of zeroes becomes the Poisson process in the first quadrant,
which in turn can be used to define the PNG \cite{PraehoferSpohn,BorodinOlshanski}. Given $(n,m)$, the probability
distribution of waiting times (\ref{geometr}) induces the distribution $P(%
\mathcal{T}_{n,m}<a)$ of the last passage time $\mathcal{T}_{n,m}$. The
joint distributions $P(\mathcal{T}_{n_1,m_1}<a_1,\dots,\mathcal{T}%
_{n_k,m_k}<a_k)$ of the last passage times for $k$ different points $%
(n_1,m_1),\dots,(n_k,m_k)$ are referred to as $k$-point correlation
functions.

According to CFP, two-point configuration $((n_1,m_1),(n_2,m_2))$ is
time-like if the points can be connected by a directed path $%
\Pi_{(n_1,m_1)\to (n_2,m_2)}$ and is space-like otherwise.  Suppose that $%
n_1 \leq n_2$. Obviously,  the time-like conditions are
\begin{eqnarray}  \label{space-like ferrari}
m_1 \leq m_2\,\,\mathrm{when}\,\, n_1<n_2, \\
m_1 <m_2\,\,\mathrm{when}\,\, n_1=n_2.  \notag
\end{eqnarray}
Recall that in the TASEP with step initial conditions a particle with the
number $n$ starts at initial position $x_n^0=-n+1$. Therefore, the spatial
coordinate of the particle, which has traveled for the distance $m$, is $%
x_n=m-n+1$. Then, the space-like condition opposite to (\ref{space-like
ferrari}) can be translated to the one for the space coordinates:
\begin{eqnarray}  \label{space-like ferrari coord}
x_i< x_j\,\,\mathrm{when}\,\, i\geq j.
\end{eqnarray}
This is the condition that the slow decorrelation does not occur, and,
correspondingly, the universality holds. One can see that the points (\ref%
{fence}) of final configurations within the current correlation functions
satisfy this condition. Also, due to non-crossing of particle trajectories,
these conditions hold automatically when the time moments are chosen in the
domain (\ref{space-like 1a},\ref{space-like 1b}). Therefore, the point
configurations studied in \cite{ImamSasamoto,BorodinFerrari,BFS,PovPrS} are
space-like according to CFP classification. However, in the complementary
domain, both types of the scaling behaviour present. Thus, the division to
time-like and space-like configurations proposed by CFP is more adequate if
one wants to distinguish between different types of universal behaviour of
correlation functions. By this reason, we keep on their terminology, where
the space-like configurations in TASEP are defined by the condition (\ref%
{space-like ferrari coord}) and time-like by the opposite one. The current
correlation functions calculated in \cite{PovPrS} were just an example of
space-like correlations beyond the domain studied in \cite%
{BFS,BorodinFerrari}. In fact, the earliest result on space-like
correlations was obtained in \cite{BorodinOlshanski}, where the universality
of the scaling limit was shown in context of the PNG model in the whole
space-like domain. However the microscopic consideration in context of the
TASEP was limited to (\ref{space-like 1a},\ref{space-like 1b})
in \cite{ImamSasamoto,BFS,BorodinFerrari,PovPrS} and to (\ref{order_n 2},\ref{time
order current}) in \cite{PovPrS}, where the spacial coordinates were fixed
by (\ref{fence}).

In this paper we extend the microscopic
derivation of the TASEP correlation functions to the rest of the space-like
domain, what has not been covered by previous analysis.

\subsection{General overview and the aim of the present work.}

We conclude the introductory part with an informal outline of the recent
development of the theory of multipoint correlation functions described above and formulation
of purposes we are going to fulfil below.

Though the previous  results were formulated in terms of distributions of various quantities, they can be considered
in a similar fashion if we look at the TASEP as at the probability measure over collections of interacting lattice paths (the space-time trajectories of particles), which can go one step down (particle stays) or down-right (particle makes a step) in the space-time plane.
Then the correlation functions give  marginal probabilities of certain points or bonds of the underlying lattice to belong to paths corresponding to selected particles. Specifically the development can be roughly divided into three stages depicted in Fig.(\ref{fig0}).
At the first stage the  points  were fixed at the same  moment
of time, e.g. those  encircled in Fig.(\ref{fig0}a).
\begin{figure}[tbp] \centering
\includegraphics[width=0.9\textwidth]{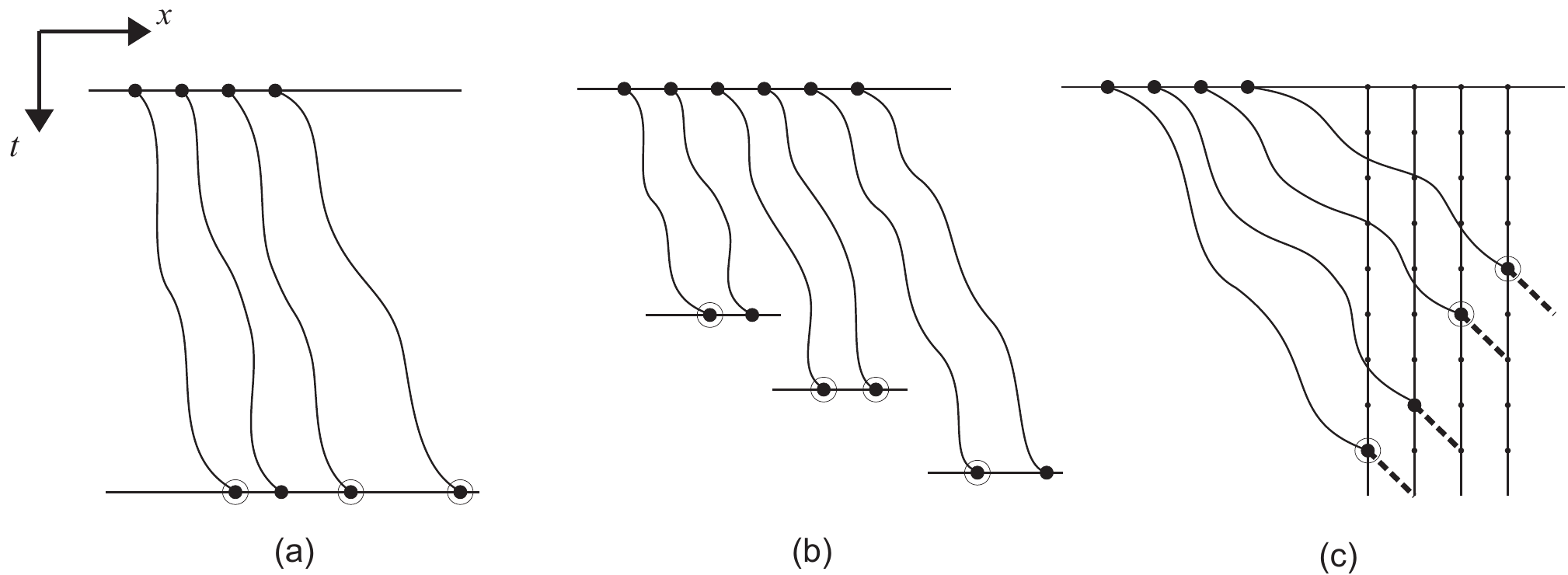}
\caption{(a) Equal-time correlation functions; (b) The first extension of the space-time domain;
(c)Current correlation functions.}
\label{fig0}
\end{figure}
The basic achievement of this stage, mentioned in subsection 1.1,
is revealing the structure of determinantal process in the TASEP \cite{Sasamoto,BFPS}.

The second stage described in subsection 1.2 is characterized by an extension of the range of point configurations to space-time domain shown in Fig.(\ref{fig0}(b)). The  condition crucial for the solution is the possibility to cut off
the part of particle trajectory following the selected point without affecting the remaining part. In the first case we just stopped at the moment of interest and the independence from the future  was a trivial consequence of the fact that the TASEP is a Markov process. In the second case similar independence  follows from another Markov property specific for  the TASEP dynamics \cite{BorodinFerrari,BFS}:
the particles in the TASEP do not affect an evolution of other particles to the right of them. Therefore, one can drop a part of a particle trajectory if there is no points fixed to the left of it at later time, see Fig.(\ref{fig0}(b)), so that the time corresponding to the selected points increases weakly from left to right. Finally one again arrives at the determinatal process, though more elaborated than the one in the first case.

The third stage, referred to as  current correlation functions, is depicted in Fig.(\ref{fig0}c).
Here the particle trajectories propagate equal distances in spatial direction
and the selected points are fixed at different moments of time, which, as seen from the picture,  must increase weakly from right to left.
At the first glance this situation is in contradiction with the above "trajectory cutting" ideology. However it is not difficult to
convince oneself that if we require that the trajectory makes a step forward after the selected point, it has no chance to interact with the trajectory
that ends one step to the left of it at later time. Therefore the part of the trajectory after this step can be dropped.
This  is a Markov property analogous to the previous one, which lies behind the solution.
 Technically, the reduction of the number of particles continuing evolution can be performed by use of so called generalized Green functions introduced in
\cite{Brankov} and applied in \cite{PovPrS}, which in turn can be reduced to the determinantal process again. On the language of lattice paths this solution yeilds the probability of having a fixed bonds within the trajectories  selected particles.

\begin{figure}[tbp] \centering
\includegraphics[width=0.3\textwidth]{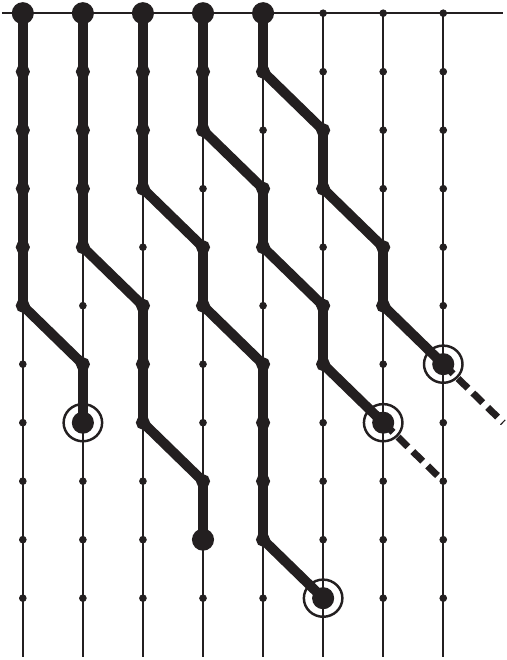}
\caption{ The extension considered in the present work. The time moments are not ordered, while the space positions are.
The steps of first two particles when they leave the zone
of possible interaction with the following particles are shown by dotted lines.}
\label{fig00}
\end{figure}

Our goal here is to unify all the previous achievements. Below we calculate the probabilities of  trajectories of selected particles to contain given points or bonds, as shown in in Fig.(\ref{fig00}). The range of point configurations we consider is wider then in the earlier solutions. Combination of two above Markov properties and use of the generalized Green function allow us to
remove time ordering completely. The tools we use, however, are applicable only when the spacial positions of the endpoints are strictly ordered in space. This is the only major constraint, which is nothing but the space-like condition described above \eqref{space-like ferrari coord}.

Though the ensemble of lattice paths gives a good pictorial representation of the problem, this language is not suitable for real calculations and presentation of the results, because the whole set of lattice paths is too big. To quantify the results we need a suitable probability space, where we could enumerate all our possible random outcomes. In the solutions mentioned above this was the set of particle coordinates $(\mathbb{Z})$, i.e. the lower horizontal line in  Fig.~\ref{fig0}(a), product of several such sets, i.e. subsequent horizontal lines in Fig.~\ref{fig0}(b), or the set of exit times enumerating the points at the vertical lines in Fig.~\ref{fig0}(b), respectively. Let us think about these lines as  the boundaries dividing the space-time plane into two parts. In all cases
the space-time trajectories of  particles go from one part to another right at the points we select. Therefore we can think of the probabilities under consideration as the probabilities for particle trajectories to go from the boundary at specified points. Known as exit probabilities
 such quantities are important in the extremal statistics \cite{KRB}. Exit probabilities is a convenient language to represent  most general correlation functions. To extend the range of space time configurations, we consider the boundaries of more general form: a broken line going from northeast to southwest by unit steps either vertical or horizontal, which divides the space-time plane into two parts. Consider now the space time trajectory of a single particle starting at the northwest part. Obviously, going from  the northwest  to  southeast, this trajectory will finally traverse the boundary. The question is, where will it happen?    We can enumerate the sites of the plane belonging to the boundary by a single generalized coordinate $\tau=t-x$, which runs over $\mathbb{Z}$. The value of $\tau$ corresponding to the site where the trajectory exits the boundary is a random variable, and its distribution $P(\tau<a)$ is the quantity of interest.  The probability distribution of particle coordinate at specified time moment and of the time the particle jumps from a specified site are particular cases of this general quantity.  Note that the exit occurs by two ways (down and down-right) from horizontal parts of the boundary and only  down-right  from vertical parts in the same way as above.

 The problem we address below is a direct generalization of one-particle picture described. We consider a collection of $m$ arbitrary boundaries, each with its own space-time coordinate  $\tau_i$ running in $\mathbb{Z}$, and enquire  about the joint distribution $P(\tau_1<a_1,\dots,\tau_m<a_m)$ of the coordinates of sites at which specified particles go from given boundaries, see Fig.\ref{fig000}. This construction allows one to remove any time ordering constraints and include into the scheme a possibility to consider both probability of particle being at a site and jumping from it. The geometric constraints on the boundaries from which the constraint on the accessible point configuration follow will be detailed in the next section.

\begin{figure}[tbp] \centering
\includegraphics[width=0.5\textwidth]{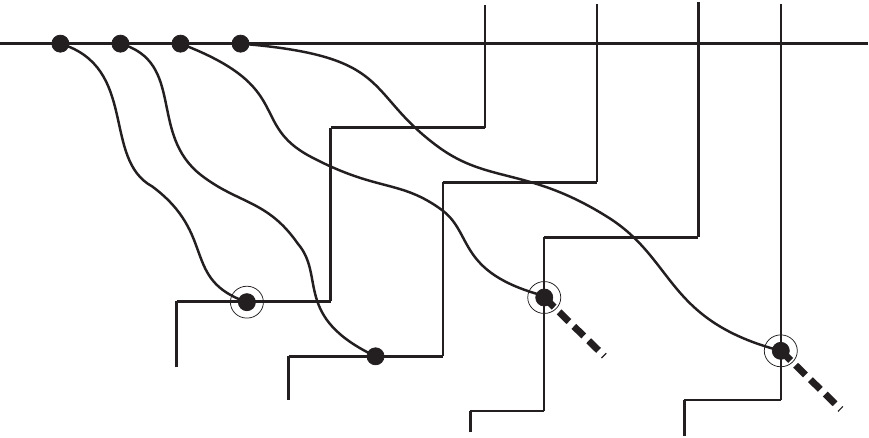}
\caption{Exit probabilities. The broken lines are the boundaries. An obligatory step forward must be added for exit to happen from vertical parts of the boundaries. The trajectory exiting from the horizontal part can continue in any of two ways. }
\label{fig000}
\end{figure}

After obtaining the results on exit probabilities we perform the scaling analysis of the formulas obtained. The lattice boundaries can be used to approximate smooth curves in the plane, and the selected points  are considered in the vicinity of a smooth path traversing these curves.  The main claim stemming from this analysis is that the large scale behaviour of the
 of exit probabilities is universal as far as the path under consideration do not violate the space-like constraint: the fluctuations of generalized exit coordinates of particles  starting from step initial conditions are described by Airy$_2$ ensemble in the same way as in purely spacial case.

The article is organized as follows. In the section \ref{Method and results} we give  definitions and formulate two main results of the paper:
exit probability distribution for trajectories of finite number of particles at the lattice   (Theorem \ref{distrib}) and its scaling limit (Theorem \ref{Airy_2}). In  section \ref{Determinantal point processes on the boundaries} we reformulate the TASEP in terms of signed determinantal process and prove theorem \ref{distrib} about exact form of the correlation function.  The section \ref{Asymptotic analysis of the correlation kernel} is devoted to asymptotic analysis of the results of previous sections, were we prove theorem \ref{Airy_2}.

\section{Method and results \label{Method and results}}

\subsection{Exit probabilities for particle trajectories on the space-time lattice}
To define exit probability for a single particle performing 1D asymmetric random walk,
consider a decomposition of the space-time 2D lattice into two complementary subsets
$\Omega\bigcup\bar{\Omega}=\mathbbm{Z}^2$. Given the random walk having started at  point $(x^0,t^0)\in \Omega$, the exit probability referring to  $\Omega$ is  a probability distribution of subsets of the boundary of $\Omega$
 from which the particle exits $\Omega$.  We will consider only sets having a property that once the particle has exited $\Omega$, it never returns there again. Then the  probability of exit from given point of the boundary does not depend on
 the global form of the boundary  of  $\Omega$.  Rather it is simply a product of  the probability for the particle trajectory to reach this point and the probability that the step from this points results an exit from $\Omega$. This is the case if  the boundary of $\Omega$ is defined in the following way.
\begin{definition}
\label{def1} The boundary $\mathcal{B}$ is an infinite countable subset of $%
\mathbbm{Z}^2$
\begin{equation}
\mathcal{B}=\{b(\tau)\in\mathbbm{Z}^2\}_{\tau \in \mathbbm{Z}},
\end{equation}
with the following staircase-like structure. Let $b(\tau)=(x,t)$. Then the next
point of the boundary will be either
\begin{equation}
b(\tau+1)=(x-1,t)  \label{boundarypoints}
\end{equation}
or
\begin{equation}
b(\tau+1)=(x,t+1),  \label{boundarypoints_v}
\end{equation}
for any $\tau\in \mathbbm{Z}$. A natural  integer variable $\tau$ increasing along the boundary from north-east to southwest  can be  chosen as $\tau=t-x, (x,t)\in \mathcal{B}$.
\end{definition}
Note that this construction ensures that the trajectory of a particle started in $\Omega$ eventually leaves $\Omega$ through the points of the boundary $\mathcal{B}$ with probability one and never returns there again. The probability distribution of the sets of these points is a
simplest example of the problem we address here. More generally one can consider a collection of embedded sets $\Omega_1\subset\Omega_2\subset \dots$, with boundaries $\mathcal{B}_1, \mathcal{B}_2, \dots$ and look for the joint distribution of successive exits  from these boundaries.

The idea of exit probabilities for  $N$ particles  undergoing the TASEP evolution on 1D lattice
generalizes the single-particle picture. Now we are interested
in how the trajectories of collection of interacting particles exit given sets. The quantity of interest is the joint distribution
of subsets of their boundaries at which exits occur.
Again, great simplification takes place i)~for such boundaries, that once the  trajectories  exited them they never return there again.
On the other hand we would like that for many particles ii) all possible configurations of exit points on the collection of boundaries
would be  assigned a probability measure in the same way as the points of the boundary in single-particle case.
The main tool which allows us to work with exit probabilities is the  Generalized Green Function (GGF).
Unlike purely spatial Green function used by other authors, the GGF allows us to work directly with space-time point configurations $(%
\bm{x,t})=((x_1,t_1),\dots,(x_N,t_N))$ belonging to the set of admissible
configurations defined by constraints
\begin{eqnarray}  \label{x order}
x_1>x_2>\dots>x_N \\
t_1\leq t_2 \leq \dots \leq t_N.  \label{t order}
\end{eqnarray}

For $N$ particles the concept of the boundary  can be generalized to $N$-boundary, which allows us meet (i) as well as  (ii).
\begin{definition}
\label{def2} Given boundary $\mathcal{B}$, the $N$-boundary $\bm{\mathcal{B}}%
_N \subset \{ 1, \dots, N \} \times \mathbbm{Z}$, is defined as a disjoint
union of $N$ copies of $\mathcal{B}$,
\begin{equation}  \label{N-boundary}
\bm{\mathcal{B}}_N=\bigsqcup_{k=1}^N \mathcal{B}_k,
\end{equation}
where the copy $\mathcal{B}_k=\{b_k(i)\}_{i\in \mathbbm{Z}^2}$ associated
with $k$-th particle is shifted by $(k-1)$ steps back with respect to the
first one in horizontal (spacial) direction of space-time plane,
\begin{equation}
b_k(i)=(x(i)-k+1,t(i)),
\end{equation}
$k=1,\dots,N$.
\end{definition}

The $N$-boundary is a generalization of the line with fixed time coordinate
and of the set of lines with fixed space coordinates, which where the
probability spaces used in \cite{BorodinFerrari,BFS} and in \cite{PovPrS}
respectively. Having started from an admissible point configuration, $N$
particle trajectories will reach given $N$-boundary after some evolution,
traverse it and go from some points of the $N$-boundary to continue the
evolution. Then, the non-crossing of the trajectories ensures that the
configuration of the departure points at the $N$-boundary is  admissible
as well.

To specify from which to which point sets the system can pass in course of
the TASEP evolution,  we also need a relation between subsets of $\{ 1,
\dots, N \} \times \mathbbm{Z}^2$.

\begin{definition}
Let $\Omega,\Omega^{\prime }\subset \{ 1, \dots, N \} \times \mathbbm{Z}^2$.
We say that relation
\begin{equation}
\Omega \prec \Omega^{\prime }
\end{equation}
holds, if for any $(x_k,t_k)\in \Omega$ and any $(x^{\prime }_k,t^{\prime
}_k)\in \Omega^{\prime }$
\begin{equation}
(x^{\prime }_k,t^{\prime }_k)\in \{(x,t):t\geq t_k\}\bigcup\{(x,t):x> x_k\}.
\end{equation}
\end{definition}

Note that the subindices denote the variable from the set $\{1,\dots,N\}$
and are associated with the number of a particle.

As it was explained in \cite{PovPrS}, a space-time trajectory of a particle
starting from a point preceding to a given boundary, eventually transverses
the boundary with probability one. The question we address is: What is the
probability for the trajectory to go from a given subset of the boundary?
More generally we address the same question to a  collection of
particles and a set of points at several boundaries.

\begin{figure}[tbp] \centering
\includegraphics[width=0.7\textwidth]{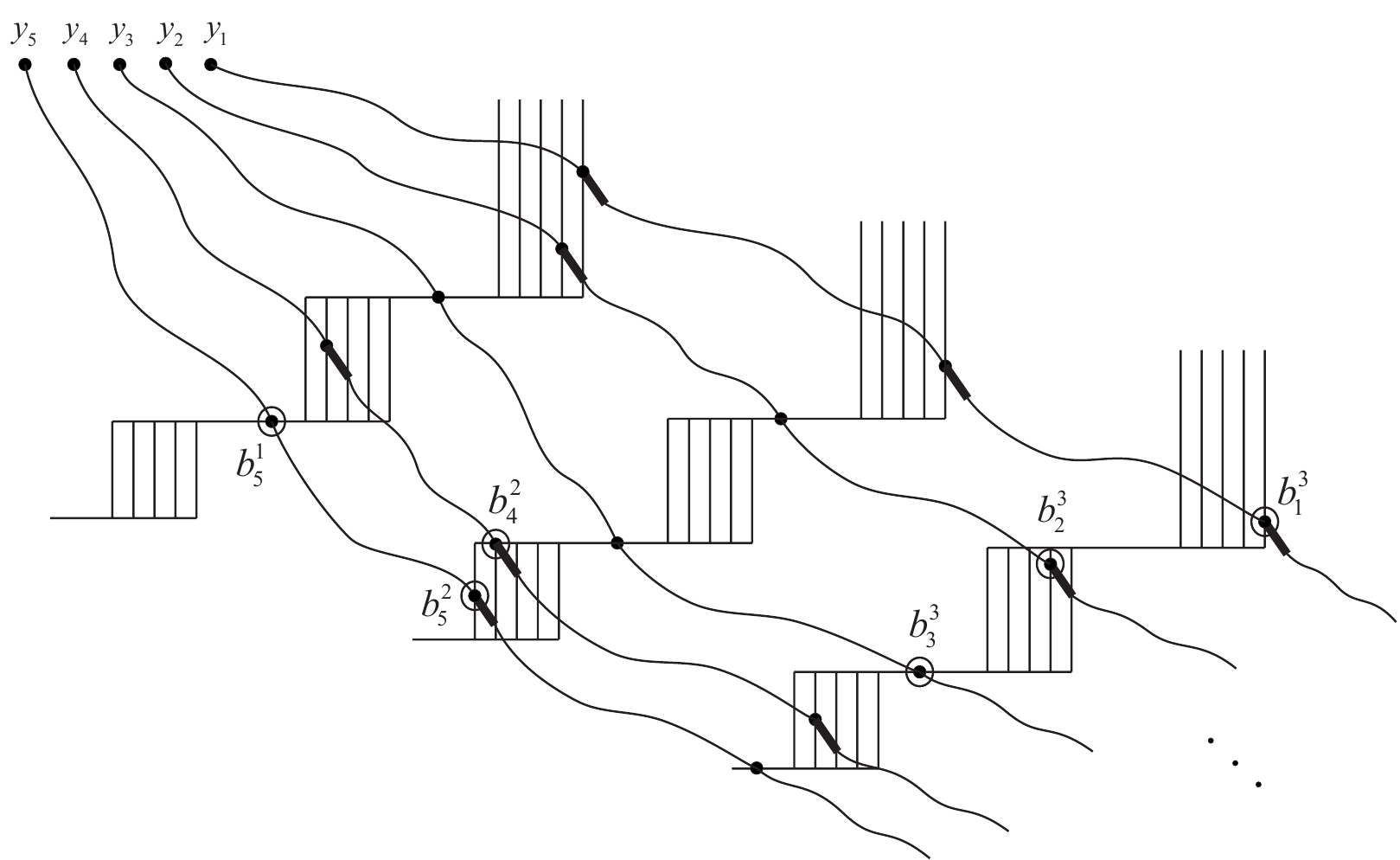}
\caption{Trajectories of five TASEP  particles traversing three $5$-boundaries. Black segments emphasize that  particles make a compulsory
step  forward at the sites belonging to vertical parts of boundaries from  which the exits occur. The exits included into correlation function with $N_1=N_2=5, N_3=4, N_4=3, N_5=2, N_6=1$ and $k_1=1,k_2=k_3=2,k_4=k_5=k_6=3$ are shown in circles.}
\label{fig1}
\end{figure}

To be specific, consider the TASEP evolution of $N$ particles governed by
the dynamical rules I-III. Let the initial configuration $\bm{x}^0$ be
defined by
\begin{equation}
x^0_i=-i+1,\,\,\, i=1,\dots,N.
\end{equation}
Let us fix a collection of $N$-boundaries, $\bm{\mathcal{B}}^1,\dots,%
\bm{\mathcal{B}}^m$, $m>0$, such that
\begin{equation}
\bm{x}^0\prec\bm{\mathcal{B}}^1\prec\dots\prec\bm{\mathcal{B}}^m.
\end{equation}
and fix the one-particle boundaries $\mathcal{B}_{N_1}^{k_1},\dots,\mathcal{B%
}_{N_l}^{k_l}$ within the $N$-boundaries. Here the upper indices $%
1=k_1\leq\dots\leq k_l= m$ refer to the number of $N$-boundary, the lower
indices, $N_l\leq\dots\leq N_1\equiv N $, to the particle number, and $l
\geq m$. We suggest that  at least one particle is fixed at each $N$-boundary,
i.e. either $k_{i+1}=k_i$ or $k_{i+1}=k_i+1$. We also require
that equality $N_i=N_{i+1}$ for some $i$ suggests that $k_{i+1}=k_i+1$, i.e.
two subsequent space-time points chosen for one particle should be put onto
subsequent $N$-boundaries, and no other particles with number less than $N_i$
can be fixed at the $N$-boundary $k_i$. Let space-time positions of points $%
b^k_n(i)$ within the corresponding boundary $\mathcal{B}_{n}^{k}$ be indexed
by index $i\in\mathbbm{Z}$ in the same way as in Defs. $(\ref{def1},\ref%
{def2})$. The quantity of interest is the joint probability distribution $P(i_1<a_1,\dots,i_l<a_l)$
 of the points $(b^{k_1}_{N_1}(i_1),%
\dots,b^{k_l}_{N_l}(i_l))$ from which the space-time trajectories of
particles $N_1,\dots,N_l$ make steps when leaving the boundaries $(\mathcal{B%
}^{k_1}_{N_1},\dots,\mathcal{B}^{k_l}_{N_l})$ respectively.

The first main result of the present paper can be stated as the following theorem.
\begin{theorem}
\label{distrib}  Under the above conditions the joint probability distribution of exit points is given by the Fredholm determinant
\begin{equation}
P(i_1<a_1,\dots,i_l<a_l)=\det(\mathbbm{1}-\eta_a K \eta_a)_{l^2(\{\mathcal{B}%
^{k_1}_{N_1},\dots,\mathcal{B}^{k_l}_{N_l}\})}
\end{equation}
with the kernel
\begin{eqnarray}
&&\!\!\!\!\!\!\!\! K(b_{N_i}^{k_i},b_{N_j}^{k_j})= \\
&=& \oint_{\Gamma_1}\frac{dv}{2\pi \mathrm{i}v }\oint_{\Gamma_{0,v}}\frac{dw%
}{2\pi \mathrm{i}w} \frac{\frac{(1-p(\frac{w-1}{w}))^{t_i}}{(1-p(\frac{v-1}{v%
}))^{t_j}} \frac{(w-1)^{N_i}}{(v-1)^{N_j}} \frac{w^{x_{N_i}}}{{v^{x_{N_j}}}}%
} {(w-v)(1/v+1/\pi_2-1)}  \notag \\
&-&\mathbbm{1}(N_2>N_1)\oint_{\Gamma_{0,1}}\frac{dw}{2\pi\mathrm{i} w^2}
\frac{(1-p(\frac{w-1}{w}))^{t_i-t_j} w^{x_{N_i}-x_{N_j}}} {%
(w-1)^{N_j-N_i}(1/v+1/\pi_2-1)}.  \notag
\end{eqnarray}
where $\eta_a=\mathbbm{1}(i_1\geq a_1)\times \cdots \times \mathbbm{1}(i_1\geq a_m) $,  $b_{N_i}^{k_i}=(x_{N_i},t_{N_i})\in\mathcal{B}^{k_i}_{N_i}$, $%
i,j=1,\dots,l$ and $\pi_2=1,p$ is the probability of step from the boundary
$\mathcal{B}_{N_j}^{k_j}$ at point $b_{N_j}^{k_j}$.
\end{theorem}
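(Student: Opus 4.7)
The plan is to build on the Generalized Green Function (GGF) machinery of \cite{Brankov,PovPrS}, which expresses the TASEP transition ``density'' between space-time point configurations $(\bm{x},\bm{t})$ satisfying (\ref{x order})--(\ref{t order}) as a Karlin--McGregor type determinant of single-particle Green functions. As a first step I would rewrite the joint exit probability as a (signed) sum over all intermediate configurations at the $N$-boundaries $\bm{\mathcal{B}}^1 \prec \cdots \prec \bm{\mathcal{B}}^m$, weighted by a product of GGFs corresponding to the TASEP evolution between successive boundaries, with the fixed endpoints $b^{k_i}_{N_i}(j_i)$ factored out. Here I would use the two Markov properties emphasized in the introduction: trajectories do not affect other trajectories strictly to their right, and a trajectory that has just performed an obligatory forward step on a vertical segment can be cut off. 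Together with the space-like ordering $\prec$, these let me collapse the dependence on all unfixed later coordinates into a free-sum that is absorbed into the GGF structure.

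The second step is to exploit that each factor in the product is itself a determinant, and to apply the Eynard--Mehta/Lindstr\"om--Gessel--Viennot framework: a product of determinantal transition kernels convolved against a fixed initial configuration and integrated over the fixed family of final points collapses into a single Fredholm determinant $\det(\mathbbm{1}-\eta_a K\eta_a)$ on the disjoint union $l^2(\{\mathcal{B}^{k_1}_{N_1},\dots,\mathcal{B}^{k_l}_{N_l}\})$. The correlation kernel has the standard biorthogonal form $K(b,b')=\sum_{k,l}\Psi_k(b)[M^{-1}]_{kl}\Phi_l(b')$, where the $\Psi_k$ come from forward propagation and the $\Phi_l$ from backward propagation between boundaries. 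For the step initial data $x_i^0=1-i$ the relevant generating function is $(1-p(w-1)/w)^t\,w^{x}$, so both families can be written as contour integrals encircling $w=0$ and $w=1$ respectively, with pole orders dictated by the particle indices $N_i$.

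The third step is the biorthogonalization. I would choose $\Psi$'s and $\Phi$'s so that the Gram matrix $M$ becomes triangular and can be inverted by contour manipulation; the factor $1/(1/v+1/\pi_2-1)$ in the stated kernel is exactly what arises from summing the geometric series corresponding to the final step probability $\pi_2\in\{1,p\}$ attached to the exit site (value $1$ on horizontal segments, where no forward step is required, and $p$ on vertical segments, where the exit requires a forward step by definition \eqref{boundarypoints_v}). After the inversion one is left with the double contour integral in Theorem \ref{distrib}, with $\Gamma_1$ enclosing only $v=1$ and $\Gamma_{0,v}$ enclosing both $w=0$ and the $v$-contour, which guarantees $|w|>|v|$ so that the sum underlying $(w-v)^{-1}=\sum w^{-k-1}v^k$ converges.

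The subtraction term carrying $\mathbbm{1}(N_j>N_i)$ is the standard ``non-Cauchy'' piece picked up when the $w$-contour is deformed through the pole at $w=v$; it is needed exactly when the particle indices of the two arguments are ordered opposite to the ordering that was assumed to carry out the biorthogonalization. The main obstacle I anticipate is verifying the biorthogonality relations uniformly across boundaries of mixed horizontal/vertical type: each boundary segment contributes a different step weight $\pi_2$ and a slightly different single-step kernel, and one must check that these fit together consistently so that the product-of-determinants ``telescopes'' into a single kernel. A second technical subtlety is justifying that the signed GGF — which is not a probability distribution on intermediate configurations — produces the correct nonnegative exit probability after the free sum over unfixed coordinates; this will rely on the non-crossing of trajectories together with the space-like ordering, which jointly ensure that the cancellations in the signed measure mimic the usual determinantal non-intersection.
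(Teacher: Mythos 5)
Your plan is correct in outline and follows essentially the same route as the paper: express the joint exit probability via the Generalized Green Function and the two Markov ``cutting'' properties as a signed sum over intermediate configurations on the successive $N$-boundaries, telescope the product of determinants into a single Eynard--Mehta-type measure, and biorthogonalize with a triangular Gram matrix $M$, the factor $1/(1/v+1/\pi_2-1)$ indeed arising from the geometric sums along horizontal/vertical boundary segments for step initial data. The only cosmetic difference is your reading of the $\mathbbm{1}(N_j>N_i)$ term as a residue picked up at $w=v$, whereas the paper obtains it directly as the memory term $-\phi^{(n_1,a_1),(n_2,a_2)}$ of the Eynard--Mehta kernel (computed via the $\hat{F}_k$ functions); the two descriptions are algebraically equivalent.
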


\subsection{Scaling limit of correlation functions \label{subsec: scaling limit}}
In the large scale the boundaries can be treated as approximations of continuous differentiable paths in the space-time plane.
Consider a scaling limit associated with sending to infinity a large parameter $L\to \infty$, as the time-space coordinates and particle numbers  measured at $L$-scale are fixed: $x/L$, $n/L$, $t/L=\mathrm{const.}$ Let us introduce  variable change $(x,t)\to(\chi,\theta)$:
\begin{eqnarray}
    \tau\equiv t-x&=&L \chi\label{tau}\\
    t+x&=&L\zeta(\chi, \theta)
\end{eqnarray}
As it was noted earlier the variable \eqref{tau} naturally enumerates points at the boundary. Correspondingly, the function $\zeta(\chi,\theta)$ defines a one-parameter family of curves spanning the whole space-time plane as $\theta$ varies in $\mathbbm{R}$. As the parameter $\chi$ runs in $\mathbbm{R}$, it defines a point at a particular curve corresponding to some fixed value of $\theta$.  The properties of $\zeta(\chi,\theta)$  follow from the properties of boundaries. Specifically, we suggest that
\begin{equation}\label{mod zeta}
    \left|\frac{\partial \zeta(\chi,\theta)}{\partial \chi}\right|\leq 1
\end{equation}
and
\begin{equation}\label{zeta-tau-theta}
  \left(\frac{\partial }{\partial \theta}-\frac{\partial }{\partial \chi}\right)\zeta(\chi,\theta) \geq 1.
\end{equation}
We now suppose that  for $k=1,\dots, m$  the boundaries $\mathcal{B}_1^k$ approximate the curves corresponding to fixed set $(\theta_1,\dots,\theta_m)$:
\begin{equation}\label{b_1^k}
    b_1^k([L\chi])=L\cdot\left( \frac{\zeta(\chi,\theta_k)-\chi}{2},\frac{\zeta(\chi,\theta_k)+\chi}{2}\right)+o(L^\sigma),
\end{equation}
where the notation $[\,\,]$ is for integer part of a real number and the correction term should not contribute on a characteristic fluctuation scale, i.e. $\sigma=1/3$. For technical purposes we will suggest that the correction term is uniform over the boundary. These boundaries correspond to the first particle. For general particle with  number   $n=[L\nu]$ we have to consider the boundary $\mathcal{B}^k_n$ shifting the spacial coordinate by $n-1$ steps backward:
\begin{equation}\label{b_n^k}
    b_n^k([L\chi])=L\cdot\left( \frac{\zeta(\chi,\theta_k)-\chi}{2}-\nu,\frac{\zeta(\chi,\theta_k)+\chi}{2}\right)+o(L^\sigma).
\end{equation}
Recall that on the large scale, $x\sim n\sim t\sim L\to \infty$, the trajectories of particles are deterministic, defined by the relation \eqref{det lim rel}. In terms of new variables the relation turns into
\begin{equation}\label{hydro zeta}
\sqrt{p(\zeta(\chi,\theta)+\chi)}-\sqrt{(\zeta(\chi,\theta)-\chi)}-\sqrt{2 q\nu}=0,
\end{equation}
which uniquely fixes  value of $\chi$ given those of $\theta$ and $\nu$, provided that the corresponding curve passes through the rarefaction fan defined by
\begin{equation}\label{rare-fan}
\chi\leq\zeta(\chi,\theta)\leq\frac{1+p}{1-p}\chi.
\end{equation}

Let us consider a path in $\theta\!-\!\nu$ plane:
\begin{equation}\label{theta-nu}
\begin{aligned}
\theta=\theta(r)\\
\nu=\nu(r)
\end{aligned}
, \qquad r\in\mathbbm{R},
\end{equation}
with differentiable functions $\theta(r)$ and $\nu(r)$, such that
\begin{equation}
    \frac{\partial \theta}{\partial r}\geq 0, \qquad \frac{\partial \nu}{\partial r}\leq 0 \label{theta nu constraint 1}
\end{equation}
and
\begin{equation}
    \frac{\partial \theta}{\partial r}-  \frac{\partial \nu}{\partial r}\geq 1.\label{theta nu constraint 2}
\end{equation}
We select $m$ points at the path, $r=r_1,\dots,r_m$,  so that the integers $N_1,\dots,N_m$ from Theorem~1.1 are given by
$N_i=[L\nu(r_i)]$, and $\theta_i=\theta(r_i)$. The inequalities \eqref{theta nu constraint 1} and \eqref{theta nu constraint 2} then guarantee that the constraints on   $k_1,\dots,k_m$ and $N_1,\dots,N_m$ from  Theorem~1.1 are satisfied and together with non-crossing of particle trajectories ensure that points of this path accessible for particle trajectories with nonzero probability form space-like configurations.

Substituting functions $\theta(r)$ and $\nu(r)$ into \eqref{hydro zeta} we obtain an equation, which, given $r$, can be  resolved with respect to $\chi$. For a  given path a unique solution exists for any $r$ within the range, in which the boundary corresponding to $\theta(r)$ passes trough the rarefaction fan \eqref{rare-fan}.  This solution is a monotonous function of $r$, which we  denote
 $\chi(r)$.  It defines the macroscopic deterministic location of the point, from where given particle exits given boundary, see Fig. \ref{Graphic2}.
 \begin{figure}[tbp]\centering
\includegraphics[width=0.7\textwidth]{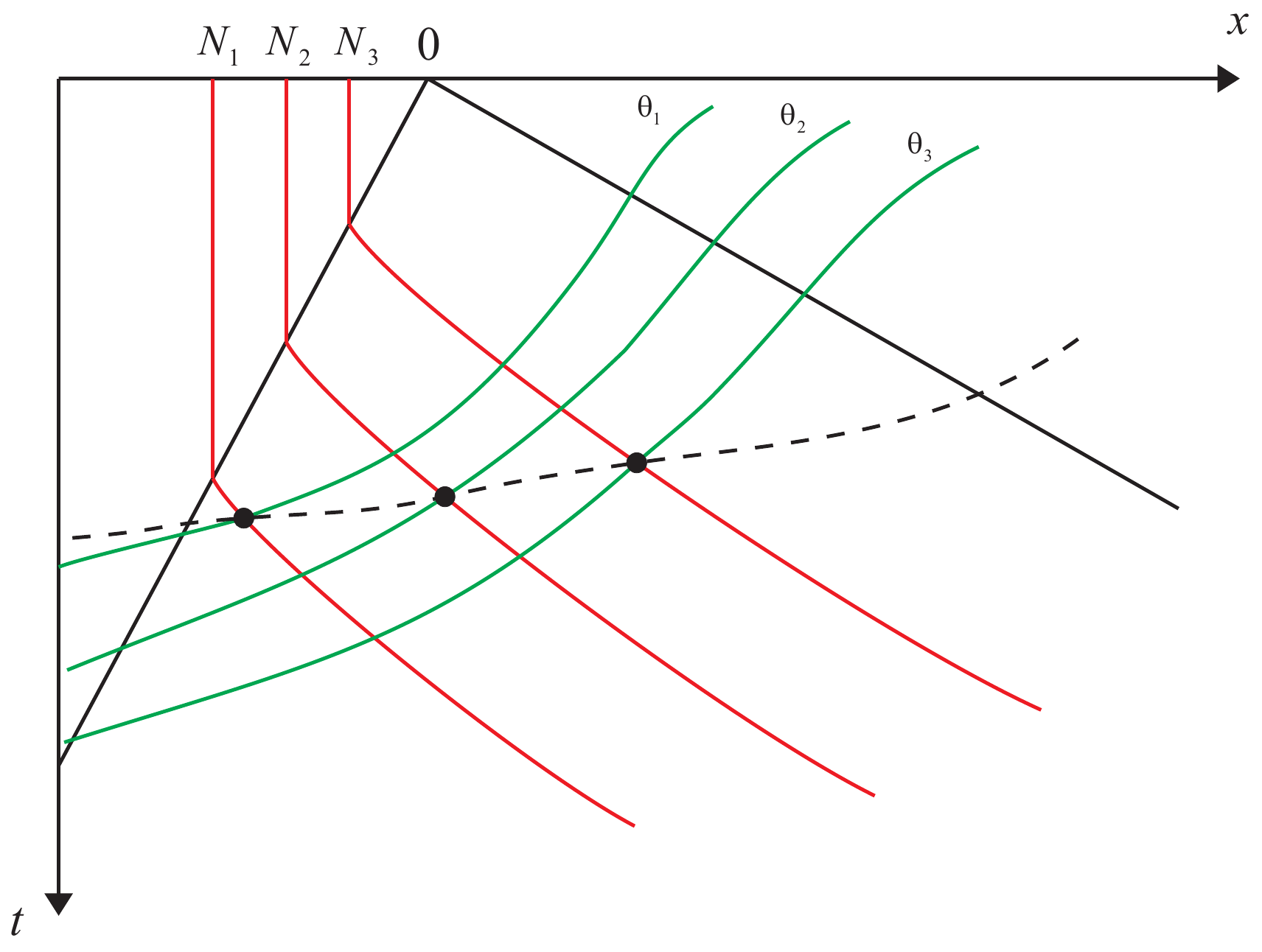}
\caption{Exit probabilities on a space-like path in $x$-$t$ plane. The wedge bounded by black straight lines is the rarefaction fan area.  The deterministic trajectories of particles with numbers $N_1,N_2,N_3$ are shown in red. The green lines are boundaries with coordinates $x=L((\zeta(\theta_i,\chi)-\chi)/2-\nu_i)$ and $t=L(\zeta(\theta_i,\chi)+\chi)/2$, where $i=1,2,3$, corresponding to three fixed values of $\theta$: $\theta_1<\theta_2<\theta_3$. Dashed line is the projection of the path $(\nu(r),\theta(r))$ to $x$-$t$ plane: $x=L((\zeta(r)-\chi(r))/2-\nu(r))$, $t=L(\zeta(r)+\chi(r))/2$. The black dots are the points where exits occur.}
\label{Graphic2}
\end{figure}
We are now turn to the fluctuations of these points referred to the boundaries and particle numbers separated by the distances of order of correlation length from each other. Suppose that
\begin{equation}
    r_i=r_0+u_iL^{-1/3}.
\end{equation}
The corresponding values of $\chi$ are given by their deterministic parts $\chi(r_i)$  plus a random variable of order of fluctuation
scale
\begin{equation}
   \chi_i=\chi(r_i)+\xi_L(u_i) L^{-2/3}.
\end{equation}
In what follows we show that the random variable $\xi_L(u)$ converges to the universal $\mathrm{Airy}_2$ process for a class boundaries, which can be approximated by \eqref{b_1^k}-\eqref{b_n^k}.
\begin{theorem}\label{Airy_2}
The following limit holds in a sense of finite-di\-men\-sion\-al distributions:
\begin{equation}
\lim_{L \to \infty} \xi_L=\kappa_f \mathcal{A}_2(\kappa_c u),
\end{equation}
where $\mathcal{A}_2$ is the $\mathrm{Airy}_2$ process characterized by multipoint distributions:
\begin{eqnarray}
\mathrm{Prob}(\mathcal{A}_2(u_1)<s_1,\dots,\mathcal{A}_2(u_m)<s_m)  \notag \\
=\det\left(\mathbbm{1}-\eta_{ s} K_{\mathrm{Airy_2}} \eta_{s})\right)_{L^2(\{n_1,\dots,n_m\}\times \mathbbm{R})}.
\end{eqnarray}
where in the r.h.s. we have the extended Airy kernel,
\begin{eqnarray}\label{airy kernel}
&K_{\mathrm{Airy}_2} &(\xi_1,\zeta_1; \xi_2,\zeta_2) \\
&&=\left\{
\begin{array}{ll}
\int_0^\infty d\lambda e^{\lambda(\xi_2-\xi_1)}\mathrm{Ai}(\lambda+\zeta_1)%
\mathrm{Ai} (\lambda+\zeta_2), & \xi_2\leq\xi_1 \\
-\int_{-\infty}^0 d\lambda e^{\lambda(\xi_2-\xi_1)}\mathrm{Ai}%
(\lambda+\zeta_1)\mathrm{Ai} (\lambda+\zeta_2), & \xi_2 > \xi_1%
\end{array}%
\right. ,  \notag
\end{eqnarray}
The model dependent constants  $\kappa_c$ and $\kappa_f$ defining the correlation and fluctuation scales respectively are given by
\begin{eqnarray} \label{kappa_h}
\kappa_c&=&\frac{p^{1/6}\left(\sqrt{\omega }-\sqrt{p \gamma } \right)^{-1/3} \left(\sqrt{p \omega } -\sqrt{\gamma }\right)^{-1/3} }{2 \gamma^{1/6} \omega^{1/6}    \left(\sqrt{p \gamma }(1+\zeta ^{(0,1)}(r_0)) +\sqrt{\omega }(1-\zeta ^{(0,1)}(r_0)) \right)} \\
&\times&\left[q \nu '\left(r_0\right) \left(\zeta \left(r_0\right)-\chi \left(r_0\right) \zeta ^{(0,1)}\left(r_0\right)\right) \left(\sqrt{p \omega } -\sqrt{\gamma }\right)^{-1}
\right.\notag \\
&&\hspace{3cm}-\left.\theta '\left(r_0\right)  \zeta ^{(1,0)}\left(r_0\right) \left(\sqrt{p \omega }-\sqrt{\gamma } \right)
\right]\notag \\ \notag \\
\label{kappa_t} \kappa_f&=&\frac{(\sqrt{p\omega}-\sqrt{\gamma})^{1/3}
\left(\sqrt{p \gamma }(1+\zeta ^{(0,1)}(r_0)) +\sqrt{\omega }(1-\zeta ^{(0,1)}(r_0)) \right)
}{2p^{1/6}\omega^{1/3}\gamma^{1/3}\left( \sqrt{\omega }-\sqrt{p \gamma } \right)^{2/3}},
\end{eqnarray}
where we denote $\zeta(r)\equiv \zeta(\theta(r),\chi(r))$,   $\zeta ^{(1,0)}(r_0)$ ($\zeta ^{(0,1)}(r_0)$) is the derivative of the function
$\zeta(\theta,\chi)$ with respect to the first (second) argument  at the point $(\theta(r_0),\chi(r_0))$ and parameters $\gamma$ and $\omega$ are those defined in \eqref{scal lim vars},   $\gamma=(\zeta(r_0)-\chi(r_0))/2$ and $\omega=(\zeta(r_0)+\chi(r_0))/2$.
\end{theorem}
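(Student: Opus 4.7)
The plan is to perform a steepest-descent asymptotic analysis of the kernel produced by Theorem~\ref{distrib} and show that, after an appropriate rescaling of the space-time arguments, it converges to the extended Airy kernel \eqref{airy kernel} on every compact set, uniformly enough to ensure convergence of the Fredholm determinant.

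The first step is to translate the scaling ansatz \eqref{b_1^k}--\eqref{b_n^k} together with $r_i=r_0+u_i L^{-1/3}$ and the fluctuation variable $\chi_i=\chi(r_i)+\xi_L(u_i)L^{-2/3}$ into the discrete parameters $(N_{i},x_{N_i},t_{N_i})$ entering the kernel. Substituting into the integral representation of $K$ in Theorem~\ref{distrib}, I would pull out prefactors and write the integrand in the form $\exp\bigl(L\,G_0(w)-L\,G_0(v)+L^{2/3}G_{1/3}(w,v;u)+L^{1/3}G_{2/3}(w,v;u,\xi)+\cdots\bigr)$ with an additional rational factor $[(w-v)(1/v+1/\pi_2-1)]^{-1}$. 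The relation \eqref{hydro zeta} arises precisely as the saddle-point equation $G_0'(w)=0$: it should have a real double root $w_c=v_c$ at the macroscopic location $\chi(r_0)$, with $G_0''(w_c)=0$ and $G_0'''(w_c)\neq 0$. This coalescence of saddles is what generates the Airy scaling.

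Next I would deform $\Gamma_1$ and $\Gamma_{0,v}$ to the steepest-descent contours emanating from $w_c$ at angles $\pm \pi/3,\pm 2\pi/3$ (avoiding the poles at $0$, $1$ and $v$, using small arcs around them so that the residue structure is preserved), and change variables $w=w_c+W L^{-1/3}/\kappa_w$, $v=w_c+V L^{-1/3}/\kappa_w$ with $\kappa_w$ chosen so that the cubic term in the Taylor expansion becomes $-W^3/3$. Under this rescaling, the $L\cdot G_0$ piece becomes $-W^3/3+V^3/3$, the $L^{2/3}$-piece becomes a linear combination $\xi_i V-\xi_j W$ matching the $\lambda(\xi_2-\xi_1)$ factor in \eqref{airy kernel} after one uses the standard integral representations of the Airy function, and the $L^{1/3}$-piece contributes the arguments $\lambda+\zeta_1$, $\lambda+\zeta_2$. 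Matching these expansions against the desired Airy kernel fixes the non-universal constants: $\kappa_f$ arises as the Jacobian converting fluctuations of $W-V$ into the physical fluctuation variable $\xi_L(u)L^{-2/3}$, while $\kappa_c$ arises from the coefficient multiplying $u$ inside $\zeta_i$. A straightforward but delicate computation using \eqref{hydro zeta}, the implicit dependence $\chi(r)$ and the derivatives $\zeta^{(1,0)}$, $\zeta^{(0,1)}$ then reproduces the explicit formulae \eqref{kappa_h}--\eqref{kappa_t}.

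The subtraction term $\mathbbm{1}(N_2>N_1)\oint \cdots$ in the kernel requires separate treatment: I would show that on the steepest-descent contour its contribution at the leading order exactly cancels the non-Airy part coming from the poles traversed when deforming $\Gamma_{0,v}$ across $\Gamma_1$, reproducing the case distinction $\xi_2\leq \xi_1$ vs.\ $\xi_2>\xi_1$ in \eqref{airy kernel}. Once pointwise convergence of the rescaled kernel to $K_{\mathrm{Airy}_2}$ on compacts is established, I would obtain uniform Gaussian/cubic exponential decay bounds along the deformed contours (using the convexity of the real part of $G_0$ away from $w_c$, together with the constraints \eqref{mod zeta}--\eqref{zeta-tau-theta} and \eqref{theta nu constraint 1}--\eqref{theta nu constraint 2} which guarantee that the relevant saddle stays in the rarefaction fan \eqref{rare-fan} and that contours can be chosen inside the domain of analyticity). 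These bounds allow application of the Hadamard inequality and dominated convergence to pass to the limit inside the Fredholm expansion, giving convergence of the determinant.

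The main obstacle, as usual in coalescing-saddle asymptotics with multiple contours, is the simultaneous geometric control of the deformed contours: one must choose $\Gamma_1$ and $\Gamma_{0,v}$ so that they pass through $w_c$ along the correct steepest-descent directions, avoid the poles at $v$, $0$, $1$ and the pole from $1/v+1/\pi_2-1$, and remain on the correct side of one another so that the correction term with $\mathbbm{1}(N_2>N_1)$ produces the right residue. Verifying that the constraints \eqref{theta nu constraint 1}--\eqref{theta nu constraint 2} on the admissible paths are precisely what is needed for this geometry to be realizable, uniformly in $u_1,\ldots,u_m$ bounded, is the technical heart of the argument; once this is done, the rest of the asymptotic analysis proceeds by standard arguments parallel to those in \cite{Johansson,BFPS,PovPrS}.
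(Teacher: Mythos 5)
Your overall strategy (steepest descent at the double critical point determined by \eqref{hydro zeta}, Airy rescaling $w=w_c+O(L^{-1/3})$, separate treatment of the $\mathbbm{1}(N_2>N_1)$ term, then Hadamard bounds and dominated convergence in the Fredholm expansion) is the same skeleton as the paper's, which implements it by writing the double-integral part as $\sum_k \Psi_{n_1-k}\Phi_{n_2-k}$, doing single-contour Airy asymptotics on each factor, and recombining the single-integral term via Johansson's Gaussian--Airy identity. But there is a genuine gap in your argument: the rescaled kernel does \emph{not} converge pointwise to $\kappa_f K_{\mathrm{Airy}_2}$. The kernel of Theorem \ref{distrib} contains the factor $(1/v+1/\pi_2-1)^{-1}$, where $\pi_2\in\{p,1\}$ is the exit probability at the second argument, i.e.\ it depends on whether the site $b(\tau_2)$ lies on a vertical or a horizontal segment of the boundary. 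This microscopic dependence survives the saddle-point limit: the paper obtains $L^{1/3}K \sim \kappa_f\,\Upsilon_2(\pi_2)\,K_{\mathrm{Airy}_2}$ with the two-valued prefactor $\Upsilon_2(\pi_2)$ of \eqref{upsilon}, which oscillates on the lattice scale along any general staircase boundary and is not removable by conjugation (conjugation cannot depend on which of the two values $\pi_2$ takes at a given site). Consequently your step ``pointwise convergence on compacts $\Rightarrow$ dominated convergence of the Fredholm sum'' does not close the proof as stated.

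What is missing is the averaging argument that converts the sums over $\tau$ in \eqref{fredholmsum} into integrals: because the boundaries \eqref{b_1^k}--\eqref{b_n^k} are locally straight up to $o(L^{1/3})$ corrections, within a mesoscopic bin the fractions of vertical and horizontal segments are fixed by the local slope $\partial\zeta/\partial\chi$, and one must verify the identity
\begin{equation}
\frac{\partial \zeta/\partial \chi+1}{2}\,\Upsilon_{\mathcal{B}}(p)\;-\;\frac{\partial \zeta/\partial \chi-1}{2}\,\Upsilon_{\mathcal{B}}(1)\;=\;1,
\end{equation}
so that the site-dependent prefactor averages out to exactly unity and the limit of each Fredholm term is the corresponding multiple integral of $K_{\mathrm{Airy}_2}$ alone. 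This cancellation between the local boundary slope and the exit-probability-dependent residue factor is the step that makes the limit universal for arbitrary space-like boundaries (it is trivial only in the pure fixed-time or fixed-position cases, where $\pi_2$ is constant along the boundary); without it, the constants $\kappa_c,\kappa_f$ would not suffice to state the theorem for general $\zeta(\chi,\theta)$. The rest of your plan (identification of $\kappa_c$, $\kappa_f$ from the Taylor coefficients at $w_c$, handling of the $\mathbbm{1}(N_2>N_1)$ term as the residue picked up between the two contours, and the standard tail bounds) is sound and parallel to the paper, but you need to add this averaging lemma to bridge from the sitewise limit with $\Upsilon_2(\pi_2)$ to the Fredholm determinant of the extended Airy kernel.
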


The non-universal constants $\kappa_f$ and $\kappa_c$ are the most general ones for the TASEP with backward update. They depend not only
on the macroscopic space-time location defined by $\zeta(r_0)$ and $\chi(r_0)$, but also on the local slope and local density  of the boundaries at this point via the derivatives $\zeta ^{(0,1)}(r_0)$  and $\zeta ^{(1,0)}(r_0)$) respectively. Particular cases studied before can easily be restored from the expressions obtained. For example, for purely spacial boundary used for measuring particle coordinates at fixed time we can take $\zeta(\theta,\chi)=2t-\chi$, while the case of current correlation functions \cite{PovPrS} corresponds to $\zeta(\theta,\chi)=2x+\chi$.
For the space-like correlation functions of particle coordinates studied in \cite{BorodinFerrari,BFS} we take $\zeta(\theta,\chi)=2\theta-\chi$, and the tagged particle case \cite{ImamSasamoto} corresponds to $\nu'(r)=0$.
\section{Determinantal point processes on the boundaries \label{Determinantal point processes on the boundaries}}

\subsection{Single $N$-boundary.}

We first introduce the Generalized Green Function (GGF) using the
determinantal formula proposed in \cite{Brankov} and proved in \cite{PovPrS}, which generalizes
the formulae of simple Green function obtained in \cite{Sch1} for continuous time TASEP and generalized to the backward sequential update in
\cite{priezzhev_traject}

 Given two admissible
configurations
\begin{equation*}
\bm{b}^0\equiv (\bm{x}^0,\bm{t}^0)=((x_1^0,t_1^0),\dots,(x_N^0,t_N^0))
\end{equation*}
and
\begin{equation*}
\bm{b}\equiv (\bm{x},\bm{t})=((x_1,t_1),\dots,(x_N,t_N))
\end{equation*}
we define
\begin{equation}
G(\bm{b}|\bm{b}^0)=\det \left[F_{j-i}(b_i-b_j^0) \right]_{1\leq i,j\leq N}.
\label{GGFformula}
\end{equation}
where $(b_i-b_j^0)=(x_i-x_j^0,t_i-t_j^0)$ is componentwise extraction and
\begin{equation}  \label{FDisc}
F_n(x,t)=\left\{{\
\begin{array}{ll}
\frac{1}{2\pi \mathrm{i}}\oint_{\Gamma_0}\frac{dw}{w} \left(q+\frac{p}{w}%
\right)^t(1-w)^{-n}w^x, & t\geq0 \\
0, & t<0%
\end{array}%
} \right..
\end{equation}
For point $b(i)=(x,t)\in\mathcal{B}$ at the boundary, we introduce an exit
probability
\begin{eqnarray}
\pi^\mathcal{B}(b(i))= \left\{
\begin{array}{ll}
p, & \mathrm{if}\,\, b(i+1)=(x,t+1) \\
1, & \mathrm{if}\,\, b(i+1)=(x+1,t)%
\end{array}%
\right.,
\end{eqnarray}
and for $N$-point configuration $\bm{b}\in \bm{\mathcal{B}}$
\begin{equation}
\pi^{\bm{\mathcal{B}}}(\bm{b})=\prod_{k=1}^N \pi^{\mathcal{B}_k}(b_k),
\end{equation}
where the subscript $k$ specifies a boundary within the $N$-bounday, or the
associated particle. The function
\begin{equation}
\mathcal{G}(\bm{b}|\bm{b}^0)\equiv \pi^{\bm{\mathcal B}}(\bm{b}) G(\bm{b}|%
\bm{b}^0)
\end{equation}
gives probability for the space-time trajectories of particles to go away
from the boundary via the points of $\bm{b}$, given they started from $%
\bm{b}^0$.

We now show that this probability can be reinterpreted in terms of an
auxiliary signed determinantal point process on $\bm{\mathcal{B}}$. Consider
a signed measure on $\mathbb{Z}_{\geq \tau_0}\times\{1,\dots,N\}$,
\begin{eqnarray}  \label{det_measure}
\mathcal{M}\left(\mathcal{T}\right)=\frac{1}{Z_N} \prod_{n=0}^{N-1} \det[%
\phi_{n}(\tau^{n}_{i},\tau^{n+1}_{j})]_{i,j=1}^{ n+1} \det\left[%
\Psi_{N-i}^N(\tau_{j}^N)\right]_{i,j=1}^{N},
\end{eqnarray}
assigned to the sets of the form
\begin{equation}
\mathcal{T}=\bigsqcup_{1\leq n \leq N }
\{\tau^n_n,<\tau^n_{n-1},<\dots,<\tau^n_{1}\} \subset\mathbb{Z}_{\geq \tau_0}%
\times\{1,\dots,N\}.  \label{T}
\end{equation}
Here we define the function
\begin{equation}
\phi_n(z,y)=\left\{{\
\begin{array}{ll}
\pi^{\mathcal{B}_{n+1}}(b_{n+1}(y)), & y \geq z \\
0, & y < z%
\end{array}%
} \right.  \label{phi}
\end{equation}
for $x,y\in \mathbbm{Z}_{\geq \tau_0}$ and $b_{k}(y)\in \mathcal{B}_{k}$, $k=1,\dots,N$
and the function
\begin{equation}  \label{Psi_N}
\Psi^N_k(t)=(-1)^k \widetilde{F}_{-k}(b_{N}(t)-b^0_{N-k}),
\end{equation}
where
\begin{equation}  \label{tilde{F}_n}
\widetilde{F}_n(x,t)= \frac{1}{2\pi \mathrm{i}}\oint_{\Gamma_0}\frac{dw}{w}
\left(q+\frac{p}{w}\right)^t(1-w)^{-n}w^x.
\end{equation}
The integral representation holds for $t\in\mathbbm{Z}$. This is unlike $%
F_n(x,t)$ which coincides with $\widetilde{F}_n(x,t)$ when $t\geq 0$ and
vanishes at $t<0$, see (\ref{FDisc}). The numbers $\tau_i^j$ are integers bounded  by
 number $\tau_0$ from below. The number  $\tau_0$ is chosen so that $\Psi_k^N(\tau_0)=0$, which is always possible by construction of the boundaries.

We also introduce fictitious variables $\tau_{n}^{n-1}, 1\leq n\leq N$,
which are fixed to   $\tau_{n}^{n-1}=\tau_0$. Thus for any $\tau_i^j \in \mathbbm{Z}_{\geq \tau_0}$  we have
\begin{equation}
\phi_n(\tau_{n+1}^{n},\tau^{n+1}_j)\equiv \pi^{\mathcal{B}%
_{n+1}}(b_{n+1}(\tau^{n+1}_j))
\end{equation}
for $j=1,..,n+1$. The numbers $\tau_i^n$, $i=1,\dots,n$ are mapped to the
sites $b_n(\tau_i^n)$ on $\mathcal{B}_n$. Therefore the measure on the $N$%
-boundary $\bm{\mathcal{B}}$ is naturally defined as pushforward of $%
\mathcal{M}(\mathcal{T})$ under this mapping.

One can consider $b_n(\tau^n_j)$, $1\leq j\leq n\leq N$, as coordinates of
auxiliary fictitious particles indexed by $j$ leaving at the boundaries $%
\mathcal{B}_n$. These particles evolve as shown in Fig.\ref{fig3}.
 \begin{figure}[tbp]\centering
\includegraphics[width=0.7\textwidth]{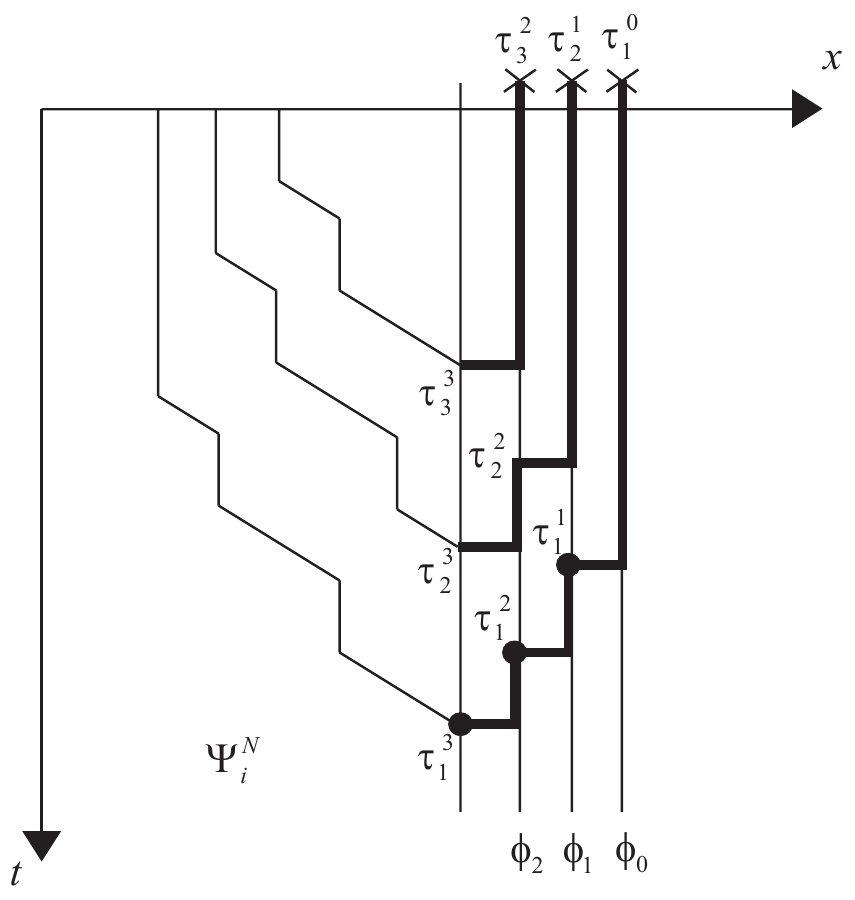}
\caption{Evolution of fictitious particles for $N=3$. The first stage from initial points to the boundary $\mathcal{B}_3$ is described by the functions $\Psi^N_i(\tau^N_k)$. The following stages from $\mathcal{B}_3$ to $\mathcal{B}_2$, from $\mathcal{B}_2$ to $\mathcal{B}_1$ and from $\mathcal{B}_1$ to $\mathcal{B}_0$ are encoded in $\phi_2,\phi_1$ and $\phi_0$ respectively. The coordinates of real TASEP particles are shown by black dots.}
\label{fig3}
\end{figure}
First, $N$
particles arrive from their initial state encoded in the functions
$\Psi^N_k(t)$ at the points of boundary $\mathcal{B}_N$ with numbers $%
\tau_1^N,\dots,\tau_N^N > \tau_0$. Then, they jump to the sites of the boundary $%
\mathcal{B}_{N-1}$ with the same numbers, and go up along the boundary $%
\mathcal{B}_{N-1}$ (from south-west to north-east, so that the number $\tau$
indexing position at the boundary decreases) any distance respecting mutual noncrossing
of particle trajectories. The weight of the jump between the boundaries,
outgoing from a site $b$, is $\pi^{\mathcal{B}_N}(b)$ and the weight of
going along the boundary is $1$ independently of the distance. The last ($N$-th)
particle  is forced to go to the resevoir $(\tau_N^{N-1}=\tau_0)$ and
disappear. The final positions of the other particles at the boundary $%
\mathcal{B}_{N-1}$ are denoted $\tau_1^{N-1},\dots,\tau_{N-1}^{N-1}$, from
which they jump to the boundary $\mathcal{B}_{N-2}$ with the weights $\pi^{%
\mathcal{B}_{N-1}}(\tau_1^{N-1}),\dots,\pi^{\mathcal{B}_{N-1}}(%
\tau_{N-1}^{N-1})$, e.t.c.. The process is repeated until the particle
number $1$ jumps from the point $b_1(\tau_1^1)$ of $\mathcal{B}_{1}$ and
disappears. This picture generalizes the auxiliary processes described for
the cases of constant time \cite{Sasamoto,BFPS,BorodinFerrari,BFS} and fixed
spacial coordinates \cite{PovPrS} to the case of general boundaries.

The fictitious particles are similar to vicious walkers (or free fermions), which
can be seen from the Karlin-McGregor-Lindstr\"om-Gessel-Viennot \cite{KM,L,GV}
determinantal form of the transition weights entering the product %
\eqref{det_measure} that ensure nonintersecting of their space-time
trajectories. The last determinant can be treated as integrated with given
initial distribution. Such a free fermionic structure allows calculation of
the correlation functions for fictitious particles, which turns out to be
determinantal. On the other hand, below we show that the joint distribution
of $N$ positions of the first fictitious particle obtained by integration of
the measure \eqref{det_measure} over the positions of the other particles
coincides with the Green function of TASEP. Thus the problem of correlations
in TASEP can be reduced to calculation of correlations between
noninteracting mutually avoiding fictitious particles.

To show that the GGF can be interpreted in terms of the measure $\mathcal{M}%
\left(\mathcal{T}\right)$ we prove the following proposition.

\begin{proposition}
\label{GGF-measure} Given $N$-boundary $\bm{\mathcal{B}}$, initial and final
configurations $\bm{b}^0 \prec \bm{\mathcal{B}}$ and $\bm{b} \subset %
\bm{\mathcal{B}}$  respectively, GGF $\mathcal{G}^{\bm{\mathcal{B}}}(\bm{b}|%
\bm{b}^0)$ associated with the boundary ${\bm{\mathcal{B}}}$ is a marginal
of the measure $\mathcal{M}$:
\begin{equation}  \label{marginal}
\mathcal{G}^{\bm{\mathcal{B}}}(\bm{b}|\bm{b}^0)=\mathcal{M}%
\left(\bigcup_{k=1}^N\{\tau_1^k=i_k\}\right),
\end{equation}
where $i_1,\dots,i_N$ determine the location of the points of $\bm{b}%
=(b_1(i_1),\dots,b_N(i_N))$ at corresponding boundaries within the $N$%
-boundary.
\end{proposition}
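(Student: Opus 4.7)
The plan is to compute the marginal on the right-hand side of \eqref{marginal} by summing the measure $\mathcal{M}$ over all fictitious-particle positions $\{\tau_j^n : 1\le j\le n,\ 2\le n\le N\}$ with $j\ge 2$, leaving only the ``rightmost'' coordinates $\tau_1^1,\dots,\tau_1^N$ fixed to $i_1,\dots,i_N$, and to identify the result with $\pi^{\bm{\mathcal{B}}}(\bm{b})\,G(\bm{b}|\bm{b}^0)$.

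The first step is to rewrite the marginal as
\begin{equation*}
\mathcal{M}\!\left(\bigcup_{k=1}^N\{\tau_1^k=i_k\}\right)
= \frac{1}{Z_N}\sum_{\{\tau_j^n\}_{j\ge 2}}
\prod_{n=0}^{N-1}\det\!\bigl[\phi_n(\tau_i^n,\tau_j^{n+1})\bigr]_{i,j=1}^{n+1}\,
\det\!\bigl[\Psi_{N-i}^N(\tau_j^N)\bigr]_{i,j=1}^N ,
\end{equation*}
with the convention $\tau_{n}^{n-1}=\tau_0$ for the fictitious sink variables. I will then sum layer by layer, from the top ($n=N$) downward, using the Karlin--McGregor--Lindstr\"om--Gessel--Viennot (LGV) lemma in the form that a convolution of two determinants $\det[\phi_n(\tau_i^n,\tau_k^{n+1})]$ and $\det[\phi_{n+1}(\tau_k^{n+1},\tau_j^{n+2})]$, summed over the intermediate coordinates $\tau_2^{n+1},\dots,\tau_{n+1}^{n+1}$ with $\tau_1^{n+1}$ fixed, collapses to a single determinant of the composed kernel restricted by a one-dimensional marginal. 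This is exactly the free-fermion/non-intersecting-paths mechanism that has been used in the fixed-time case of \cite{BorodinFerrari,BFS} and in the fixed-space case of \cite{PovPrS}.

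The second step, which is the technical heart of the argument, is a convolution identity for the kernels $\phi_n$ and the functions $\widetilde F_n$. Using the integral representation \eqref{tilde{F}_n} together with the two possible shifts $b_{n+1}(\tau+1)-b_{n+1}(\tau)\in\{(-1,0),(0,1)\}$ prescribed by Def.~\ref{def1}, one checks that summing $\phi_n(z,y)\widetilde F_m(b_{n+1}(y)-b^0)$ over $y\ge z$ produces $\widetilde F_{m}$ evaluated at the point $b_n(z)-b^0$ shifted appropriately, the geometric sum along each straight part of the staircase being collapsed by the factor $(1-w)^{-1}$ or $(q+p/w)^{-1}$ hidden in the integrand. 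Combined with the obligatory exit weight $\pi^{\mathcal{B}_{n+1}}(b_{n+1}(z))$ carried by $\phi_n$ at the fixed endpoints $\tau_1^{n+1}=i_{n+1}$, each layer contributes exactly one factor $\pi^{\mathcal{B}_{n+1}}(b_{n+1}(i_{n+1}))$ and one shift $n\mapsto n-1$ of the row-index in the surviving determinant. Iterating, after $N$ such collapses the original product of determinants reduces to a single $N\times N$ determinant with entries $\widetilde F_{j-i}(b_i-b_j^0)$, multiplied by $\prod_{k}\pi^{\mathcal{B}_k}(b_k(i_k))=\pi^{\bm{\mathcal{B}}}(\bm{b})$. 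The restriction $\bm b^0\prec\bm{\mathcal{B}}$ ensures that all relevant time differences $t_i-t_j^0$ are nonnegative, so $\widetilde F_{j-i}$ may be replaced by $F_{j-i}$ in \eqref{FDisc}, recovering $G(\bm{b}|\bm{b}^0)$.

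The third step is to check that the normalization $Z_N$ produced by the summation is indeed $1$; this follows by performing the same summation without fixing the $\tau_1^k$ and invoking the corresponding total probability $\sum_{\bm b\subset\bm{\mathcal{B}}}\mathcal{G}^{\bm{\mathcal{B}}}(\bm b|\bm b^0)=1$, which holds because the admissible TASEP trajectories eventually cross each boundary with probability one. I expect the main obstacle to lie in the second step: correctly combining the two types of staircase steps \eqref{boundarypoints}, \eqref{boundarypoints_v} inside the geometric summations, and keeping careful track of the boundary sink variables $\tau_n^{n-1}=\tau_0$, which have to be chosen just to the south-west of the support of $\Psi^N_k$ so that they contribute trivially. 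The remaining bookkeeping, which matches shifts of particle labels $k$ in $\mathcal{B}_k$ with shifts of the row index in $\widetilde F_{j-i}$, is routine once the convolution identity is in place.
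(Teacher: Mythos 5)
Your overall route is the paper's computation read in reverse, and the identity you place at the heart of step two is indeed the right one: it is precisely the combined space--time contiguous relation of Lemma \ref{cont rel}, whose telescoped form collapses the geometric sums along the staircase. The paper, however, organizes the argument in the opposite direction: it starts from $\mathcal{G}=\pi^{\bm{\mathcal B}}(\bm b)\,G(\bm b|\bm b^0)$ with $G$ given by \eqref{GGFformula} and \emph{expands} the single determinant into the sum over auxiliary variables (Lemma \ref{sasamoto sum}), and then shows that the summation domain \eqref{3} may be replaced by the natural domain \eqref{tilde D} of the measure because $\mathcal{M}$ vanishes off the interlacing set (Lemma \ref{tilde_D}). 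Two specific steps of your contraction plan would fail as stated. First, the claimed LGV/Cauchy--Binet collapse does not apply here: composing $\det[\phi_n(\tau^n_i,\tau^{n+1}_j)]$ with the next determinant into ``a single determinant of the composed kernel'' requires summing over \emph{all} intermediate coordinates, whereas in \eqref{marginal} one coordinate $\tau_1^{n}$ is frozen at every level, so the product of determinants does not compose. What actually makes the marginalization tractable is that on ordered configurations each factor $\det[\phi_n(\tau^n_i,\tau^{n+1}_j)]$ equals $\prod_j\pi^{\mathcal B_{n+1}}(b_{n+1}(\tau^{n+1}_j))$ times the indicator of interlacing between adjacent levels (this is the content behind Lemma \ref{tilde_D}); only after this reduction can one perform nested telescoping sums, variable by variable, acting by multilinearity on the rows of the single $\widetilde F$-determinant --- i.e.\ Lemma \ref{sasamoto sum} run backwards. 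Your plan never produces this indicator structure, so the ``layer-by-layer'' collapse cannot get started in the form you describe.

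Second, your justification for replacing $\widetilde F_{j-i}$ by $F_{j-i}$ is incorrect: the relation $\bm b^0\prec\bm{\mathcal B}$ does \emph{not} force $t_i-t^0_j\geq 0$, since a boundary point is allowed to lie at an earlier time provided it is strictly to the right in space. In the paper this issue never arises in that form, because the expansion starts from the $F$-determinant; the role of $\widetilde F$ versus $F$ is instead controlled by the choice of the lower cutoff $\tau_0$, fixed so that $\Psi^N_k(\tau_0)=0$ (equivalently $\widetilde F$ vanishes there), which is what terminates the telescoping sums at the south-west end. If you insist on the contraction direction you must supply an argument of this type (vanishing of the relevant entries, or invariance of the determinant under the replacement), together with a convergence argument for the infinite sums toward $\tau\to+\infty$. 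Your normalization step (deducing $Z_N=1$ from $\sum_{\bm b}\mathcal{G}^{\bm{\mathcal B}}(\bm b|\bm b^0)=1$) is fine once the unnormalized identity is established.
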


To prove this statement, one represents the GGF as a sum over the boundary
points in a way similar to that used for space variables in \cite%
{Sasamoto,BFPS,BorodinFerrari,BFS} and for time variables in \cite%
{Nagao,PovPrS}. The proof of the summation uses contiguous relations for the
values of the function  $\widetilde{F}_n(b(\tau))$ at adjacent points of the
boundary, which unify similar  relations for space and time variables.

\begin{lemma}
\label{cont rel}Let $b_k(\tau)$ be the point at the boundary $\mathcal{B}%
_k$ within the $N$-boundary $\bm{\mathcal{B}}$. Then the contiguous
relations hold for the function $\widetilde{F}_n(b_k(\tau))$
\begin{equation}  \label{cont}
\pi^\mathcal{B}(b_k(\tau))\widetilde{F}_n(b_k(\tau))=\widetilde{F}%
_{n+1}(b_{k-1}(\tau+1))-\widetilde{F}_{n+1}(b_{k-1}(\tau))
\end{equation}
\end{lemma}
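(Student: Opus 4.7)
The plan is a direct algebraic manipulation inside the contour integral defining $\widetilde{F}_n$, branching on the two possibilities for the boundary step at $b(\tau)$ allowed by Definition~\ref{def1}. In both cases, setting $b(\tau)=(x,t)$ gives by Definition~\ref{def2} that $b_k(\tau)=(x-k+1,t)$ and $b_{k-1}(\tau)=(x-k+2,t)$, while $b_{k-1}(\tau+1)$ is determined by whether the step at $b(\tau)$ realises \eqref{boundarypoints} or \eqref{boundarypoints_v}. Substituting the integral representation of $\widetilde{F}_{n+1}$ into the RHS combines both terms under a single contour integral, after which the identity reduces to a one-line cancellation in the integrand.

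First, suppose $b(\tau+1)=(x,t+1)$ (vertical step), so $\pi^{\mathcal B}(b_k(\tau))=p$ and $b_{k-1}(\tau+1)=(x-k+2,t+1)$. The RHS then equals
$$\frac{1}{2\pi\mathrm{i}}\oint_{\Gamma_0}\frac{dw}{w}(1-w)^{-n-1}w^{x-k+2}\left(q+\frac{p}{w}\right)^{t}\left[\left(q+\frac{p}{w}\right)-1\right],$$
and the bracket equals $p(1-w)/w$, which cancels one factor of $(1-w)$ and one factor of $w$ to yield $p\,\widetilde{F}_n(x-k+1,t)$, exactly the LHS. Second, suppose $b(\tau+1)=(x-1,t)$ (horizontal step), so $\pi^{\mathcal B}(b_k(\tau))=1$ and $b_{k-1}(\tau+1)=(x-k+1,t)$. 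Then the RHS becomes
$$\frac{1}{2\pi\mathrm{i}}\oint_{\Gamma_0}\frac{dw}{w}\left(q+\frac{p}{w}\right)^{t}(1-w)^{-n-1}\left[w^{x-k+1}-w^{x-k+2}\right],$$
and factoring $w^{x-k+1}(1-w)$ from the bracket again cancels one $(1-w)$ from the denominator and recovers the LHS.

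The only substantive content is the observation that in each case a single factor of $(1-w)$ is manufactured --- from $q+p/w-1=p(1-w)/w$ when the boundary step is in the time direction, and from $w^{a}-w^{a+1}=w^{a}(1-w)$ when it is in the space direction --- precisely absorbing the extra $(1-w)^{-1}$ coming from the shift $n\mapsto n+1$. There is no real obstacle beyond carefully tracking the coordinate shifts between $b_k$ and $b_{k-1}$ and applying Definition~\ref{def1} to locate $b_{k-1}(\tau+1)$ case-by-case; the value of the unified relation lies in collapsing the separately known time- and space-contiguous relations used in earlier work into a single identity valid along an arbitrary staircase boundary.
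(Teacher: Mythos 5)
Your proof is correct and follows essentially the same route as the paper: the paper also reduces \eqref{cont} to the two contiguous relations $\widetilde{F}_n(x,t)=\widetilde{F}_{n+1}(x,t)-\widetilde{F}_{n+1}(x+1,t)$ and $p\,\widetilde{F}_n(x,t)=\widetilde{F}_{n+1}(x+1,t+1)-\widetilde{F}_{n+1}(x+1,t)$, obtained from the integral representation, with the case split corresponding to horizontal versus vertical boundary steps. You merely spell out the one-line cancellations in the integrand and the coordinate bookkeeping that the paper leaves implicit.
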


\begin{proof}
The relation to be proved is in fact two contiguous relations for the
function $\tilde{F}_n(x,t)$,
\begin{eqnarray}
\tilde{F}_n(x,t) &=& \tilde{F}_{n+1}(x,t)-\tilde{ F}_{n+1}(x+1,t), \\
p \tilde{F}_n(x,t) &=& \tilde{F}_{n+1}(x+1,t+1)- \tilde{F}_{n+1}(x+1,t),
\end{eqnarray}
as one relation. The two latter relations follow from the integral
representation of the function $\tilde{F}_n(x,t)$.
\end{proof}

Then we have:
\begin{lemma}
\label{sasamoto sum} Given $N$-boundary $\bm{\mathcal{B}}$, initial and
final configurations $\bm{b}^0 \prec \bm{\mathcal{B}}$ and $\bm{b} \subset %
\bm{\mathcal{B}}$  respectively, the function $\mathcal{G}(\bm{b}|\bm{b}^0)$
can be represented as a sum:
\begin{eqnarray}
\mathcal{G}(\bm{b}|\bm{b}^0))&=&\sum_{ D}\prod_{1\leq i \leq n \leq N }\pi^{%
\mathcal{B}_n}\left( b_n(\tau^n_i)\right)  \notag \\
& \times& (-1)^{\frac{N(N-1)}{2}}\det [\widetilde{F}_{-N+1+i}(
b_N(\tau^N_{j+1})- b_{N-i}^0 )]_{i,j=0}^{N-1}  \label{2}
\end{eqnarray}
where $b_j(\tau_1^j)\equiv b_j, \;j=1,\dots,N,$ and the summation variables
take their values in the domain
\begin{equation}
D = \{\tau^j_i\in\mathbb{Z}_{\geq \tau_0},2\leq i\leq j \leq N | \tau^j_i\geq
\tau^{j-1}_i, \tau^j_i > \tau^{j+1}_{i+1}\}.  \label{3}
\end{equation}
\end{lemma}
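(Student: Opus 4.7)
The plan is to start from the definition
\[
\mathcal{G}(\bm{b}|\bm{b}^0) = \pi^{\bm{\mathcal{B}}}(\bm{b}) \det[F_{j-i}(b_i - b_j^0)]_{1 \leq i, j \leq N}
\]
and to iteratively ``lift'' each row of the determinant from its native boundary $\mathcal{B}_i$ to the deepest boundary $\mathcal{B}_N$, thereby producing the sum over fictitious-particle positions that appears in~\eqref{2}. As a preparatory move I would distribute $\pi^{\bm{\mathcal{B}}}(\bm{b}) = \prod_i \pi^{\mathcal{B}_i}(b_i(\tau_1^i))$ into the rows of the determinant and replace $F_n$ by $\widetilde{F}_n$ in every entry, using the admissibility $\bm{b}^0 \prec \bm{\mathcal{B}}$ to argue that any discrepancies for $t<0$ can be removed by elementary row/column operations whose contributions vanish on admissible configurations.

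The central step is an iterated use of Lemma~\ref{cont rel} in its telescoped form. Summing the contiguous relation along the boundary and invoking the vanishing of $\widetilde{F}_n(b_k(\tau) - b_j^0)$ at the endpoint $\tau_0$ (which is precisely what the choice of $\tau_0$ in Section~\ref{Determinantal point processes on the boundaries} secures, since $\Psi^N_k(\tau_0)=0$) yields an identity of the form
\[
\widetilde{F}_{n+1}(b_{k-1}(\tau) - b_j^0) = \pm \sum_{\sigma} \pi^{\mathcal{B}_k}(b_k(\sigma)) \widetilde{F}_n(b_k(\sigma) - b_j^0),
\]
with the range of $\sigma$ determined by the telescoping endpoint that survives. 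Applying this identity $N-i$ times to row $i$ lifts it from $\mathcal{B}_i$ to $\mathcal{B}_N$ and introduces intermediate summation variables $\tau_{i+1}^{i+1}, \tau_{i+1}^{i+2}, \dots, \tau_{i+1}^N$ on the successive boundaries, constrained by the within-trajectory orderings $\tau_{i+1}^k \geq \tau_{i+1}^{k-1}$ that form the first family of inequalities in~$D$. Collecting the factor $-1$ contributed by each lifting across all rows yields the $(-1)^{N(N-1)/2}$ prefactor of~\eqref{2}, while pulling the summations outside the determinant by multilinearity recovers the product $\prod_{1 \leq i \leq n \leq N} \pi^{\mathcal{B}_n}(b_n(\tau_i^n))$.

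The remaining non-crossing constraints $\tau_i^j > \tau_{i+1}^{j+1}$ would emerge from an antisymmetry argument of Lindstr\"om--Gessel--Viennot type: configurations of fictitious trajectories that cross each other cancel pairwise in the $N \times N$ determinant against their swapped partners, so that only the non-intersecting configurations contained in $D$ survive the summation. To match the precise form of the determinant in~\eqref{2}, I would then reverse the order of the rows of the lifted determinant $\det[\widetilde{F}_{j-N}(b_N(\alpha_i)-b_j^0)]$, so that row $i$ carries the initial position $b_{N-i}^0$, and relabel the $\mathcal{B}_N$-positions as $\tau_{j+1}^N$. The most delicate parts of the argument will be (a) the rigorous justification of the vanishing of the telescoping boundary term, which hinges on the asymptotic behaviour of $\widetilde{F}_n$ along the boundary together with the defining property of $\tau_0$, and (b) the combinatorial bookkeeping of signs coming from the $N-i$ liftings per row, from the LGV antisymmetrisation, and from the final row/column permutation, which must conspire to reproduce exactly the prefactor $(-1)^{N(N-1)/2}$.
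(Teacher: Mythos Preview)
Your plan is essentially the paper's: iterate the contiguous relation of Lemma~\ref{cont rel} row by row, use the choice of $\tau_0$ to kill the telescoping boundary term, and bookkeep the signs and relabellings. The paper itself says nothing more than this, deferring the mechanics to the analogous arguments in \cite{Sasamoto,BFPS,BorodinFerrari,BFS,Nagao,PovPrS}.

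One point deserves correction, though. You split the two families of constraints in $D$, attributing $\tau^{k}_{i+1}\ge\tau^{k-1}_{i+1}$ to the telescoping and the ``non-crossing'' constraints $\tau^{j}_{i}>\tau^{j+1}_{i+1}$ to an LGV-type antisymmetrisation. In fact \emph{both} families come out of the iterated telescoping: each summation over an intermediate variable runs between a lower and an upper limit, and the two constraint families are precisely those limits. The non-vanishing endpoint terms (at the upper or lower limit that is not $\tau_0$) do not disappear by themselves; they cancel inside the determinant because they reproduce an adjacent column already present. So no separate LGV step is needed to arrive at the domain $D$ in~\eqref{3}. The antisymmetry argument you describe is what underlies the \emph{next} lemma, Lemma~\ref{tilde_D}, where one enlarges $D$ to the simpler domain $\tilde D$ by showing the summand vanishes on $\tilde D\setminus D$. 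Keeping this straight also simplifies your sign bookkeeping in~(b): the only contributions to $(-1)^{N(N-1)/2}$ are the relabelling of rows/columns, not an antisymmetrisation over crossings.
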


\begin{proof}
Using the contiguous relation (\ref{cont}) the proof just follows the
similar proofs in \cite{Sasamoto,BFPS,BorodinFerrari,BFS,Nagao,PovPrS}. Note
that the lower summation bound $\tau_0$ is chosen such that the functions $\widetilde{F}$ under the determinant vanish at this point. Indeed this is true for  $\widetilde{F}_n(x,t)$ when $x>t$. By construction of boundaries it is always possible to find  suitable $\tau_0$ to ensure this inequality. To be specific we choose the maximal of these numbers.
\end{proof}

To complete the proof of proposition \eqref{GGF-measure} we need to show
that the summation over the domain $D$ can be replaced by the summation over
the sets of the form \eqref{T}.

\begin{lemma}\label{tilde_D}
The domain of summation in \eqref{2} can be replaced by
\begin{equation}  \label{tilde D}
\tilde{D}=\{\tau^j_i\in\mathbb{Z}_{\geq \tau_0},2\leq i\leq j \leq N |
\tau^j_i<\tau^{j}_{i-1}\}
\end{equation}
\end{lemma}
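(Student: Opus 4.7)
The plan is to prove Lemma \ref{tilde_D} by showing that the summand of \eqref{2}, restricted to configurations in $\tilde{D}\setminus D$, contributes zero, via an argument combining the contiguous relation of Lemma \ref{cont rel} with the antisymmetry of the level-$N$ determinant.

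First, I would confirm the inclusion $D\subseteq\tilde{D}$. Chaining the two defining inequalities of $D$ in \eqref{3} gives $\tau^j_i\geq\tau^{j-1}_i>\tau^j_{i+1}$, which is precisely the strict within-level ordering defining $\tilde{D}$. Thus the task reduces to showing that the excess contribution from configurations in $\tilde{D}\setminus D$ -- those violating either the vertical monotonicity $\tau^j_i\geq\tau^{j-1}_i$ or the separation $\tau^j_i>\tau^{j+1}_{i+1}$ -- vanishes.

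Second, I would use the contiguous relation of Lemma \ref{cont rel} to rewrite the intermediate-level factors $\pi^{\mathcal{B}_n}(b_n(\tau^n_i))$ as telescoping differences of $\widetilde{F}$-values along the boundaries. This recasts the summand so that each intermediate summation over $\tau^n_i$ collapses to boundary contributions: the lower endpoint at $\tau_0$ vanishes by the same choice of $\tau_0$ as in the proof of Lemma \ref{sasamoto sum}, and the upper endpoints, coming from the constraints embedded in $D$, reproduce the $D$-sum exactly. The configurations newly included in $\tilde{D}\setminus D$ then produce summand contributions whose $\widetilde{F}$-arguments at level $N$ coincide, and the resulting $2\times 2$ minors of the level-$N$ determinant vanish identically, yielding the desired cancellation.

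Third, combining these two observations gives $\sum_{D}(\text{summand})=\sum_{\tilde{D}}(\text{summand})$, both evaluating to $\mathcal{G}$, which is the content of Lemma \ref{tilde_D}.

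The main obstacle I anticipate is making the telescoping of intermediate-level sums precise while controlling all boundary contributions simultaneously: at the lower end $\tau_0$, at the values at which the $D$-constraints become active, and at any upper end one may need to introduce. The key point is that every excess configuration in $\tilde{D}\setminus D$ must be matched either to a vanishing $\widetilde{F}$-factor (by the choice of $\tau_0$) or to a pair of coincident columns in the level-$N$ determinant, ensuring term-by-term cancellation rather than relying on a global antisymmetry that does not directly apply to the summand of \eqref{2}.
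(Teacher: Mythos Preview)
Your first step, the inclusion $D\subseteq\tilde D$, is fine and matches the paper. The difficulty is in your second step, and it reflects a misreading of what the lemma is actually asserting.

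You are trying to prove that the summand of \eqref{2} itself vanishes (or cancels) on $\tilde D\setminus D$, so that $\sum_D(\text{summand of \eqref{2}})=\sum_{\tilde D}(\text{summand of \eqref{2}})$. This is false in general. The summand of \eqref{2} is a strictly positive product $\prod_{i\le n}\pi^{\mathcal B_n}(b_n(\tau^n_i))$ times the level-$N$ determinant $\det[\widetilde F_{-N+1+i}(b_N(\tau^N_{j+1})-b^0_{N-i})]$. That determinant depends \emph{only} on the top-level variables $\tau^N_1,\dots,\tau^N_N$, which in $\tilde D$ are strictly ordered and hence distinct; no ``coincident columns'' arise, and your proposed $2\times2$-minor cancellation never occurs. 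The contiguous relation of Lemma~\ref{cont rel} is the tool that produced \eqref{2} from the GGF in the first place; running it again on the $\pi$-factors does not recover a vanishing mechanism on $\tilde D\setminus D$.

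What the lemma (read in the context of Proposition~\ref{GGF-measure}) actually requires is that the sum over $D$ of the summand \eqref{2} equals the sum over $\tilde D$ of the \emph{measure} $\mathcal M(\mathcal T)$ in \eqref{det_measure}, i.e.\ the domain and the summand are replaced simultaneously. The paper's argument (imported from \cite{PovPrS}) is: on $\tilde D$, factor each $\phi_n$-determinant as $\bigl(\prod_j\pi^{\mathcal B_{n+1}}(b_{n+1}(\tau^{n+1}_j))\bigr)\det[\mathbbm 1(\tau^{n+1}_j\ge\tau^n_i)]_{i,j=1}^{n+1}$. The indicator determinant is a standard interlacing determinant: with the strict orderings of $\tilde D$ and the convention $\tau^n_{n+1}=\tau_0$, it equals $1$ precisely when $\tau^{n+1}_i\ge\tau^n_i>\tau^{n+1}_{i+1}$ for all $i$ (the $D$-constraints) and $0$ otherwise. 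Hence on $\tilde D$ one has $\mathcal M(\mathcal T)=\mathbbm 1_D\times(\text{summand of \eqref{2}})$, which gives the desired identity. You should replace your telescoping/antisymmetry plan with this interlacing computation for the $\phi_n$-determinants.
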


\begin{proof}
Apparently inequalities in \eqref{3} suggest those in \eqref{tilde D}. We
also need to show the converse: the measure \eqref{det_measure} is zero
everywhere in $\tilde{D}$ unless the inequalities from in \eqref{3} are
satisfied. The statement can be proved by reproducing the arguments from
\cite{PovPrS}.
\end{proof}

To find the correlation functions of the TASEP we first calculate the
correlation functions of the measure $\mathcal{M(T)}$. The functional form
of $\mathcal{M}(\mathcal{T})$ suggests that the correlation functions are
determinantal. Derivation of the correlation kernel was explained in
great detail in \cite{BFPS}. To proceed with the calculation, we introduce
convolution
\begin{equation}  \label{sec_phi}
\phi^{(n_1,n_2)}(x,y)=\left\{%
\begin{array}{ll}
(\phi_{n_1}*\phi_{n_1+1}*\dots*\phi_{n_2-1}) (x,y), & n_1<n_2 \\
0, & n_1\geq n_2%
\end{array}%
\right.,
\end{equation}
where $(a*b)(x,y)=\sum_{z\in\mathbf{Z}_{\geq \tau_0}}a(x,z)b(z,y)$, and
\begin{equation}  \label{Psi_n}
\Psi^n_{n-j}(\tau)=(\phi^{n,N}*\Psi^N_{N-j})(\tau).
\end{equation}

Note\, that\, in\, terms\, of\, the\, coordinates\, of\, fictitious\, particles\, function\, $\phi^{(n_1,n_2)}(x,y)$ is the transition weight between points at the boundaries $\mathcal{B}_{n_1}$ and $\mathcal{B}_{n_2}$. Hence, the points parameterized by the variables $x$ and $y$ in \eqref{sec_phi} live at $\mathcal{B}_{n_1}$ and $\mathcal{B}_{n_2}$, respectively, while the argument of $\Psi^n_{n-j}(\tau)$ in \eqref{Psi_n} lives on the boundary $\mathcal{B}_{n}$.\footnote{For single $N$-boundary this comment is not  essential  as  the points of different  boundaries within the same $N$-boundary have the same dependence on the index $\tau$ (specifically the value of $\pi(b(\tau)$)). Therefore, one could stick to,  for example, the boundary $\mathcal{B}_1$ shifting the spacial coordinate accordingly. Later, however, when we consider a sequence of $N$-boundaries, the information on what boundary the functions under consideration refer to becomes important.}

Consider functions
\begin{equation}  \label{space}
\{(\phi_0*\phi^{(1,n)})(\tau^0_1,\tau),\dots,(\phi_{n-2}*\phi^{n-1,n})(%
\tau^{n-2}_{n-1},\tau), \phi_{n-1}(\tau^{n-1}_n,\tau)\}.
\end{equation}
They are linearly independent and hence can serve as a basis of an $n$%
-dimensional linear space $V_n$. We construct another basis of $V_n$, $%
\{\Phi^n_j(\tau),j=0, \dots,n-1\}$, which is fixed by the orthogonality
relations
\begin{equation}  \label{orthogonality}
\sum_{\tau\in\mathbb{Z}}\Phi^n_i(\tau)\Psi^n_j(\tau)=\delta_{i,j}.
\end{equation}
Then, under the

\begin{description}
\item[Assumption (A)] : $\phi_n(\tau_{n+1}^n,\tau)=c_n \Phi_0^n(\tau)$ with
some $c_n\neq0$, $n=1,\dots,N$,
\end{description}

the kernel has the form
\begin{equation}  \label{Prop4.3.2}
K(n_1,\tau_1;n_2,\tau_2)=-\phi^{(n_1,n_2)}(\tau_1,\tau_2)+\sum_{k=1}^{n_2}
\Psi_{n_1-k}^{n_1}(\tau_1)\Phi_{n_2-k}^{n_2}(\tau_2).
\end{equation}

Applying repeatedly the convolution with $%
\phi_{N-1},\dots, \phi_j$  to  $\Psi^N_{N-j}(\tau)$  we obtain

\begin{lemma}
Given $N$-boundary $\bm{\mathcal{B}}$,  the functions $%
\Psi^n_j(\tau)$ have the following integral representation.
\begin{eqnarray}  \label{Psi_int}
\Psi^n_k(\tau)=\oint_{\Gamma_{0,1}}\frac{dw}{2\pi \mathrm{i} }\left(1-p\frac{%
w-1}{w}\right)^{t_n(\tau)-t^0_{n-k}} (w-1)^kw^{x_n(\tau)-x^0_{n-k}-1},
\end{eqnarray}
The contour
of integration $\Gamma_{0,1}$ encircles the poles $w=0,1$, leaving all the
other singularities outside.
\end{lemma}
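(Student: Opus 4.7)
The plan is to prove the integral representation by first establishing the closed form
$$\Psi^n_k(\tau) = (-1)^k\, \widetilde{F}_{-k}\!\left(b_n(\tau)-b^0_{n-k}\right)$$
by downward induction on $n$, starting from $n=N$, and only then rewriting the result as the contour integral in the statement. The base case $n=N$ is just the definition \eqref{Psi_N}. The inductive step relies on the observation that \eqref{sec_phi}--\eqref{Psi_n} imply the one-step recursion $\Psi^{n-1}_{k-1} = \phi_{n-1} \ast \Psi^n_k$, so that
$$\Psi^{n-1}_{k-1}(\tau) = (-1)^k\sum_{\tau'\geq\tau} \pi^{\mathcal{B}_n}\!\bigl(b_n(\tau')\bigr)\, \widetilde{F}_{-k}\!\bigl(b_n(\tau')-b^0_{n-k}\bigr).$$

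Next I would apply the contiguous relation of Lemma \ref{cont rel} with the offset $-b^0_{n-k}$ (which is a constant shift that leaves both sides unchanged) to rewrite each summand as the telescoping difference
$$\widetilde{F}_{-k+1}\!\bigl(b_{n-1}(\tau'+1)-b^0_{n-k}\bigr)-\widetilde{F}_{-k+1}\!\bigl(b_{n-1}(\tau')-b^0_{n-k}\bigr).$$
Collapsing the sum over $\tau'\geq \tau$ keeps only the endpoint at $\tau'=\tau$, provided the contribution at $\tau'\to\infty$ vanishes. Using $n-k=(n-1)-(k-1)$, the resulting expression is precisely $(-1)^{k-1}\widetilde{F}_{-(k-1)}\!\bigl(b_{n-1}(\tau)-b^0_{(n-1)-(k-1)}\bigr)$, closing the induction.

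The final step is cosmetic: substitute the integral representation \eqref{tilde{F}_n} and absorb the sign using $(-1)^k(1-w)^k=(w-1)^k$ and $q+p/w = 1-p(w-1)/w$. Since $(w-1)^k$ is a polynomial for $k\geq 0$, the integrand is regular at $w=1$, and the contour $\Gamma_0$ enclosing only $w=0$ may be enlarged to $\Gamma_{0,1}$ without changing the value of the integral, yielding the stated formula.

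The main obstacle is the justification of the telescoping at infinity. I would dispose of it by expanding the contour integral for $\widetilde{F}_{-k+1}(x,t)$ for $t\geq 0$ as a coefficient extraction, $[w^{t-x}](qw+p)^t(1-w)^{k-1}$, which is zero whenever $x<-k+1$ since the polynomial $(qw+p)^t(1-w)^{k-1}$ has degree $t+k-1$. Because the staircase structure of $\mathcal{B}$ together with $\bm b^0 \prec \bm{\mathcal{B}}$ forces $x_{n-1}(\tau')-x^0_{n-k}$ to decrease without bound as $\tau'\to\infty$ (the same mechanism that allows the choice of $\tau_0$ for the lower end of the sum in Lemma \ref{tilde_D}), this vanishing kills the boundary term at infinity. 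A parallel analysis handles boundaries in which $t\to\infty$ through the reflection symmetry of the contour representation in $(x,t)$.
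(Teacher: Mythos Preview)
Your inductive scheme via the one-step recursion $\Psi^{n-1}_{k-1}=\phi_{n-1}*\Psi^n_k$ together with the telescoping through Lemma~\ref{cont rel} is precisely what the paper's one-line argument (``applying repeatedly the convolution with $\phi_{N-1},\dots,\phi_j$'') amounts to, so the strategy is right. The gap is in the disposal of the boundary term at $\tau'\to\infty$ and, tied to it, in the choice of contour carried through the induction.

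First, the staircase structure does \emph{not} force $x_{n-1}(\tau')\to-\infty$: a boundary may be eventually vertical, with $x$ constant and $t\to\infty$. There is no ``reflection symmetry in $(x,t)$'' of $\widetilde F_n$ to invoke for that case. More seriously, with the $\Gamma_0$ contour the tail term can fail to vanish altogether. Already at $k=0$ one meets $\widetilde F_{1}(x,t)=\sum_{j=\max(0,x)}^t\binom{t}{j}q^{t-j}p^j$, which equals $1$ identically for $x\le 0$; on a vertical boundary the telescoped limit is $1$, not $0$. The induction hypothesis $\Psi^n_k=(-1)^k\widetilde F_{-k}(\cdot)$ therefore breaks the moment $k$ drops below $1$, and since $\Psi^n_{n-j}$ is needed for $j=1,\dots,N$ (so $k=n-j$ runs down to $n-N<0$), this is not a vacuous range. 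Your last paragraph then compounds the problem: enlarging $\Gamma_0$ to $\Gamma_{0,1}$ is \emph{not} cosmetic when $k<0$, because $(w-1)^k$ has a pole at $w=1$ and the enlargement changes the value of the integral.

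The repair is to carry the $\Gamma_{0,1}$ contour through the entire induction, i.e.\ to run the argument with $\hat F_n$ (integral on $\Gamma_{0,1}$, cf.\ \eqref{hat{F}_n}) in place of $\widetilde F_n$. The base case is unchanged because $\hat F_{-K}=\widetilde F_{-K}$ for $K\ge 0$. For the tail, deform $\Gamma_{0,1}$ to the circle $|w|=R>1$ (no poles are crossed since the only singularities are at $0$ and $1$): there $|q+p/w|\le q+p/R<1$ and $|w|^x=R^{x}$, so writing $t=t_0+a$, $x=x_0-b$ with $a,b\ge 0$ and $a+b=\tau-\tau_0$ one gets
\[
\bigl|(q+p/w)^t w^{x}\bigr|\le \mathrm{const}\cdot e^{\,a\ln(q+p/R)-b\ln R}\le \mathrm{const}\cdot e^{-c(\tau-\tau_0)},
\]
with $c=\min\bigl(-\ln(q+p/R),\ln R\bigr)>0$. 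Hence the boundary term vanishes for \emph{every} $k$ and for any mixture of horizontal and vertical steps, and the inductive output already sits on $\Gamma_{0,1}$, giving the lemma's formula directly without a separate contour move.
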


To find basis $V_n$, we have to specify initial conditions. For usual step
initial conditions, the orthogonalization can easily be performed.

\begin{lemma}
Given step initial conditions, $b^0_k=(-k+1,0)$ for $k=1,\dots,N$, the
functions $\Psi^n_k(\tau)$ and $\Phi^n_j(\tau)$ satisfying (\ref%
{orthogonality}) are given by
\begin{eqnarray}  \label{Phi_int step}
\Psi^n_k(\tau)&=&\oint_{\Gamma_{0,1}}\frac{dw}{2\pi \mathrm{i}}\left(1-p%
\frac{w-1}{w}\right)^{t_n(\tau)} \left(\frac{w-1}{w}\right)^k w^{x_n(\tau)+n-2}, \\
\Phi^n_j(\tau)&=&\oint_{\Gamma_1}\frac{dv}{2\pi \mathrm{i}} \left(1-p\frac{%
v-1}{v}\right)^{-t_n(\tau)}\frac{ (v-1)^{-j-1}v^{j-x_n(\tau)-n} }{((1/\pi^{%
\mathcal{B}}(b(\tau))-1)v+1)},
\end{eqnarray}
where the contour of integration $\Gamma_{1}$ encircles the pole $v=1$
anticlockwise.
\end{lemma}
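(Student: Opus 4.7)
The argument splits into a routine derivation of $\Psi^n_k(\tau)$ from the previous lemma, and the construction and verification of $\Phi^n_j(\tau)$, which is the heart of the proof.

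For $\Psi^n_k(\tau)$, the plan is to substitute the step initial data $b^0_{n-k}=(k-n+1,0)$ into the integral representation \eqref{Psi_int} and regroup $(w-1)^k w^{-k}=((w-1)/w)^k$; the stated formula then follows immediately.

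For $\Phi^n_j(\tau)$, I would adopt the ansatz given in the statement, motivated as the ``dual'' integrand to $\Psi^n_k$ (reciprocal exponents, contour $\Gamma_1$ enclosing only $v=1$) supplemented by the factor $((1/\pi^{\mathcal{B}}(b(\tau))-1)v+1)^{-1}$ that evaluates to $1$ on horizontal boundary segments and to $p/(qv+p)$ on vertical ones; this is the minimal modification compatible with the branched structure of $\mathcal{B}$. The orthogonality $\sum_\tau \Phi^n_i(\tau)\Psi^n_j(\tau)=\delta_{i,j}$ is then established by a direct telescoping computation. After substituting both integral representations and interchanging the sum with the two contour integrals, the $\tau$-dependent factor of the integrand is
\[
c(\tau)=\frac{U(\tau)}{(1/\pi^{\mathcal{B}}(b(\tau))-1)v+1},\qquad U(\tau)=\left[\frac{1-p(w-1)/w}{1-p(v-1)/v}\right]^{t_n(\tau)}\!\left(\frac{w}{v}\right)^{x_n(\tau)}.
\]
A direct check for each of the two admissible moves $(x,t)\to(x-1,t)$ and $(x,t)\to(x,t+1)$ along the boundary establishes the identity
\[
U(\tau+1)-U(\tau)=\frac{v-w}{w}\,c(\tau),
\]
so the sum telescopes to $\tfrac{w}{v-w}\bigl[\lim_{\tau\to\infty}U(\tau)-U(\tau_0)\bigr]$. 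Choosing the contours so that $|w/v|<1$ and so that the vertical-step factor $v(qw+p)/[w(qv+p)]$ also has modulus less than $1$ kills the limiting term. The crucial observation is then that $U(\tau_0)\bigl|_{w=v}=1$, so the surviving term $-U(\tau_0)w/(v-w)$ has a simple pole at $w=v$ whose residue is completely independent of $\tau_0$ and of the particular boundary. Taking this residue (with the $v$-contour lying inside the $w$-contour so that $w=v$ is enclosed) reduces the whole expression to $\oint_{\Gamma_1}(v-1)^{j-i-1}v^{i-j-1}\,dv/(2\pi\mathrm{i})$, which equals $\delta_{i,j}$ by a single residue at $v=1$.

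The main obstacle is the analytic bookkeeping: one has to find a single pair of contours that simultaneously damps $U(\tau)$ as $\tau\to\infty$ on both horizontal and vertical segments of $\mathcal{B}$, excludes the spurious $w$-poles coming from $U(\tau_0)$ (at $w=0$, $w=1$ and $w=-p/q$), and yet keeps $w=v$ inside the $w$-contour so that the nontrivial residue is picked up. Once orthogonality is in hand, Assumption (A) is verified directly: the integrand of $\Phi^n_0$ has a single simple pole inside $\Gamma_1$ at $v=1$, and evaluating this residue yields $\pi^{\mathcal{B}}(b(\tau))$, which coincides with $\phi_n(\tau_0,\tau)$, so $c_n=1$.
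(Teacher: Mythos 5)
Your overall route is the same as the paper's: obtain $\Psi^n_k$ by direct substitution of the step initial data into \eqref{Psi_int}, then prove orthogonality by interchanging the $\tau$-sum with the two contour integrals, evaluating the $\tau$-sum in closed form, and extracting first the residue at $w=v$ and then the residue at $v=1$, which gives $\delta_{i,j}$. Your one-step identity $U(\tau+1)-U(\tau)=\frac{v-w}{w}c(\tau)$, checked on both admissible moves, is just a clean repackaging of the paper's ``successive summing of the geometric progressions'' over the horizontal and vertical runs of the boundary; your final residue computation and the verification of Assumption (A) ($\Phi^n_0(\tau)=\pi^{\mathcal{B}}(b(\tau))$) are correct.

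There is, however, one step that would fail as you describe it: the treatment of the surviving boundary term $-\frac{w}{v-w}U(\tau_0)$. You propose to choose contours that ``exclude the spurious $w$-poles at $w=0$, $w=1$ and $w=-p/q$''. This is not possible for $w=0$, since the $w$-contour is $\Gamma_{0,1}$ and must enclose the origin (otherwise $\Psi^n_k$ is not the function of the lemma); moreover $w=1$ and $w=-p/q$ are not $w$-poles at all, because $(w-1)^k$ with $k\ge 0$ and $(qw+p)^{t}$ with $t\ge 0$ sit in the numerator, while $-p/q$ is a singularity in $v$, already kept outside $\Gamma_1$. The correct mechanism is the one encoded in the paper's truncation of the sum at $\tau=s$ with $x_n(s)-t_n(s)-1=0$: for $\tau\le s$ one has $x-t\ge 1$, so $\Psi^n_k(\tau)=0$ for all $k\le n-1$ (no pole inside $\Gamma_{0,1}$), hence the sum may legitimately be started there; and precisely because $x-t\ge 1$ at the starting point, the exponent of $w$ in the boundary term, $n-1-k+(x-t)$, is non-negative, so that term is analytic at $w=0$ and the only pole of the $w$-integral is $w=v$. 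Your observation that $U|_{w=v}=1$ makes the $w=v$ residue independent of the starting point is correct, but by itself it does not show that this residue is the whole answer; the vanishing of the $w=0$ contribution at the truncation point is the missing ingredient. A minor additional slip: the convergence condition on the horizontal runs is $|v/w|<1$ (the $v$-contour inside the $w$-contour, as you in fact use when enclosing $w=v$), not $|w/v|<1$.
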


\begin{proof}
The function $\Psi^n_k(\tau)$ is obtained from \eqref{Psi_int} by an
explicit substitution of the step initial conditions. To prove the
orthogonality conditions (\ref{orthogonality}) one must evaluate the sum $%
\sum_{\tau \in \mathbbm{Z}}\Psi^n_k(\tau) \Phi^n_j(\tau)$. This is done by
an interchange of summation and integration. After successive summing the
geometric progressions for space-like and time-like parts of the boundary
and taking into account the pole at $v=w$, we obtain the desirable result.
To provide the convergence of the resulting sum we note that the choice of
contours ensures convergence of the sum for $\tau\to\infty$, while at the
lower limit the sum is truncated at $\tau=s$, so that $x_n(s)-t_n(s)-1=0$.
Obviously $\Psi^n_k(\tau)=0$ for $\tau < s$ because no poles remain inside
the integration contour for $k \geq 0$.
\end{proof}

Note that the form of $\Phi^n_j(\tau)$ depends on whether the site $b(\tau)$
belongs to time-like or space-like part of the boundary, which is reflected
in the term containing the exit probability in the denominator. Now we note
that the assumption \textbf{A} is fulfilled,
\begin{equation}
\Phi_0^n(\tau)=\pi^{\mathcal{B}_n}(b_n(\tau))=\phi_n(\tau_{n+1}^n,\tau),
\end{equation}
and we can write the kernel. The summation in (\ref{Prop4.3.2}) yields
\begin{eqnarray}  \label{sum phi psi}
&&\sum_{k=1}^{\infty}\Psi_{n_1-k}^{n_1}(\tau_1)\Phi_{n_2-k}^{n_2}(\tau_2) \\
&&= \oint_{\Gamma_1}\frac{dv}{2\pi \mathrm{i}v }\oint_{\Gamma_{0,v}}\frac{dw%
}{2\pi \mathrm{i}w} \frac{\frac{(1-p(\frac{w-1}{w}))^{t_{n_1}(\tau_1)}}{(1-p(\frac{%
v-1}{v}))^{t_{n_2}(\tau_2)}} \frac{(w-1)^{n_1}}{(v-1)^{n_2}} \frac{%
w^{x_{n_1}(\tau_1)}}{{v^{x_{n_2}(\tau_1)}}}} {(w-v)(1/v+1/\pi_2-1)} .  \notag
\end{eqnarray}
where $\pi^{\mathcal{B}}_2\equiv \pi^{\mathcal{B}}(b(\tau_2))$.

Observe that the function $\phi_n(x,y)$ can be written in the form
\begin{equation}  \label{phi_n_int}
\phi_n(\tau_1,\tau_2)=\oint_{\Gamma_{0,1}}\frac{dw}{2\pi\mathrm{i} w^2} \frac{%
(1-p(\frac{w-1}{w}))^{t_{n}(\tau_1)-t_{n+1}(\tau_2)} w^{x_{n}(\tau_1)-x_{n+1}(\tau_2)}} {%
(w-1)(1/v+1/\pi_2-1)}.
\end{equation}
After a few convolutions we have
\begin{eqnarray}  \label{phi^n_int}
\phi^{(n_1,n_2)}(\tau_1,\tau_2)&=& \mathbbm{1}(n_2>n_1) \\
&\times&\oint_{\Gamma_{0,1}}\frac{dw}{2\pi \mathrm{i} w^2} \frac{(1-p(\frac{%
w-1}{w}))^{t_{n_1}(\tau_1)-t_{n_2}(\tau_2)} w^{x_{n_1}(\tau_1)-x_{n_1}(\tau_2)}} {%
(w-1)^{n_2-n_1}(1/v+1/\pi_2-1)}.  \notag
\end{eqnarray}
Then we obtain:

\begin{proposition}
The correlation kernel of the measure $\mathcal{M}$, (\ref{det_measure}), is
\begin{eqnarray}  \label{cor_kernel}
&&\hspace{-0.5cm}K(n_1,\tau_1;n_2,\tau_2)=\\
&=& \oint_{\Gamma_1}\frac{dv}{2\pi \mathrm{i}v }\oint_{\Gamma_{0,v}}\frac{dw%
}{2\pi \mathrm{i}w} \frac{\frac{(1-p(\frac{w-1}{w}))^{t_{n_1}(\tau_1)}}{(1-p(\frac{%
v-1}{v}))^{t_{n_2}(\tau_2)}} \frac{(w-1)^{n_1}}{(v-1)^{n_2}} \frac{%
w^{x_{n_1}(\tau_1)}}{{v^{x_{n_2}(\tau_1)}}}} {(w-v)(1/v+1/\pi_2-1)}  \notag
\\
&-&\mathbbm{1}(n_2>n_1)\oint_{\Gamma_{0,1}}\frac{dw}{2\pi i w^2} \frac{(1-p(%
\frac{w-1}{w}))^{t_{n_1}(\tau_1)-t_{n_2}(\tau_2)} w^{x_{n_1}(\tau_1)-x_{n_1}(\tau_2)}} {%
(w-1)^{n_2-n_1}(1/w+1/\pi_2-1)},  \notag
\end{eqnarray}
where $\pi_2\equiv \pi^{\mathcal{B}}(b(\tau_2))$ and $x_n(\tau)=x(\tau)+n-1$.
\end{proposition}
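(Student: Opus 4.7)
The plan is to feed the general formula \eqref{Prop4.3.2} with the explicit ingredients from the two preceding lemmas. First I would check Assumption (A), which amounts to the identity $\Phi_0^n(\tau)=\pi^{\mathcal{B}_n}(b_n(\tau))=\phi_n(\tau_{n+1}^n,\tau)$: set $j=0$ in the integral representation of $\Phi_n^j(\tau)$, deform the $v$-contour to pick only the simple pole at $v=1/(1-1/\pi^{\mathcal{B}}(b(\tau)))^{-1}$ (or equivalently residue at $v=1$) using the denominator $((1/\pi^{\mathcal{B}}-1)v+1)$, and recognize the resulting constant as the exit probability. With (A) verified, \eqref{Prop4.3.2} applies literally, reducing the proof to the computation of the single sum $\sum_{k\ge 1}\Psi^{n_1}_{n_1-k}(\tau_1)\Phi^{n_2}_{n_2-k}(\tau_2)$ and of $\phi^{(n_1,n_2)}$.

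For the double-integral piece I would substitute the integral forms of $\Psi^{n_1}_{n_1-k}$ and $\Phi^{n_2}_{n_2-k}$ from the step-initial-condition lemma into the sum, exchange sum and contour integrals, and observe that the $k$-dependence factors as a geometric series in $w/v$. The contours $\Gamma_1$ for $v$ and $\Gamma_{0,v}$ for $w$ are precisely chosen so that $|w/v|<1$ on them, which makes the geometric sum converge and produces the factor $1/(w-v)$ after a shift of summation index. One also has to argue that extending the summation range from $\{1,\dots,n_2\}$ to $\{1,\dots,\infty\}$ is harmless: for $k>n_2$ the integrand of $\Phi^{n_2}_{n_2-k}$ has no poles inside $\Gamma_1$ (the factor $(v-1)^{-j-1}$ becomes polynomial in $v-1$), so those terms vanish. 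Rearranging the remaining factors and noting $x_n(\tau)=x(\tau)+n-1$ reproduces the first (double-contour) term of \eqref{cor_kernel}.

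For $\phi^{(n_1,n_2)}$ I would start from the single-step integral \eqref{phi_n_int} (obtained by writing $\pi^{\mathcal{B}_n}(b_n(y))$ as a residue at $w=1$ of $1/[(w-1)((1/w+1/\pi_2-1))]$) and iterate the convolution $n_2-n_1$ times. Each convolution sums a geometric series in $w$ using the telescoping structure of the boundary increments, raising the factor $(w-1)^{-1}$ to the power $(w-1)^{-(n_2-n_1)}$ while collapsing the intermediate coordinates into the net difference $t_{n_1}(\tau_1)-t_{n_2}(\tau_2)$ and $x_{n_1}(\tau_1)-x_{n_2}(\tau_2)$. The indicator $\mathbbm{1}(n_2>n_1)$ reflects the definition \eqref{sec_phi}. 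Inserting this into \eqref{Prop4.3.2} with an overall minus sign gives the second term of \eqref{cor_kernel}.

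The main obstacle I anticipate is bookkeeping for the contours and the validity of summation/integration exchange, in particular justifying that the geometric series in $w/v$ converges uniformly on the chosen contours and that the truncation to $k\le n_2$ versus the extension to $k=\infty$ really differ by zero. A subtle secondary issue is that the formula for $\Phi^n_j$ depends on whether $b(\tau)$ sits on the horizontal or vertical portion of the boundary (through $\pi^{\mathcal{B}}$), so one must verify that the single unified expression in \eqref{cor_kernel} correctly absorbs both cases — which it does via the $\pi_2$ in the denominator $(1/v+1/\pi_2-1)$ attached uniformly to the exit point $\tau_2$.
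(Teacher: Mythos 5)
Your proposal follows essentially the same route as the paper: verify Assumption (A) by the residue at $v=1$, obtain the double-integral term by inserting the contour representations of $\Psi^{n_1}_{n_1-k}$ and $\Phi^{n_2}_{n_2-k}$ into the sum of \eqref{Prop4.3.2} and summing the geometric series (noting terms with $k>n_2$ vanish), and obtain the single-integral term by iterating the convolution of \eqref{phi_n_int}. The only slight imprecision is that the geometric ratio is $\frac{w(v-1)}{v(w-1)}$ rather than $w/v$, but summing it indeed produces the $1/(w-v)$ factor, so the argument is sound.
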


Determinants of the above correlation kernel yield the correlation functions
of the measure $\mathcal{M}$, i.e. probabilities of point sets $\mathcal{T}$%
, (\ref{T}), having any given subsets. Then, using the inclusion-exclusion
principle, we can write down joint distribution
\begin{eqnarray}
\mathbf{P}=\mathcal{M}\left(\mathcal{T}\supset\{\tau_1^{n_1}\leq
a_1\}\bigcap \dots \bigcap \{\tau_1^{n_m}\leq a_m\}\right),
\end{eqnarray}
of sequences $\{\tau_1^{n_1},\dots,\tau_1^{n_k}\}$ for any fixed collection $%
1\leq n_1\leq\dots\leq n_m\leq N$, where $1\leq m\leq N$, in the form
\begin{eqnarray}  \label{fredholm one cascade}
\mathbf{P}=\det\left(\mathbbm{1}-\chi_a K
\chi_a)\right)_{l^2(\{n_1,\dots,n_m\}\times \mathbbm{Z})},
\end{eqnarray}
where Fredholm determinant is defined as a sum
\begin{equation}\label{fredholmsum}
\mathbf{P} =\sum_{n\geq 0}\frac{(-1)^n}{n!}\sum_{i_1,\dots,i_n=1}^m
\sum_{\tau_{1}>a_{i_1}}\dots\sum_{\tau_{n}>a_{i_n}}\det\{K(n_{i_k},%
\tau_k;n_{i_j}, \tau_j)\}_{k,j=1}^n
\end{equation}
and $\chi_a(n_i)(t)=\mathbbm{1}(t>a_i)$.  This distribution is the TASEP
correlation function of interest,
\begin{equation}
\mathbf{P} \equiv Prob\left(\{\tau_{n_1}\leq a_1\}\bigcap \{\tau_{n_2}\leq
a_2\}\bigcap \dots \bigcap \{\tau_{n_m}\leq a_m\}\right), \\
\end{equation}
and (\ref{fredholm one cascade}) is a particular case of the Theorem \ref%
{distrib} applied to the case of single $N$-boundary. Remarkably, the GGF
allowed us to treat very wide range of space-time point configurations ``in
one go'', in the same way as the fixed time and space cases were treated in
\cite{BFPS} and \cite{PovPrS}, respectively. Any admissible point
configuration can be processed in this way, when put to a suitable boundary.
The set of admissible configurations, however, does not exhaust all the
possibilities. It turns out that the time ordering constraint (\ref{t order}%
) can also be removed. To this end we apply a multicascade procedure,
similar to that used in \cite{BorodinFerrari}, to a sequence of $N$-boundaries.

\subsection{Multiple $N$-boundary case.}

Consider $m$ mutually distinct $N$-boundaries, $\bm{\mathcal{B}}%
^1\prec\dots\prec\bm{\mathcal{B}}^m$. In this section we derive the joint $l$%
-point probability distribution for positions $\mathrm{b}^{k_i}_{N_i}$ at which the trajectories of
particles $N_i$ depart from  boundaries $\mathcal{B}^{k_i}_{N_{i}}$, where $%
i=1,2,\dots,l$, and the indices $N_i$ and $k_i$ satisfy the assumptions of
 Theorem \ref{distrib}. We suggest that the space-time points within each
$N$-boundary are indexed independently by the indices $\tau_{i}(k)$, where
the subindex $i=1,\dots,N$ stands for the the number of the boundary $%
\mathcal{B}_i^k$ within the $N$-boundary $\bm{\mathcal{B}}^k$ and the
argument $k=1,\dots,m$ indexes $N$-boundaries. According to Defs. \ref{def1}%
, \ref{def2}, we first independently define an indexing order $%
b^{k}_{1}(\tau_{1}(k))$ for the first boundaries within each $N$-boundary
and then translate it to other $N-1$ boundaries by the corresponding left
shifts.

Given a fixed collection of integers $a_1,\dots,a_l$,  we are looking for
the joint probability
\begin{equation}
\bm{P}\equiv Prob\left( \bigcap_{i=1}^l \{\mathrm{b}_{N_i}^{k_i}={b}%
_{N_i}^{k_i}(\tau_{N_i}(k_i)) \}_{i=1}^l : \{\tau_{N_i}(k_i)\leq
a_i\}_{i=1}^l | \bm{b^0} \right)  \label{prob multi}
\end{equation}
for trajectories of particles $N_1,\dots,N_l$ to leave corresponding boundaries via points $%
\mathrm{b}_{N_i}^{k_i}$, located above (in terms of the corresponding
indices $\tau_i(k)$) the sites $b_{N_i}^{k_i}(a_i)$.

Similarly to the case of single $N$-boundary, our strategy is to represent
this probability distribution as a marginal of a signed determinantal
measure on a larger set. Suppose that  the set $(k_1, \dots,k_l)$ is of the
form
\begin{eqnarray}
k_{p_1}&=&\dots=k_{p_2-1}=1, \\
k_{p_2}&=&\dots=k_{p_3-1}=2,  \notag \\
&\vdots&  \notag \\
k_{p_m}&=&\dots=k_{l\quad\,\,}=m ,  \notag
\end{eqnarray}
which defines a collection of $m$ integers $1\equiv p_1\leq\dots\leq p_m\leq
l$. The quantity of interest can be given in terms of measure $%
\bm{\mathcal{M}}(\cdot)$ on point sets
\begin{equation}  \label{T bold}
\bm{\mathcal{T}}=\bigsqcup_{1\leq k \leq m }\mathcal{T}(k),
\end{equation}
where
\begin{equation}  \label{T(k)}
\mathcal{T}(k) = \bigsqcup_{N_{p_k}\leq n \leq N_{p_{k+1}}} \mathfrak{X}_n(k)
\end{equation}
and
\begin{equation}
\mathfrak{X}_n(k)=\{\tau^n_n(k),<\tau^n_{n-1}(k),<\dots,<\tau^n_{1}(k)\}
\subset \mathbbm{Z}_{\geq \tau_0(k)}.  \label{X_n(k)}
\end{equation}
 Then for $1\leq k\leq m$ the collections of integers $%
(\tau^n_n(k),\dots,\tau^n_{1}(k))$ define point configurations $%
(b_n^n(\tau^n_n(k)),\dots,b_1^n(\tau^n_{1}(k)))\subset \mathcal{B}_n^k$, which can be treated as coordinates of fictitious particles similarly to the single $N$-boundary case. The
pushforward of the measure $\bm{\mathcal{M}}$ under this mapping is a
measure on the collection of $N$-boundaries $\bm{\mathcal{B}}^1,\dots, \bm{\mathcal{B}}^m$. An explicit form of this
measure is as follows.
\begin{eqnarray}
\bm{\mathcal{M}}(\bm{\mathcal{T}})&=&\mathcal{N}^{-1}\det[%
\Psi^{N_1}_{N_1-l}(\tau^{N_1}_{k}(1))]_{1\leq k,l\leq N_1}  \label{mes1} \\
&\times&\prod\limits_{i=2}^m[\det[\mathcal{F}_{i,i-1}(\tau^{N_{p_i}}_l(i),%
\tau^{N_{p_i}}_k(i-1))]_{1\leq k,l\leq N_{p_i}}  \notag \\
&\times&\prod\limits_{n=N_{p_{i-1}}+1}^{N_{p_i}}\det[\phi_n(\tau^{n-1}_l(i),%
\tau^{n}_k(i))]_{1\leq k,l\leq n}],  \notag
\end{eqnarray}
where we define  functions
\begin{equation}
\begin{array}{l}
\Psi^{N_1}_{N_1-l}(\tau^{N_1}_{k}(1))= (-1)^{N_1-l}\widetilde{F}%
_{-N_1+l}(b_{N_1}^1(\tau^{N_1}_{k}(1))-b^0_{l}) \\
\mathcal{F}_{i,i-1}(\tau^{N_{p_i}}_l(i),\tau^{N_{p_i}}_k(i-1))= \pi^{%
\mathcal{B}_1^{i-1}}(b^{i-1}_1(\tau^{N_{p_i}}_{k}(i-1)))\\\hspace{4,3cm} \times \widetilde{F}%
_{0}(b^{i}_{1}(\tau^{N_{p_i}}_{l}(i))-b^{0,i}_{1}(\tau^{N_{p_i}}_k(i-1))),
\\
\phi_n(\tau^{n}_l(i),\tau^{n+1}_k(i))=\left\{{\
\begin{array}{c}
\pi^{\mathcal{B}_{n+1}^{i}}(b^i_{n+1}(\tau^{n+1}_k(i))),\,\,\,\,\,\,\,\,\,
\tau^{n+1}_k(i)\geq \tau^{n}_l(i), \\
0,\,\,\,\,\,\,\,\,\,\,\,\,\,\,\,\,\,\,\,\,\,\,\,\,\,\,\,\,\,\,\,\,\,\,\,\,\,%
\,\,\,\,\,\,\,\,\,\,\,\,\,\,\,\,\,\,\,\,\, \tau^{n+1}_k(i)<\tau^{n}_l(i),%
\end{array}%
} \right. \\
\phi_n(\tau^{n}_{n+1}(i),\tau^{n+1}_k(i))\equiv \pi^{\mathcal{B}%
_{n+1}^{i}}(b^i_{n+1}(\tau^{n+1}_k(i))),%
\end{array}
\label{not1}
\end{equation}
$\mathcal{N}$ is a normalization constant and for $1 \leq i \leq m$
\begin{equation}
b_{k}^{0,i}(\tau_k(i-1))=(x_k(i-1)+\frac{1-\pi^{\mathcal{B}%
_k^{i-1}}(b_k^{i-1})}{1-p}, t_k(i-1)+\frac{1-\pi^{\mathcal{B}%
_k^{i-1}}(b_k^{i-1})}{1-p}).  \label{b0}
\end{equation}
Lower cutoff $\tau_0(k)$ is separately chosen for every $N$-boundary $\bm{\mathcal{B}}_k$ in such a way, that any transitions to these points have zero measure. Specifically, as in the single $N$-boundary case considered in the previous subsection $\Psi^{N_1}_l(\tau)=0$  for any $\tau\leq \tau_0$. In addition $\mathcal{F}_{i,i-1}(\tau(i),\tau_0(i-1))=0$ for any $\tau(i)<\tau_0(i)$. Correspondingly,
the auxiliary variables $\tau_n^{n-1}(k)$ are fixed to $\tau_n^{n-1}(k)=\tau_0(k)$.

The relation between the correlation functions in TASEP and the measure $%
\bm{\mathcal{M}}$ is given by the following proposition.
\begin{proposition}
Consider the TASEP evolution starting with the initial conditions $\bm{b}%
^0=((x_1^0,t_1^0),\dots,(x_N^0,t_N^0))$, where $\bm{b}^0$ is admissible
configuration. Consider also $m$ mutually distinct $N$-boundaries, $%
\bm{b}^0  \prec \bm{\mathcal{B}}^1\prec\dots\prec%
\bm{\mathcal{B}}^m$. Let $(N_1, \dots,N_l)$ and $(k_1, \dots,k_l)$ be
collections of integers satisfying assumptions of the Theorem \ref{distrib}.
Then, the joint probability for space-time trajectories of  particles $%
N_1,\dots,N_l$ to go from $N$-boundaries  $\bm{\mathcal{B}}^{k_1},\dots,%
\bm{\mathcal{B}}^{k_l}$ via points $\mathrm{b}_{N_1}^{k_1},\dots, \mathrm{b}%
_{N_l}^{k_l}$, respectively, given the trajectories of all particles started
from the point configuration $\bm{b}^0$, is a marginal of the measure $%
\bm{\mathcal{M}}(\bm{\mathcal{T}})$ of the form
\begin{equation}
P\left(\bigcup_{i=1}^l \{\mathrm{b}_{N_i}^{k_i}=b_{N_i}^{k_i}(\tau_i(k))\} |%
\bm{b^0}\right) = \bm{\mathcal{M}}\left(\bm{\mathcal{T}}\supset
\bigcup_{i=1}^l\{\tau^{N_i}_1(k_i)=\tau_i(k),1\leq i\leq l\}\right).
\label{distrib_gen}
\end{equation}
\end{proposition}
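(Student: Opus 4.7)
The plan is to iterate the single-$N$-boundary representation of Proposition~\ref{GGF-measure} across the sequence $\bm{\mathcal{B}}^1 \prec \dots \prec \bm{\mathcal{B}}^m$, using the Markov property of the TASEP together with the non-return of trajectories past an $N$-boundary. First I would express the left-hand side of \eqref{distrib_gen} as a sum over admissible intermediate exit configurations of a product of GGFs: one GGF for the evolution from $\bm{b}^0$ up to $\bm{\mathcal{B}}^1$, and, for each $i=2,\dots,m$, a transition GGF from a suitable restart configuration on $\bm{\mathcal{B}}^{i-1}$ to the exit configuration on $\bm{\mathcal{B}}^i$. The restart configuration is precisely $\bm{b}^{0,i}$ of \eqref{b0}: exit from a vertical part of $\bm{\mathcal{B}}^{i-1}$ (where $\pi=p$) entails a compulsory forward step, so the effective origin is shifted by $(1,1)$, while exit from a horizontal part ($\pi=1$) requires no such shift, and the single formula in \eqref{b0} encodes both cases so that the standard GGF formula \eqref{GGFformula} applies uniformly.

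Next I would apply Proposition~\ref{GGF-measure} to each GGF block to expose the determinantal structure. The initial block yields the factor $\det[\Psi_{N_1-l}^{N_1}(\tau_k^{N_1}(1))]$ via Lemmas~\ref{sasamoto sum} and~\ref{tilde_D} directly. Each transition block, after using the contiguous relations of Lemma~\ref{cont rel} and collecting the exit weights $\pi^{\mathcal{B}_k^{i-1}}$ against the shifted $\widetilde{F}$-kernels, reorganises through the Karlin--McGregor--Lindstr\"om--Gessel--Viennot identity into a non-crossing determinant $\det[\mathcal{F}_{i,i-1}(\tau_l^{N_{p_i}}(i), \tau_k^{N_{p_i}}(i-1))]$ for the particles surviving into $\bm{\mathcal{B}}^i$, multiplied by the chain $\prod_n \det[\phi_n(\tau_l^{n-1}(i),\tau_k^{n}(i))]$ describing the free, mutually avoiding motion of the fictitious particles along the intermediate sub-boundaries $\mathcal{B}_n^i$. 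The reservoir convention $\tau_n^{n-1}(k)=\tau_0(k)$ and the cutoffs $\tau_0(k)$ ensure that only admissible sets contribute and absorb all overall factors into the normalisation $\mathcal{N}$ of \eqref{mes1}.

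The main technical obstacle is the bookkeeping at each transition between $N$-boundaries, in particular showing that the sub-boundaries of $\bm{\mathcal{B}}^{i-1}$ associated with particles no longer constrained past $\bm{\mathcal{B}}^{i-1}$ can be summed out to reproduce exactly the kernel $\mathcal{F}_{i,i-1}$ with origin $b_k^{0,i}$, with no spurious combinatorial factors and no admissibility constraints beyond those already built into \eqref{T(k)} and \eqref{X_n(k)}. This amounts to a careful iteration of the summation of Lemma~\ref{sasamoto sum} together with the telescoping of Lemma~\ref{cont rel} across consecutive $N$-boundaries, in the spirit of the multi-cascade procedure of \cite{BorodinFerrari}. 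Once this step is completed, the factors assemble into the right-hand side of \eqref{mes1}, and identifying the marginal over $\bigcup_i \{\tau_1^{N_i}(k_i)=\tau_i(k)\}$ with the joint exit probability on the left-hand side of \eqref{distrib_gen} follows from the same fictitious-particle-to-TASEP correspondence used in the single-boundary case.
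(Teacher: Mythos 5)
Your proposal follows essentially the same route as the paper's proof: the factorization of the exit probability into GGF blocks restarted from the shifted points \eqref{b0} (compulsory forward step from vertical parts), the block-wise application of Lemmas \ref{sasamoto sum} and \ref{tilde_D}, and the telescoping of Lemma \ref{cont rel} to sum out the unconstrained intermediate exit positions into the kernels $\mathcal{F}_{i,i-1}$ are exactly the steps \eqref{JD1}--\eqref{JD6} leading to the measure \eqref{mes1}. The only point you leave implicit is the reduction from $N$ to $N_{p_i}$ tracked particles at each transition (justified in the paper by the fact that trailing particles do not influence those ahead, so their trajectories sum to one), but this is subsumed in your appeal to the TASEP Markov property.
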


\begin{proof}
We first note that instead of the $N$-boundaries $\bm{\mathcal{B}}^{1},\dots
,\bm{\mathcal{B}}^{m}$ we can consider auxiliary $N_{p_{i}}$-boundaries $%
\bm{\mathcal{B}}_{N_{p_{i}}}^{i}=(\mathcal{B}_{1}^{i},\dots ,\mathcal{B}%
_{N_{p_{i}}}^{i})$ for $i=1,\dots ,m$. This is possible because in TASEP the
trajectories of particles $p_{i}+1,\dots ,N$ do not influence the
trajectories $1,\dots ,p_{i}+1$, and no point of the former group is fixed after
(and on) $\bm{\mathcal{B}}^{i}$ within the correlation function %
\eqref{distrib_gen}. Given trajectories $1,\dots ,p_{i}$, the sum over all
realizations of the trajectories $p_{i}+1,\dots ,N$ amounts to one.
Therefore, after $\bm{\mathcal{B}}^{i}$ has been passed we can drop the
former evolution and consider only the latter. Thus, we first consider the transition
of $N_{1}$ particles from $\bm{b}^0$ to $\bm{\mathcal{B}}%
_{N_{p_{1}}}^{1}$, then the transition of $N_{p_{2}}$ particles from $%
\bm{\mathcal{B}}_{N_{p_{2}}}^{1}$ to $\bm{\mathcal{B}}_{N_{p_{2}}}^{2}$,
e.t.c. (see Fig \ref{fig1}). The probability of each transition is given by
corresponding $N_{p_{i}}$-particle Green function. To ensure the
admissibility of particle configurations within the Green function and keep
its probabilistic meaning we require that after each transition the
particles do leave the boundaries. This suggests that we insert a compulsory
step forward at the points belonging to vertical parts of the boundaries. To
this end, we supply each  step of this kind by the factor of $p$ and define the
starting points for every transition to be of the form (\ref{b0}). Finally,
the probability of interest, $P(\{b_{N_{i}}^{k_{i}}(\tau
_{N_{i}}(k_{i}))\}_{i=1}^{l}|\bm{b^0})$, is the following:
\begin{equation}
\begin{array}{l}
P\left( \bigcup_{i=1}^{l}\{\mathrm{b}_{N_{i}}^{k_{i}}=b_{N_{i}}^{k_{i}}(\tau
_{i}(k))\}|\bm{b^0}\right)  \\
=\sum\limits_{\Delta }\prod\limits_{j=1}^{m}G\left( \{b_{i}^{j}(\tau
_{i}(j))\}_{i=1}^{N_{p_{j}}}|\{b_{i}^{0,j}(\tau
_{i}(j-1))\}_{i=1}^{N_{p_{j}}}\right) \times \mathbf{\pi }^{\bm{\mathcal{B}}%
^{j}}(\mathbf{b}^{j})%
\end{array}
\label{JD1}
\end{equation}%
where $\{b_{i}^{0,1}(\tau _{i}(-1))\}_{i=1}^{N}\equiv \bm{b}^{0}$ and
the summation is over domain
\begin{eqnarray*}
\Delta =\{\tau _{j}(i)\in \mathbb{Z}_{\geq \tau_0(i)},\tau _{j}(i)> \tau _{j-1}(i),1\leq
j\leq N_{p_{i}},1\leq i\leq m\}\\ \setminus \{\tau _{N_{i}}(k_{i}),1\leq i\leq
l\}.
\end{eqnarray*}%
Using the determinantal formula of the GGF (\ref{GGFformula}), we have
\begin{equation}
\begin{array}{l}
P(\{b_{N_{i}}^{k_{i}}(\tau _{N_{i}}(k_{i}))\}_{i=1}^{l}|\bm{b^0})\\=\sum\limits_{\Delta }\prod\limits_{k=1}^{m}\mathbf{\pi }^{\bm{\mathcal{B}}%
^{k}}(\mathbf{b}^k)\det \left[ F_{i-j}\left(
b_{j}^{k}(\tau _{j}(k))-b_{i}^{0,k}(\tau _{i}(k))\right) \right] _{1\leq
i,j\leq N_{p_{k}}}. %
\end{array}
\label{JD2}
\end{equation}%
In what follows we are going to introduce auxiliary variables $\tau
_{i}^{j}(k)$ in the same way as we did for the case of single $N$-boundary,
with the only difference that there is a separate set for every $N$%
-boundary, indexed by an extra argument $k$. To proceed further we define
several domains of summation in these variables:
\begin{equation}
D_{i}=\{\tau _{k}^{j}(i)\in \mathbb{Z}_{\geq \tau_0(i)},1\leq k\leq j\leq N_{p_{i}}|\tau
_{k}^{j}(i)> \tau _{k+1}^{j}(i)\}
\label{Dom1}
\end{equation}%
\begin{equation}
\hat{D}_{i}=\{\tau _{k}^{j}(i)\in \mathbb{Z}_{\geq \tau_0(i)},1\leq k\leq j\leq
N_{p_{i+1}}-1|\tau
_{k}^{j}(i)> \tau _{k+1}^{j}(i)\}  \label{Dom2}
\end{equation}%
\begin{equation}
D_{i}^{\ast }=D_{i}\setminus \{\tau _{1}^{j}(i),N_{p_{i}}\leq j\leq
N_{p_{i+1}}\};\,\,\,\,\,\,\,\hat{D}_{i}^{\ast }=D_{i}^{\ast }\setminus \hat{D%
}_{i}  \label{Dom3}
\end{equation}%
\begin{equation}
D=\cup _{i=1}^{m}\hat{D}_{i}^{\ast }  \label{Dom4}
\end{equation}%
where we set $p_{m+1}\equiv l+1$ and $\tau _{1}^{k}(i)\equiv \tau _{k}(i)$
for $k=1,\dots ,N_{p_{i}}$ and $i=1,\dots ,m$.

Now we apply Lemmas \ref{sasamoto sum} and \ref{tilde_D} to each determinant under the product
in r.h.s. of (\ref{JD2}) to represent it as a sum over the auxiliary
variables:
\begin{eqnarray}
&\det&[F_{k-l}(b_{l}^i(\tau_l(i))-b_{k}^{0,i}(\tau_k(i-1)))]_{1 \leq k,l\leq
N_{p_i}}  \notag \\
&=&\sum\limits_{D_i^{*}}(-1)^{{\left(\!%
\begin{matrix}
N_{p_i} \\
2%
\end{matrix}%
\!\right)}} \det\left[\widetilde{F}_{-N_{p_i}+k}\left(b_{N_{p_i}}^i(%
\tau^{N_{p_i}}_{l}(i))-b_{k}^{0,i} (\tau^{k}_1(i-1))\right)\right]_{1 \leq
k,l\leq N_{p_i}}  \notag \\
&\times&\prod\limits_{n=0}^{N_{p_i}-1}\det\left[\phi_n(\tau^{n}_k(i),%
\tau_l^{n+1}(i))\right]_{1\leq k,l \leq {n+1}}.  \label{sasformula}
\end{eqnarray}
\begin{figure}[tbp]\centering
\includegraphics[width=0.7\textwidth]{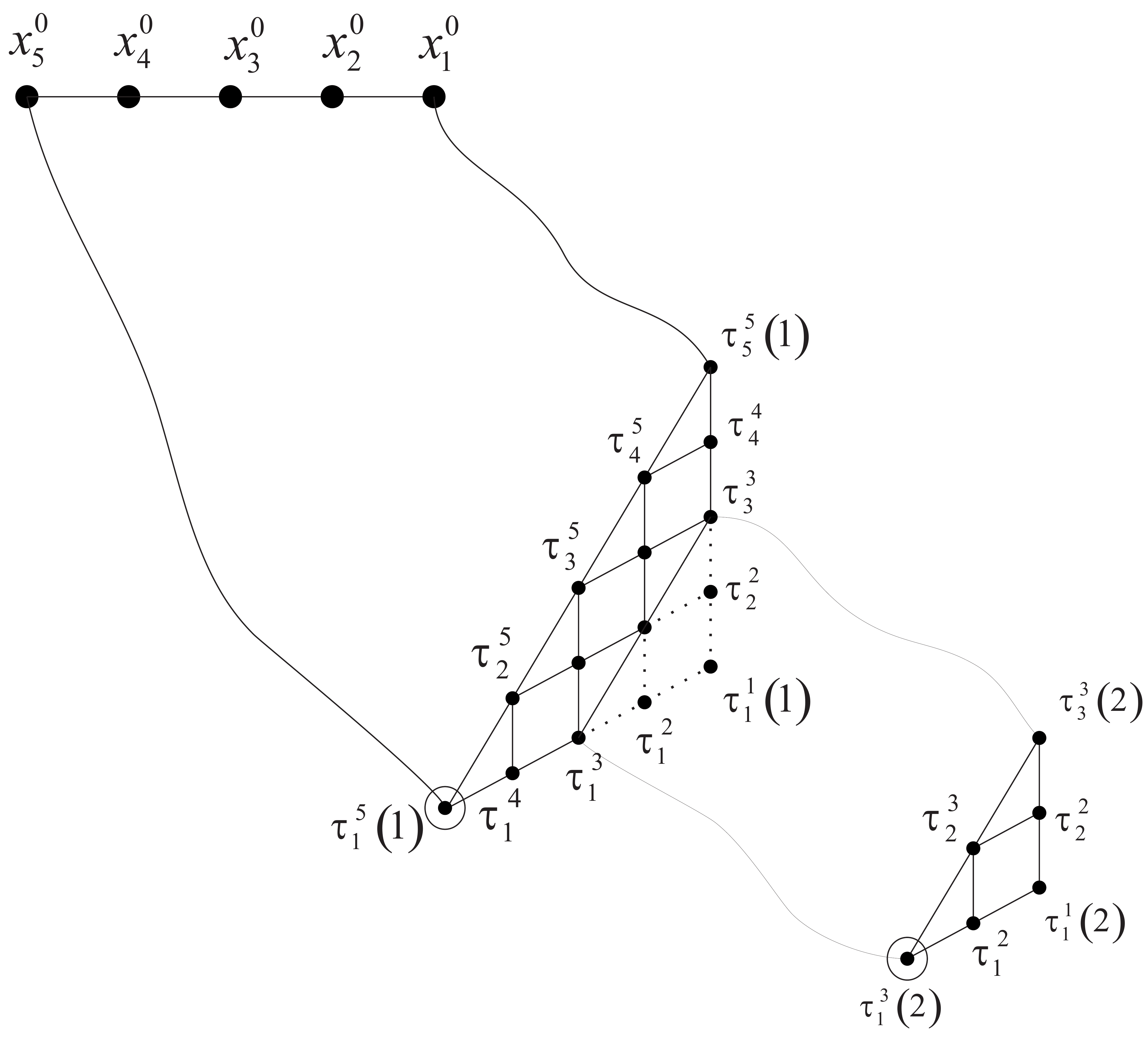}
\caption{Example of summation over the auxiliary variables $\tau^k_l(i)$ for two $N$-boundaries in the case $N_1=5
$ and $N_{p_2}=N_2=3$. The dotted lines show the variables, over  which the summation can be explicitly performed. The points shown by circles are fixed within the correlation function.
}
\label{fig2}
\end{figure}
The endpoints, $b_j^k(\tau)$, of part of the trajectories within a transition
between two $N_{p_i}$-boundaries are related to the starting points, $%
b_j^{0,k+1}(\tau)$, of the trajectories within the next transition by \eqref{b0}. The sums over
the range of these positions can be evaluated along with  a few sums in
auxiliary variables coupled to them (see Fig.\ref{fig2}):
\begin{equation}
\begin{array}{l}
\sum\limits_{\hat{D}_{i-1}}\det[\widetilde{F}_{-N_{p_i}+k}(b_{N_{p_i}}^i(%
\tau_{l}^{N_{p_i}}(i))-b_{k}^{0,i} (\tau^{k}_{1}(i-1)))]_{1 \leq k,l\leq
N_{p_i}} \\
\times\prod\limits_{n=0}^{N_{p_i}-1}\det[\phi_n(\tau^{n}_k(i-1),%
\tau_l^{n+1}(i-1))]_{1\leq k,l \leq n+1} \\
=(-1)^{{ \left(\!%
\begin{matrix}
N_{p_i} \\
2%
\end{matrix}%
\!\right)}}\det[\widetilde{F}_0(b_{1}^i(\tau^{N_{p_i}}_l(i))-b_{1}^{0,i}
(\tau^{N_{p_i}}_k(i-1)))]_{1 \leq k,l\leq N_{p_i}}.%
\end{array}
\label{JD4}
\end{equation}
The last identity can be proved by repeatedly applying formula
\begin{eqnarray}
&& \sum_{i=i_1}^{i_2}\pi^{\mathcal{B}}(b_k(i))\widetilde{F}_n(c_0-b_k(i))\\ \notag &&~~~~~~~=
\widetilde{F}_{n+1}(c_0-b_{k-1}(i_2+1))-\widetilde{F}%
_{n+1}(c_0-b_{k-1}(i_1)),
\end{eqnarray}
which is another form of Lemma \ref{cont rel}, where $c_0=(x_0,t_0)$ is a
pair of arbitrary constants.

The resulting expression for the joint distribution is
\begin{eqnarray}
&& \!\!\!\!\!\!\!\!\!\!\!\!\!\!\!\!\!\! P(\{b_{N_i}^{k_i}(\tau_{N_i}(k_i))\}_{i=1}^l |\bm{b^0}) \notag \\
& = &
const\sum\limits_{D}\det[\widetilde{F}_{-N_1+l}(b_{N_1}^1(%
\tau^{N_1}_{k}(1))-b^0_{l})_{1 \leq k,l \leq N_1}  \label{JD6} \\
& \times& \prod\limits_{i=2}^m\det[\widetilde{F}_{0}(b^{i}_{1}(%
\tau^{N_{p_i}}_{l}(i))-b^{0,i}_{1} (\tau^{N_{p_i}}_k(i-1)))]_{1 \leq k,l
\leq N_{p_i}}  \notag \\
& \times & \prod\limits_{n=N_{p_{i+1}}}^{N_{p_i}-1}\det[\phi_n(%
\tau^{n}_k(i),\tau_l^{n+1}(i))]_{1\leq k,l \leq n+1},  \notag
\end{eqnarray}
where we put $N_{p_{m+1}}\equiv 0$, $\hat{D}_m\equiv \emptyset$ and
therefore $\hat{D}^{*}_m\equiv D^{*}_m$.
\end{proof}


If we again appeal to the correspondence  with coordinates of  fictitious particles,
we see that the indices $\tau_n^k(i)$ define coordinates of particles at the boundary $\mathcal{B}_n^i$.
The functions $\widetilde{F}_0()$ under the product in \eqref{JD6} describe
the transitions between two subsequent $N$-boundaries, while  the functions $\phi(\tau^n,\tau^{n+1})$ are responsible for transitions between subsequent  $n$-th and $(n+1)$-th boundaries within the same $N$-boundary. Note that after we summed out part of  coordinates, some boundaries fell out of the consideration and only the  following remained: $$\mathcal{B}^{1}_{N_1},\dots, \mathcal{B}^{1}_{N_{p_2}},\mathcal{B}^{2}_{N_{p_2}},\dots,\mathcal{B}^{m-1}_{N_{p_{m}}},\mathcal{B}^{m}_{N_{p_m}},\dots,\mathcal{B}_{1}^{m}.$$
Therefore, it is  convenient to develop another enumeration, which counts only these boundaries.
As one can see, either upper index decreases or the lower one increases when going through the sequence.
Now we introduce new pair of indices, which distinguish  these two situation.
Each group within which the lower index does not change, such that for
some $i$ we have  $N_{p_{i}-1}>N_{p_i}=N_{p_i+1}=\dots=N_{p_i+c(n)-1}\equiv
n >N_{p_i+c(n)}$, is uniquely  characterized by number $n$, $1 \leq n \leq N$ and cardinality
$c(n)\in\{0,\dots,m+1\}$. This means that the particle number $n$
appears $c(n)$ times in the correlation function.
It is convenient to introduce a pair of indices $(n,a)$,
 where index $n$ is the number of particles arriving at given boundary and index $a$, $0\leq a \leq n-1$, labels the position of given  boundary within the group. Then,
instead of the notation $\tau_i^j(k) \in \mathcal{B}_j^k$ we use $\tau^{(n,a)}_i \in \mathcal{B}^{(n,a)}$, implying that
for each transition between two $N_p$-boundaries, in which the particle
number does not change, the second index $a$ increases by $1$, while in each
transition within single $N_p$-boundary, which effectively reduces the number of fictitious particles
 by one, index $n$ decreases by one. As a result, the r.h.s. of %
\eqref{mes1} can be rewritten in a more uniform way
\begin{eqnarray}
\mathrm{const}&\times&\det[\Psi^{N_1}_{N_1-l}(\tau^{(N_1,0)}_{k})]_{1\leq k,l\leq N_1}\notag
\\\label{mes2}
&\times&\prod\limits_{n=1}^{N_1}[\det[\phi_n(\tau^{(n-1,0)}_l,%
\tau^{(n,c(n))}_k)]_{1\leq k,l\leq n} \\
&\times&\prod\limits_{a=1}^{c(n)}\det[\mathcal{F}_{(n,a),(n,a-1)}(%
\tau^{(n,a)}_l,\tau^{(n,a-1)}_k)]_{1\leq k,l\leq n}.\notag
\end{eqnarray}

We are in position to apply Theorem 4.2 from \cite{BorodinFerrari}. It
states that the measure (\ref{mes2}) is determinantal and gives a recipe of
construction of the correlation kernel for given initial conditions.
Specifically, let us define function $\phi^{(n_1,a_1),(n_2,a_2)}$ of transition between the boundaries $\mathcal{B}^{(n_1,a_1)}$ and $\mathcal{B}^{(n_2,a_2)}$
\begin{eqnarray} \label{conv}
\phi^{(n_1,a_1),(n_2,a_2)}=\mathcal{F}_{(n_1,a_1),(n_1,0)}*\phi_{n_1+1} *%
\mathcal{F}_{(n_1+1,c(n_1+1)),(n_1+1,0)}\\ *\dots* \phi_{n_2}*\mathcal{F}%
_{(n_2,c(n_2)),(n_2,a_2)}, \notag
\end{eqnarray}
where we used a definition of convolution
\begin{equation}
(a*b)(x,y)=\sum_{z\in\mathbf{Z}_{\geq \tau_0^{(n,a)}}}a(x,z)b(z,y),
\end{equation}
 with the summation in $z$ performed over the points of the boundary $\mathcal{B}^{(n,a)}$, which is between the boundaries where the indices $x$ and $y$ live, and
\begin{equation}
\Psi^{(n,a)}_{n-l}=\phi^{((n,a),(N_1,0))}*\Psi^{(N_1,0)}_{N_1-l},
\label{bas}
\end{equation}
where $\Psi^{(N_1,0)}_{N_1-l}\equiv \Psi^{N_1}_{N_1-l}(\tau^{(N_1,0)})$. The argument $\tau^{(n,a)}$ of $\Psi^{(n,a)}_{n-l}$ lives on $\mathcal{B}^{(n,a)}$ due to the convolution with the function $\phi^{((n,a),(N_1,0))}$. For the cases when $c(n)=0$ we  formally define $\mathcal{F}_{(n,0),(n,0)}(x,y)=\delta_{x,y}$.

Consider matrix $M$ with matrix elements
\begin{equation}\label{M}
M_{k,l}=\left(\phi_{k}*\phi^{((k,c(k)),(N_1,0))}*\Psi^{(N_1,0)}_{N_1-l}\right)(\tau^{k-1}_{k}),
\end{equation}
where we can omit the dependence of $\tau^{k-1}_{k}(i)$  on the label $i$ of the  $N$-boundary $\bm{\mathcal{B}}_i$.
If the matrix M is invertible, the normalizing constant of the measure (\ref{mes2}) is equal
to $(\det M)^{-1}$. According to the Theorem 4.2 from \cite{BorodinFerrari}, the correlation kernel of (\ref{mes2}) is as follows
\begin{equation}
\begin{array}{l}
K(b^{(n_1,a_1)}(\tau_1),b^{(n_2,a_2)}(\tau_2))\\=
\sum\limits_{k=1}^{N_1}\sum\limits_{l=1}^{n_2}\Psi_{n_1-k}^{(n_1,a_1)}(\tau_1)
\left[M^{-1}\right]_{k,l}\left(\phi_{l}*\phi^{((l,c(l)),(n_2,a_2))}\right)(\tau^{l-1}_{l},\tau_2)\\
\hspace{6cm}
-\phi^{(n_1,a_1),(n_2,a_2)}(\tau_1,\tau_2).
\end{array}
\end{equation}
Furthermore, if the  matrix $M$ is upper triangular, the derivation of the kernel is significantly simplified.
In this case we construct the set of functions $\{\Phi_{k}^{(n,a)}\}$, which form a basis of the linear span of
the set
\begin{equation}
\{(\phi_1*\phi^{(1,c(1)),(n,a)})(\tau^0_1,\tau),\dots,
(\phi_{n}*\phi^{(n,c(n)),(n,a)})(\tau^{n-1}_{n},\tau)\},  \label{basispsi}
\end{equation}
fixed by orthogonality condition
\begin{equation}
\sum\limits_{\tau\in\mathbf{Z}}
\Psi_{i}^{(n,a)}(\tau)\Phi_{j}^{(n,a)}(\tau)=\delta_{i,j},\;\;\;\;\;%
\;i,j=0,...,n-1.
\end{equation}
Then the kernel takes the following form
\begin{equation}
K(b^{(n_1,a_1)}(\tau_1),b^{(n_2,a_2)}(\tau_2))=
\sum\limits_{k=1}^{n_2}\Psi_{n_1-k}^{(n_1,a_1)}(\tau_1)
\Phi_{n_2-k}^{(n_2,a_2)}(\tau_2)-\phi^{(n_1,a_1),(n_2,a_2)}.
\end{equation}
As a
result we have:

\begin{proposition}
Given densely packed initial conditions
\begin{equation}  \label{initial cond}
\bm{b}^0=(0,-1,\dots,-N+1)
\end{equation}
the correlation kernel of the determinantal measure (\ref{mes2}) has the
form
\begin{equation}
\begin{array}{l}
K(b^{(n_1,a_1)}_1;b^{(n_2,a_2)}_2)=\oint\limits_{\Gamma_1}\frac{dv}{2\pi iv}%
\oint\limits_{\Gamma_{0,v}}\frac{dw}{2\pi iw} \frac{\frac{(1-p\frac{w-1}{w}%
)^{t_1}(w-1)^{n_1}w^{x_1}}{(1-p\frac{v-1}{v})^{t_2}(v-1)^{n_2}v^{x_2}}} {%
(w-v)(1/v+1/\pi_2-1)} \\
\;\;\;\;\;\;\;\;\;\;\;\;\;\;\;\;\;\;\;\;\;\;\;\;\;\;\;\;\;
-1(n_2>n_1)\oint\limits_{\Gamma_{0}}\frac{dw}{2\pi iw^2} \frac{(1-p\frac{w-1%
}{w})^{t_1-t_2}w^{x_1-x_2}}{(w-1)^{n_2-n_1}(1/w+1/\pi_2-1)},%
\end{array}
\label{ker}
\end{equation}
where $b^{(n_i,a_i)}_i \equiv (x_i,t_i)\in \mathcal{B}^{(n_i,a_i)}$, $i=1,2$ and $\pi_2 \equiv\pi^{\bm{\mathcal{B}}%
^{k_2}}(b_{n_2}^{k_2})$.
\end{proposition}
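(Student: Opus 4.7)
The proof proceeds by applying the Borodin--Ferrari recipe recalled just above the proposition, specialized to the densely packed initial conditions \eqref{initial cond}. Concretely, the plan is to (i) obtain an explicit integral representation of $\Psi^{(n,a)}_{n-k}(\tau)$ by iteratively convolving the known $\Psi^{N_1}_{N_1-k}$ against the functions $\phi_n$ and $\mathcal{F}_{(n,a),(n,a-1)}$ that define \eqref{bas}, (ii) construct a candidate family $\Phi^{(n,a)}_j(\tau)$ in integral form and verify the biorthogonality, (iii) check that the matrix $M_{k,l}$ in \eqref{M} is upper triangular so that the simpler kernel formula applies, and finally (iv) compute the sum $\sum_k \Psi^{(n_1,a_1)}_{n_1-k}\Phi^{(n_2,a_2)}_{n_2-k}$ and the convolution $\phi^{(n_1,a_1),(n_2,a_2)}$ in closed form, recognizing them as the two terms of \eqref{ker}.

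For step (i), I would start from the integral \eqref{Psi_int} specialized to the step initial conditions (so that $(v-1)^k/v^{k}$ appears, as in \eqref{Phi_int step}), and then push the convolutions with $\phi_n$ and $\mathcal{F}_{(n,a),(n,a-1)}$ through the $w$-integral. Each such convolution is a sum of a geometric progression in the boundary index $\tau$: along a horizontal segment one sums a factor coming from the spatial shift in \eqref{phi_n_int}, and along a vertical segment one sums the corresponding factor from the time shift, with the $\mathcal{F}$ convolution producing the shift encoded in \eqref{b0}. In both cases the sum yields a rational factor $1/(1-w)$ or $1/(1-p(w-1)/w)$ that, when combined with the boundary-dependent weight $\pi$, is absorbed into the single denominator $(1-w(1-1/\pi))$ familiar from the single-boundary case. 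Concatenating the chain \eqref{conv} then leaves $\Psi^{(n,a)}_{n-k}$ in the same shape as \eqref{Phi_int step}, with only $t_n(\tau),x_n(\tau),n,k$ changing accordingly.

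For (ii), I would take $\Phi^{(n,a)}_j(\tau)$ to have the same analytic form as the $\Phi$ in the single-boundary case, with the same denominator $((1/\pi-1)v+1)$ carrying the information about the horizontal/vertical character of the point $b^{(n,a)}(\tau)$. Biorthogonality is then checked exactly as in the single-boundary lemma: interchange the sum in $\tau$ with the $v$- and $w$-integrals, perform the geometric sum over the horizontal and vertical portions of the boundary (whose truncation at $\tau_0$ eliminates spurious terms thanks to the choice made for $\Psi^{N_1}$), and pick up the residue at $v=w$ to obtain $\delta_{i,j}$. For (iii), upper-triangularity of $M$ reduces, after substituting the integral formulas, to a pole-counting argument: for densely packed initial data the integrand in $M_{k,l}$ has a pole at $w=0$ of multiplicity $l-k$ or lower when $k>l$, which forces the integral to vanish. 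This is the step that genuinely relies on the initial condition \eqref{initial cond}, and I expect it to be the most delicate point of the proof---everything else is structural.

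For (iv), the sum $\sum_{k=1}^{n_2}\Psi^{(n_1,a_1)}_{n_1-k}\Phi^{(n_2,a_2)}_{n_2-k}$ becomes a geometric series in $k$ inside the double contour integral; the series converges on the contours because $|v/w|<1$ on $\Gamma_1 \times \Gamma_{0,v}$, summing to $w/(w-v)$ and producing the first term of \eqref{ker}. Extending the upper limit of $k$ to infinity costs nothing, since the additional terms have no singularity inside the $v$-contour and vanish. The convolution $\phi^{(n_1,a_1),(n_2,a_2)}$, which is nonzero only when $n_2>n_1$ because each elementary $\phi_n$ strictly advances the particle-number index, is computed by the same geometric-sum chain as in (i) applied to \eqref{phi_n_int} instead of to $\Psi$, yielding exactly the subtracted term in \eqref{ker}. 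Combining the two gives the claimed formula.
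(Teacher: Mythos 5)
Your outline follows the same route as the paper (apply the Borodin--Ferrari theorem to the measure \eqref{mes2}, show $M$ is triangular, reuse the single-boundary biorthogonal pair \eqref{intpsi}--\eqref{intphi}, and compute $\phi^{(n_1,a_1),(n_2,a_2)}$ explicitly), but the step you yourself flag as the delicate one is justified by an argument that does not work as stated. After the convolutions, the matrix element \eqref{M} becomes $M_{k,l}=\hat{F}_{-k+l+1}\bigl(b^{(k-1,0)}(\tau^{k-1}_{k})-b^0_l\bigr)$, and for $k>l$ the exponent of $(1-w)$ is $k-l-1\geq 0$: the difference $l-k$ controls the absence of a singularity at $w=1$, not the multiplicity of the pole at $w=0$. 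The pole at $w=0$ is governed by the space-time exponents, i.e.\ by where the fictitious cutoff points $\tau^{k-1}_{k}=\tau_0$ sit relative to the initial positions, and it is killed by the choice of $\tau_0$ together with the vanishing property \eqref{F_k=0} of $\hat{F}_k$ for $k\le 0$ --- not by the densely packed data. So your attribution is backwards: triangularity of $M$ comes from the cutoff construction, while the initial condition \eqref{initial cond} is what the paper actually needs to perform the explicit orthogonalization (the construction of $\Phi^{(n,a)}_j$, inherited from the single-boundary lemma). You also need the diagonal to be nonzero (the paper gets $M_{l,l}=1$ from the residue at $w=1$); upper triangularity alone does not give invertibility.

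A second point your plan glosses over is the paper's replacement of $\widetilde{F}$ (contour $\Gamma_0$) by $\hat{F}$ (contour $\Gamma_{0,1}$), legitimate because the two coincide for negative index. This is not cosmetic: the convolution chains in \eqref{bas} and \eqref{conv} are infinite sums along the boundaries (extending towards $x=-\infty$, $t=+\infty$), and the enlarged contour is what gives uniform convergence so that your ``pushing the sums through the $w$-integral'' is allowed; moreover, once the index of $F$ becomes positive --- as it does in $\phi^{(n_1,a_1),(n_2,a_2)}$ when $n_2>n_1$ --- the pole at $w=1$ must lie inside the contour. With these repairs your segment-by-segment geometric summations collapse to the paper's clean identities $\phi_n*\hat{F}_k=-\hat{F}_{k+1}$ and $\mathcal{F}_{(n,a),(n,a-1)}*\hat{F}_k=\hat{F}_k$ (note also that no $\pi$-dependent denominator is ``absorbed'' into $\Psi^{(n,a)}$: it appears only in $\Phi^{(n,a)}_j$, evaluated at the observation point), and your convergence criterion for the $k$-sum should involve the ratio $w(v-1)/\bigl(v(w-1)\bigr)$ rather than $|v/w|<1$. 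Once these points are fixed, your argument reduces to the paper's proof.
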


\begin{proof}
We first introduce function $\hat{F}_n(b)$ defined by an integral representation, similar to the one of
$\widetilde{F}_n(b)$, with  different integration contour.
\begin{equation}  \label{hat{F}_n}
\hat{F}_n(x,t)= \frac{1}{2\pi \mathrm{i}}\oint_{\Gamma_{0,1}}\frac{dw}{w}
\left(q+\frac{p}{w}\right)^t(1-w)^{-n}w^x.
\end{equation}
One can check that this function has the following properties:
 \begin{eqnarray}
 \phi_n*\hat{F}_k&=&-\hat{F}_{k+1},\label{phi_n conv}\\
 \mathcal{F}_{(n,a),(n,a-1)}*\hat{F}_k&=&\hat{F}_k,\label{F_0 conv}
 \end{eqnarray}
and
\begin{equation} \label{F_k=0}
\hat{F}_k(x,t)=0, \quad \text{when}\quad k\leq 0 \quad \text{and} \quad -x>t.
\end{equation}
  Note that the choice of the contour $\Gamma_{0,1}$ ensures uniform convergence of convolution sums, which may extend to $x=-\infty$ and $t=\infty$. Therefore one can interchange summation and integration, from where the formulas  (\ref{phi_n conv},\ref{F_0 conv}) follow. The choice of the contour becomes relevant for $\hat{F}_k$ with positive $k$ as in this case there is a pole at $w=1$,  which must be placed inside the contour.
One also must keep in mind that the convolution with $\phi_n$  applied to the function
of a point at $\mathcal{B}^{(n,c(n))}$ results in a function of a point at $\mathcal{B}^{(n-1,0)}$, while the convolution with $\mathcal{F}_{(n,a),(n,a-1)}$ yields the transition from $\mathcal{B}^{(n,a-1)}$ to $\mathcal{B}^{(n,a)}$. Since $\hat{F}_k = \widetilde{F}_k$ for  $k<0$, $\Psi^{(N_1,0)}_{N_1-l}(\tau)=(-1)^{N_1-l}\hat{F}%
_{-N_1+l}(b^{(N_1,0)}(\tau)-b^0_{l})$, and hence, using (\ref{phi_n conv},\ref{F_0 conv}), we have
\begin{equation}
\Psi^{(n,a)}_{n-l}(\tau)=(-1)^{n-l}\hat{F}%
_{-n+l}(b^{(n,a)}(\tau)-b^0_{l}).
\end{equation}
Then, the elements of the matrix $M$ defined in \eqref{M} are
\begin{equation}
M_{k,l}=\phi_n* \Psi^{(k,c(k))}_{k-l}(\tau^{k-1}_k)=\hat{F}_{-k+l+1}(b^{(k-1,0)}(\tau_{k}^{k-1})-b^0_{l}).
\end{equation}
It follows from the definition of $\tau_k^{k-1}$ and formula \eqref{F_k=0} that $M_{k,l}=0$ when $k>l$ and
$M_{l,l}=1$. Therefore the matrix $M$ is invertible and upper triangular and we can
straightforwardly go to the orthogonalization procedure.

Substituting the initial conditions \eqref{initial cond} we obtain
\begin{equation}
\Psi_{k}^{(n,a)}(\tau)=\frac{1}{2\pi i}\oint\limits_{\Gamma_{0,1}}dw\left(1-p\frac{%
w-1}{w}\right)^{t^{(n,a)}}(w-1)^{k}w^{x^{(n,a)}+n-k-2},  \label{intpsi}
\end{equation}
where $(x^{(n,a)},t^{(n,a)})=b^{(n,a)}(\tau)$. It is not a surprise that this is the same function,
as the one obtained in the case of single $N$-boundary. Its argument lives on  single boundary $\mathcal{B}^{(n,a)}$, and the orthogonalization procedure referring to this boundary feels no difference with the previous subsection:
 \begin{equation}\label{intphi}
\Phi^{(n,a)}_j(\tau)=\frac{1}{2\pi i}\oint\limits_{\Gamma_1}dv\left(1-p\frac{v-1}{v}%
\right)^{-t^{(n,a)}} \frac{(v-1)^{-j-1}v^{j-n-x^{(n,a)}}}{(1/\pi^{\mathcal{B}%
}(b(\tau))-1)v+1}.
\end{equation}
Apparently, the double integral part of the kernel coincides with the one obtained in previous subsection as well.
We only need to derive  an explicit expression for $\phi^{(n_1,a_1),(n_2,a_2)}$. To this end we note that
we start the series of convolutions in \eqref{conv} with applying them  either to $\mathcal{F}_{(n_2,c(n_2)),(n_2,a_2)}$ or, if $c(n_2)=0$, to $\phi_{n_2}$. These functions can also be expressed in terms of $\hat{F}_k(x,t)$. Specifically, the expression for  $\phi_{n}$ obtained in the previous subsection is
\begin{equation}
\phi_{n}(\tau_1,\tau_2)=-\pi(b^{(n+1,c(n)),0}(\tau_2))\hat{F}_1(b^{(n,0)}(\tau_1)-b^{(n+1,c(n)),0}(\tau_2))
\end{equation}
and  from \eqref{not1}
\begin{equation}
\mathcal{F}_{(n_2,c(n_2)),(n_2,a_2)}(\tau_1,\tau_2)=\pi(b^{(n_2,a_2),0}) \hat{F}_0(b^{(n_2,a_2)}(\tau_1)-b^{(n_1,a_1),0}(\tau_2)).
\end{equation}
Therefore we can use formulas (\ref{phi_n conv},\ref{F_0 conv}) for convolutions,
which show that  the lower index of the function $F_k$ increases by one and the function itself picks up  a minus sign every time the number $n$ decreases by one. Finally we have
\begin{equation}
\phi^{(n_1,a_1),(n_2,a_2)}=(-1)^{n_1-n_2}\pi(b^{(n_2,a_2),0}_2)\hat{F}_{n_2-n_1}(b^{(n_2,c(n_2)}_2-b^{(n_2,a_2),0}_1),
\end{equation}
which again coincides with the expression obtained in single $N$-boundary case.
As a result we arrive at the kernel expression \eqref{ker}.
\end{proof}

Finally adopting the arguments from the end of the previous subsection for the collection of the boundaries $\mathcal{B}_{N_1}^{k_1},\dots,\mathcal{B}_{N_m}^{k_m}$ we arrive at the Fredholm determinant expression,
stated in the theorem \ref{distrib}. For the sake of mathematical rigor one would have to analyze the convergence of the series obtained (i.e. the properties of the operator $K$). Similar analysis however has been done in many papers and we address the reader to them \cite{Johansson,BFPS,BFP,BorodinFerrari}.

\section{Asymptotic analysis of the correlation kernel \label{Asymptotic analysis of the correlation kernel}}
Now we use the parametrization of the space-time plane discussed in subsection
\ref{subsec: scaling limit}. Below we evaluate the scaling limit of the correlation kernel,
suggesting that the arguments of the kernel are associated with a pair of boundaries and particle numbers
fixed by choosing two points at the path \eqref{theta-nu}-\eqref{theta nu constraint 2}  being at the distance of order of correlation length from each other.
\begin{lemma}
\label{scaling limit} Let us fix two points at the path  \eqref{theta-nu}-\eqref{theta nu constraint 2} in the $\theta\!-\!\nu$ plane
\begin{equation}
    r_i=r_0+u_i L^{-1/3},
\end{equation}
where $i=1,2$ and correspondingly set $n_i=[L\nu(r_i)]$ and  $\theta_i=\theta(r_i)$. Let us consider two boundaries $\mathcal{B}^{1}\equiv \mathcal{B}^{1}_1$ and $\mathcal{B}^{2}\equiv \mathcal{B}^{2}_1$ which approximate smooth curves according to
\eqref{b_1^k} with the parameters $\theta_1,\theta_2$ fixed above. Then $\eqref{b_n^k}$ define the curves approximated  by boundaries $\mathcal{B}^{1}_{n_1}$ and $\mathcal{B}^{2}_{n_2}$ corresponding to particles $n_1, n_2$, respectively.
For the coordinates $\tau_i$ of points on the boundary we also suggest the scaling
\begin{eqnarray}
\tau_i/L=\chi_i=\chi(r_i)+s_i L^{-2/3},\label{tau_i}
\end{eqnarray}%
with $u_{i},s_{i}$  fixed as $L\rightarrow \infty $ and the function $\chi(r)$ defined in the subsection \ref{subsec: scaling limit} as a deterministic part of the random variable $\chi$, obtained as a solution of the equation \eqref{hydro zeta} given $\theta(r)$ and $\nu(r)$.
Then
\begin{equation}
\lim_{L\rightarrow \infty }L^{1/3}K(b_{n_1}^{1}(\chi_1);b_{n_2}^{2}(\chi_2))\sim \kappa _{f} \Upsilon_{2} (\pi_2)
K_{\mathrm{Airy}_{2}}(\kappa_{c}u_{1},\kappa _{f}s_{1};\kappa
_{c}u_{2},\kappa _{f}s_{2}),
\end{equation}%
where in the r.h.s. we have the extended Airy kernel \eqref{airy kernel},
$\kappa_c$ and $\kappa_f$ are the  model-dependent constants (\ref{kappa_h},\ref{kappa_t})
and
\begin{equation}\label{upsilon}
\Upsilon_{2} (\pi_2)=\frac{2}{\left(\sqrt{p \gamma }(1+\zeta ^{(0,1)}(r_0)) +\sqrt{\omega }(1-\zeta ^{(0,1)}(r_0)) \right)} \!\times\! \left\{
\begin{matrix}
\sqrt{\omega } ,&\, \pi^{\mathcal{B}}(b(\tau_2))=1\\
 \sqrt{p \gamma  } ,&\,  \pi^{\mathcal{B}}(b(\tau_2))=p
\end{matrix}
\right.
\end{equation}
The sign $"\sim "$ means the equality up to the matrix conjugation, which does not affect matrix minors.
\end{lemma}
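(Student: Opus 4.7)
The plan is to perform a standard saddle-point (steepest descent) analysis on the two integral contributions to the kernel, separately for the double-integral ``main'' piece and the single-integral ``correction'' piece, and match the result to the extended Airy$_2$ kernel. First I would rewrite the double-integral part
\[
\oint_{\Gamma_1}\!\frac{dv}{2\pi\mathrm{i}v}\oint_{\Gamma_{0,v}}\!\frac{dw}{2\pi\mathrm{i}w}\,
\frac{e^{L\,G(w;\chi_1,\theta_1,\nu_1)}}{e^{L\,G(v;\chi_2,\theta_2,\nu_2)}}\,
\frac{1}{(w-v)(1/v+1/\pi_2-1)},
\]
where, using \eqref{b_n^k} and \eqref{tau_i}, the exponent is
\[
G(w;\chi,\theta,\nu)=\tfrac12(\zeta(\chi,\theta)-\chi)\log w+\tfrac12(\zeta(\chi,\theta)+\chi)\log\!\bigl(1-p\tfrac{w-1}{w}\bigr)+\nu\log(w-1),
\]
so that the $O(L)$ part of the integrand is $\exp\{L[G(w;\chi(r_0),\theta(r_0),\nu(r_0))-G(v;\ldots)]\}$ up to subleading corrections coming from the $u_i, s_i$ shifts.

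Next I would locate the critical point by solving $\partial_w G=0$ and $\partial_w^2 G=0$ simultaneously; this coalescence of two saddles is exactly what produces the $L^{-1/3}$ KPZ scaling. A direct computation shows that the double critical point is real, say $w_c=v_c$, and that the condition of its existence reproduces the deterministic hydrodynamic relation \eqref{hydro zeta}, confirming that the scaling is self-consistent with our choice $(\chi(r),\theta(r),\nu(r))$ on the path. Expanding around $w_c$,
\[
G(w_c+WL^{-1/3};\chi(r_0),\theta(r_0),\nu(r_0))-G(w_c;\ldots)\sim -\tfrac{\kappa^3}{3}W^3 L^{-1}+\text{lower order},
\]
I would identify $\kappa$ from $\partial_w^3 G|_{w=w_c}$. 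The linear-in-$L^{-1/3}$ contributions from the shifts $u_i,s_i$ in $\theta,\nu,\chi$, obtained via the chain rule using $\partial_r\theta$, $\partial_r\nu$ and the derivatives $\zeta^{(1,0)}$, $\zeta^{(0,1)}$, give the coefficients $\kappa_c u_i$ and $\kappa_f s_i$ multiplying $W$ — this is where the explicit formulas \eqref{kappa_h}, \eqref{kappa_t} emerge, and checking that these coefficients assemble into precisely those expressions is the main bookkeeping obstacle.

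Having established the exponent structure, I would deform $\Gamma_1$ and $\Gamma_{0,v}$ to pass through $w_c=v_c$ along the two steepest-descent rays (for the Airy$_2$ kernel these are the contours emanating at angles $\pm\pi/3$ for $W$ and $\pm2\pi/3$ for $V$, respectively), then change variables $w=w_c+WL^{-1/3}$, $v=w_c+VL^{-1/3}$. The Jacobian produces $L^{-2/3}$; the factor $1/(w-v)$ produces $L^{1/3}/(W-V)$; the factor $1/w\cdot 1/v$ evaluates at $w_c$; and the prefactor $1/(1/v+1/\pi_2-1)$ gives, depending on whether $\pi_2=1$ or $p$, exactly the two values of $\Upsilon_2(\pi_2)$ displayed in \eqref{upsilon} after simplification using the hydrodynamic relation at $w_c$. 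Multiplication by the outer $L^{1/3}$ leaves a finite limit, and the resulting double contour integral is, by the standard identity
\[
\int_0^\infty\!d\lambda\,e^{\lambda(\xi_2-\xi_1)}\mathrm{Ai}(\lambda+\zeta_1)\mathrm{Ai}(\lambda+\zeta_2)=\frac{1}{(2\pi\mathrm{i})^2}\!\!\int\!\!\int\frac{e^{W^3/3-W\zeta_1}}{e^{V^3/3-V\zeta_2}}\frac{dW\,dV}{W-V},
\]
equal to $K_{\mathrm{Airy}_2}$ with arguments $(\kappa_c u_i,\kappa_f s_i)$.

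Finally, for the single-integral correction present when $n_2>n_1$, I would apply the same saddle-point strategy to the unique saddle $w_c$; because $n_2-n_1\sim L^{2/3}u$ is subleading, the exponent has a single simple critical point rather than a coalescence, and the standard Gaussian evaluation together with a residue pick-up produces exactly the $\xi_2>\xi_1$ branch of $K_{\mathrm{Airy}_2}$ with the same prefactor $\kappa_f\Upsilon_2(\pi_2)$. The matrix-conjugation freedom absorbs the ratio of $w_c$-dependent exponential prefactors that differ between the $i=1$ and $i=2$ arguments, justifying the ``$\sim$'' in the statement. The principal technical difficulty throughout is keeping careful track of how the derivatives $\zeta^{(1,0)}(r_0)$ and $\zeta^{(0,1)}(r_0)$ enter through the chain rule, and verifying that the correction term $o(L^\sigma)$ in \eqref{b_n^k} is genuinely negligible uniformly in the deformed contours; this is routine but lengthy, and follows the asymptotic analyses in \cite{BorodinFerrari,BFS,PovPrS}.
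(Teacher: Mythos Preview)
Your proposal is correct and would lead to the same conclusion, but it proceeds by a genuinely different route from the paper. You attack the double contour integral directly: keep the factor $1/(w-v)$, deform both contours through the common double critical point $w_c=v_c$, rescale $w=w_c+WL^{-1/3}$, $v=w_c+VL^{-1/3}$, and recognise the resulting double integral as the standard contour representation of $K_{\mathrm{Airy}_2}$. The paper instead undoes the geometric summation that produced the double integral, writing $K_0=\sum_{k\geq1}\Psi_{n_1-k}^{n_1}(\tau_1)\Phi_{n_2-k}^{n_2}(\tau_2)$, applies the saddle-point method to each single integral $\Psi$ and $\Phi$ separately (yielding two Airy functions), and then converts the $k$-sum into the $\lambda$-integral of \eqref{airy kernel}. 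Your approach is more compact and avoids introducing the auxiliary summation variable, at the cost of having to manage the $1/(w-v)$ singularity during the contour deformation; the paper's approach sidesteps that singularity entirely but requires tracking the extra expansion $h(w_i)\approx h(w_0)-h'(w_0)w_0'(r_0)u_iL^{-1/3}$ to see how the $k$-sum produces the $e^{\lambda\kappa_c(u_2-u_1)}$ factor. One small inaccuracy: for the single-integral correction there is no ``residue pick-up''; it is a pure Gaussian saddle, and the paper then invokes Johansson's identity expressing the Gaussian as $\int_{-\infty}^{\infty}e^{-\lambda(\tau-\tau')}\mathrm{Ai}(\xi+\lambda)\mathrm{Ai}(\xi'+\lambda)\,d\lambda$ to obtain the $\xi_2>\xi_1$ branch of the extended Airy kernel.
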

\begin{proof}
We  introduce the following functions
\begin{eqnarray}
f(w;\theta,\chi) &=&\frac{\zeta(\theta,\chi)+\chi}{2}\ln (q+p/w) \notag \\&+& \nu(r) \ln \left(\frac{w-1}{w}\right)   +\frac{ \zeta(\theta,\chi)-\chi }{%
2}\ln (w), \\
h(w) &=&\ln (1-1/w).
\end{eqnarray}%
To analyze the double integral part of the kernel $K_{0}$, we represent  it
as a sum
\begin{equation}\label{K_0}
  K_{0}(b_{n_1}^{1}(\chi_1);b_{n_1}^{2}(\chi_2))=\sum_{k=1}^{\infty }\Psi _{n_{1}-k}^{n_{1},\mathcal{B}^1}(\chi _{1})\Phi
_{n_{2}-k}^{n_{2},\mathcal{B}^2}(\chi _{2})
\end{equation}
 where
the functions $\Psi_{j}^{n,\mathcal{B}}$, $\Phi_{j}^{n,\mathcal{B}}$
are given in \eqref{intpsi},\eqref{intphi}. Note that,  instead of the index in the superscript characterizing the number of the boundary, we placed
the notation for the boundary explicitly, to reflect the dependence of the functions on the form of this boundary and not of the others (here $\mathcal{B}$  means the first particle boundary, while  the index $n$ shows that we have to shift it $n-1$ steps back in horizontal direction). In terms of above notations the integrals entering the summands become
\begin{eqnarray}
\Psi _{n_{1}-k}^{n_{1},\mathcal{B}^{1}}(\tau _{1}) &=&\oint_{\Gamma _{0,1}}\frac{dw}{%
2\pi \mathrm{i}w^{2}}e^{Lf(w;\,\theta_{1},\chi_1)+L^{1/3}zh(w) }, \\
\Phi _{n_{2}-k}^{n_{2},\mathcal{B}^{2}}(\tau _{2}) &=&\oint_{\Gamma _{1}}\frac{pdw}{%
2\pi \mathrm{i}}\frac{e^{-Lf(w;\,\theta_{2},\chi_2)-L^{1/3}zh(w))}}{%
(w-1)((1/\pi_2-1)v+1)},
\end{eqnarray}%
where $z=kL^{-1/3}$ and $\pi_2\equiv\pi^{\mathcal{B}^2}(b(\tau_2))$. To obtain the asymptotics of $K_{0}$, we first
evaluate the integrals for  $\Psi_{n_1-k}^{n_1,\mathcal{B}^1}(\tau_1)$ and $\Phi_{n_2-k}^{n_2,\mathcal{B}^2}(\tau_2)$ asymptotically as $L\rightarrow \infty $ and then
perform the summation.

Taking into account \eqref{tau_i}  one can approximate the function $f(w;\,\theta_{i},\chi_i)$ up to the terms of constant order by
\begin{equation}
    f(w;\,\theta_{i},\chi_i)=f_{r_i}(w)+L^{-2/3}s_i g(w),
\end{equation}
where we introduce the notations
\begin{equation}
    f_r(w)\equiv f(w;\,\theta(r),\chi(r))
\end{equation}
and
\begin{equation}
   g_r(w)=\frac{1}{2}\left[\left(\frac{\partial \zeta(r)}{\partial \chi}+1\right)\ln(q+p/w)+\left(\frac{\partial \zeta(r)}{\partial \chi}-1\right)\ln w\right],
\end{equation}
where $\zeta(r)\equiv \zeta(\theta(r),\chi(r))$.
The position of the double critical point of function $f_{r}(w)$, which
satisfies $f_{r }^{\prime }(w_{0})=f_{r }^{\prime \prime }(w_{0})=0
$ is
\begin{equation}
w_0(r)= 1+\sqrt{\frac{2\nu(r)}{q (\zeta(r) -\chi(r)  )}}.
\label{crit point}
\end{equation}%
Instead of the exponentiated functions we use their Taylor expansion at the
points $w_{i}\equiv w_{0}(r_{i})$, with $i=1,2$ for $\Psi
_{n_{1}-k}^{n_{1}}(\chi _{1})$ and $\Phi _{n_{2}-k}^{n_{2}}(\chi _{2})$
respectively.
\begin{eqnarray}
f_{r_{i}}(w) &\approx &f_{r _{i}}(w_{i})+\frac{1}{6}f_{r_0
}^{^{\prime \prime \prime }}(w_{0})(w-w_{i})^{3} \\
g_{r_i}(w) &\approx &g_{r_i}(w_{i})+g_{r_0}^{\prime }(w_{0})(w-w_{i}) \\
h(w) &\approx &h(w_{i})+h^{\prime }(w_{0})(w-w_{i})
\end{eqnarray}%
where in the coefficients of $w$-dependent terms we, without loss of accuracy, replace
$r_{i}$ and $w_{i}$ by $r_0 $ and $w_{0} \equiv w_{0}(r_0)$ respectively. We
substitute these expansion into the integrals, and choose steep descent contours such that
they approach the horizontal axis at the points $w_{1}$ and $w_{2}$ at the angles $\pm \pi
/3$ and $\pm 2\pi /3$ respectively. Changing the integration variables to $\xi
_{i}=(w-w_{i})L^{1/3}f^{\prime \prime \prime }(w_{0})/2$ we arrive at the integrals
defining the Airy functions:
\begin{equation}
\mathrm{Ai}(a)=\int_{\infty e^{-\mathrm{i}\pi /3}}^{\infty e^{\mathrm{i}\pi
/3}}\frac{dx}{2\pi \mathrm{i}}\exp \left( \frac{x^{3}}{3}-xa\right) .
\end{equation}%
As a result we have
\begin{eqnarray}
\Psi _{n_{1}-k}^{n_{1},\mathcal{B}^1}(\tau _{1}) &\approx &\frac{\exp\left({Lf_{r
_{1}}(w_{1})+L^{1/3}(s_{1}g_{r_1}(w_{1})+zh(w_{1}))}\right)}{w_{0}^{2}(Lf_{r_0
}^{^{\prime \prime \prime }}(w_{0})/2)^{1/3}}  \label{Psi_airy} \\
&\times &\mathrm{Ai}\left( \frac{zh^{\prime }(w_{0})-s_{1}g^{\prime }_{r_0}(w_{0})%
}{(f_{r_0}^{^{\prime \prime \prime }}(w_{0})/2)^{1/3}}\right)   \notag \\
\Phi _{n_{2}-k}^{n_{2},\mathcal{B}^2}(\tau _{2}) &\approx &\frac{ \exp\left({-Lf_{r
_{2}}(w_{2})-L^{1/3}(s_{2}g_{r_2}(w_{2})+zh(w_{2}))}\right)}{(w_{0}-1)({(1/\pi^{\mathcal{B}%
}(b(\chi_2))-1)w_0+1})(Lf_{r_0
}^{^{\prime \prime \prime }}(w_{0})/2)^{1/3}} \\
&\times &\mathrm{Ai}\left( \frac{zh^{\prime }(w_{0})-s_{2}g^{\prime }(w_{0})%
}{(f_{r_0}^{^{\prime \prime \prime }}(w_{0})/2)^{1/3}}\right)   \notag
\end{eqnarray}%
The summation over $k$ can be replaced by an integration over $z$. To
perform the summations we use one more expansion:
\begin{equation}
h(w_{i})=h(w_{0})-h^{\prime }(w_{0})w_{0}^{\prime }(r_0 )u_{i}L^{-1/3}+O(L^{-2/3}).
\end{equation}%
Finally, taking into account that $h'(w_0)=1/(w_0(w_0-1))$, we obtain
\begin{eqnarray}
\sum_{k=1}^{\infty }&&\!\!\!\!\Psi _{n_{1}-k}^{n_{1},\mathcal{B}^1}(\tau _{1})\Phi
_{n_{2}-k}^{n_{2},\mathcal{B}^2}(\tau _{2})   \label{Airy sum} \\
&\approx& \Upsilon_{2} (\pi_2) L^{-1/3}  \kappa_{f}e^{({L(f_{\nu _{1}}(w_{1})-f_{\nu
_{2}}(w_{2}))+L^{1/3}(s_{1}g(w_{1})-s_{2}g(w_{2}))})}  \notag \\
&\times& \int_{0}^{\infty }d\lambda e^{\lambda \kappa%
_{c}(u_{2}-u_{1})}\mathrm{Ai}\left( \lambda +\kappa%
_{f}s_{1}\right) \mathrm{Ai}\left( \lambda +\kappa%
_{f}s_{2}\right) ,  \notag
\end{eqnarray}%
where
\begin{eqnarray}
\kappa_{c} &=&\frac{w_{0}^{\prime }(\theta )f_{r_0
}^{^{\prime \prime \prime }}(w_{0})^{1/3}}{2^{1/3}} \\
\kappa_{f} &=&-\frac{2^{1/3}g^{\prime }(w_{0})}{f_{r_0}^{^{\prime \prime
\prime }}(w_{0})^{1/3}}
\end{eqnarray}
and
\begin{equation}
    \Upsilon_{2} (\pi_2)= - \left[g'(w_0)w_0 ({(1/\pi^{\mathcal{B}%
}(b(\tau_2))-1)w_0+1})\right]^{-1}
\end{equation}

Substituting
\begin{eqnarray}
w'_0(r_0)&=&\frac{\sqrt{p} }{2 \gamma q   \left(\sqrt{p \gamma }(1+\zeta ^{(0,1)}(r_0)) +\sqrt{\omega }(1-\zeta ^{(0,1)}(r_0)) \right)} \\
&\times&\left[q \nu '\left(r_0\right) \left(\zeta \left(r_0\right)-\chi \left(r_0\right) \zeta ^{(0,1)}\left(r_0\right)\right) \left(\sqrt{p \omega } -\sqrt{\gamma }\right)^{-1}
\right.\notag \\
&&\hspace{3cm}-\left.\theta '\left(r_0\right)  \zeta ^{(1,0)}\left(r_0\right) \left(\sqrt{p \omega }-\sqrt{\gamma } \right)
\right]\notag \\
f'''_{r_0}(w_0)&=&\frac{2 q^3\gamma ^{5/2}}{p \sqrt{\omega }\left(\sqrt{\omega }-\sqrt{p \gamma } \right) \left(\sqrt{p \omega } -\sqrt{\gamma }\right) }\\
g'(w_0)&=&\frac{q\sqrt{\gamma }
\left(\sqrt{p \gamma }(1+\zeta ^{(0,1)}(r_0)) +\sqrt{\omega }(1-\zeta ^{(0,1)}(r_0)) \right)
}{2\sqrt{p \omega }\left(\sqrt{p \gamma } - \sqrt{\omega }\right)}
\end{eqnarray}
we have (\ref{kappa_h},\ref{kappa_t},\ref{upsilon}).

Let us now evaluate the second part of the kernel given by the single
integral, which can be written as
\begin{equation*}
I=\oint \frac{dz}{2\pi \mathrm{i}}\frac{\exp\left[L(f_{r _{1}}(z)-f_{r
_{2}}(z))+L^{1/3}(s_{1}g_{r_1}(z)-s_{2}g_{r_2})\right]}{z({(1/\pi^{\mathcal{B}%
}(b(\tau))-1)z+1})}
\end{equation*}%
The critical point of the exponentiated function is found to be
\begin{equation}
z_{c}=w_{0}\equiv w(r_0 )
\end{equation}%
Then using the Taylor expansions we show that
\begin{eqnarray}
f_{r_{1}}(w)-f_{r_{2}}(w) &\approx &f_{r_{1}}(w_{1})-f_{r_{2}}(w_{2}) \\
&+&\frac{f_{r_0}^{\prime \prime \prime }(w_{0})(u_{2}^{3}-u_{1}^{3})w_{0}^{\prime
}(r_0)^{3}}{6L}  \notag \\
&+&\frac{f_{r_0}^{\prime \prime \prime }(w_{0})(u_{1}^{2}-u_{2}^{2})w_{0}^{\prime
}(r_0)^{2}}{2L^{2/3}}(z-w_{0})  \notag \\
&+&\frac{f_{r_0}^{\prime \prime \prime }(w_{0})(u_{2}-u_{1})w_{0}^{\prime }(r_0 )}{%
2L^{1/3}}(z-w_{0})^{2}  \notag
\end{eqnarray}%
and
\begin{eqnarray}
(s_{1}g_{r_1}(z)-s_{2}g_{r_2}(z)) &\approx &s_{1}g_{r_1}(w_{1})-s_{2}g_{r_2}(w_{2}) \\
&+&(u_{2}s_{2}-u_{1}s_{1})L^{-1/3}g^{\prime }_{r_0}(w_{0})w^{\prime }(\nu )  \notag
\\
&+&(s_{1}-s_{2})g^{\prime }_{r_0}(w_{0})(z-w_{0})  \notag
\end{eqnarray}%
Substituting these expansions into the integral and integrating along the
vertical line crossing the horizontal axis at $w_{0}$ we obtain:
\begin{eqnarray}
I &=&L^{-1/3} \Upsilon_{2} (\pi_2) \kappa_{f}e^{L(f_{c _{1}}(w_{1})-f_{c
_{2}}(w_{2}))+L^{1/3}(s_{1}g(w_{1})-s_{2}g(w_{2}))} \\
&&\frac{e^{\frac{\kappa_{c}^{3}(u_{2}^{3}-u_{1}^{3})}{3}-\frac{(%
\kappa_{c}^{2}(u_{1}^{2}-u_{2}^{2})-\kappa%
_{f}(s_{1}-s_{2}))^{2}}{4\kappa_{c}(u_{2}-u_{1})}-
\kappa_{c}\kappa_{f}(s_{2}u_{2}-s_{1}u_{1})}}{\sqrt{4\pi
\kappa_{c}(u_{2}-u_{1})}}.
\end{eqnarray}%
One can see that the first line of this expression exactly coincides with
the factor before the integral in (\ref{Airy sum}). Furthermore, its
exponential part does not change the value of the determinants, so that it
can be omitted. The second part can be rewritten using the formula from \cite%
{johansson2}
\begin{eqnarray}
&&\frac{1}{\sqrt{4\pi (\tau ^{\prime }-\tau )}}e^{-(\xi -\xi ^{\prime
})^{2}/4(\tau ^{\prime }-\tau )-(\tau ^{\prime }-\tau )(\xi +\xi ^{\prime
})/2+(\tau ^{\prime }-\tau )^{3}/12} \\
&=&\int_{-\infty }^{\infty }e^{-\lambda (\tau -\tau ^{\prime })}\mathrm{Ai}%
(\xi +\lambda )\mathrm{Ai}(\xi ^{\prime }+\lambda )d\lambda ,
\end{eqnarray}%
where we should set $\tau =\kappa_{c}u_{1},\tau ^{\prime }=%
\kappa_{c}u_{2},\xi =\kappa_{f}s_{1},\xi ^{\prime }=\kappa_{f}s_{2}$. As a result we
obtain the Airy extended kernel
\begin{equation*}
L^{-1/3}\Upsilon_{2} (\pi_2)\kappa_{f}K_{\mathrm{Airy}_{2}}(\kappa%
_{c}u_{1},\kappa_{f}s_{1};\kappa_{c}u_{2},%
\kappa_{f}s_{2})
\end{equation*}
\end{proof}

To finish  the proof of the theorem \ref{Airy_2} one has to prove the uniform convergence of the kernel  in bounded sets  and that the part of the sum \eqref{fredholmsum} coming from the complement to these sets is negligible while the bound is uniform in $L$. For similar proofs we address the reader to \cite{Johansson,BFPS,BFP,BorodinFerrari}. After that interchange of the sum and the limit is allowed. However we note that the limiting expression for the kernel still depends on which site $b^{\mathcal{B}^2}(\tau_2)$ is  via the value of $\Upsilon_{2} (\pi_2)$, which in turn depends on $\pi^{\mathcal{B}^2}(b(\tau_2))$.
To go from  the sums \eqref{fredholmsum} to integrals we note that within every summation in an index $\tau$ running through a boundary $\mathcal{B}$ the function $\Upsilon_{\mathcal{B}} (\pi^{\mathcal{B}}(b(\tau)))$ will enter linearly as a coefficient.
It turns out that  this coefficient amounts exactly to unit. This happens because the boundaries defined in  (\ref{b_1^k}) are locally straight and
the $\tau$-dependent coefficient is averaged out on a smaller scale than the fluctuational one, which affects the resulting integral.
The following lemma shows how the averaging works. After we apply it the statement of the theorem 2.2 follows.

\begin{lemma}Suppose that $b(\tau)\in \mathcal{B}$, $\tau = \tau_0+[s L^{1/3}]$,  $f_{\lim}(s)$ is a differentiable function and the following limit
\begin{equation}\label{limit}
    \lim_{L\to \infty} L^{1/3}f(b(\tau))=\Upsilon_\mathcal{B}(\pi)  f_{\lim}(s)
\end{equation}
holds  uniformly in bounded sets $s\in [a,b]$.
Then, if the boundary is close to a continuous differentiable path  in a sense (\ref{b_1^k},\ref{b_n^k}), we have
\begin{equation}\label{sum limit}
  \lim_{L\to \infty} \sum_{\tau=\tau_0+[a L^{1/3}]}^{\tau_0+[b L^{1/3}]}f(b(\tau))=\int_a^b f_{\lim} (s)ds
\end{equation}
\end{lemma}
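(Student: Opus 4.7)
The plan is a mesoscopic Riemann-sum argument in which the local averaging of the weight $\Upsilon_{\mathcal{B}}(\pi)$ produces exactly the factor $1$ that is missing from the pointwise limit (\ref{limit}). Partition $[a,b]$ into $N$ equal subintervals $[s_{k-1},s_k]$ of length $\Delta s=(b-a)/N$ and let $I_k$ denote the corresponding block of boundary indices $\tau\in\tau_0+[s_{k-1}L^{1/3},s_k L^{1/3}]$. Within $I_k$ the differentiability of the curve underlying the boundary through (\ref{b_1^k}) implies that the slope $\zeta^{(0,1)}$ is essentially constant; consequently, if $N_h^{(k)}$ and $N_v^{(k)}$ are the numbers of horizontal and vertical unit steps in $I_k$, then
\begin{equation*}
\frac{N_v^{(k)}}{|I_k|}=\tfrac{1}{2}\bigl(1+\zeta^{(0,1)}(s_k)\bigr)+o(1),\qquad \frac{N_h^{(k)}}{|I_k|}=\tfrac{1}{2}\bigl(1-\zeta^{(0,1)}(s_k)\bigr)+o(1),
\end{equation*}
uniformly in $k$ as $L\to\infty$, since $|I_k|=L^{1/3}\Delta s+O(1)$ and the $o(L^{\sigma})$ fluctuation in (\ref{b_1^k}) is negligible on this scale.

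Next I split the sum over each block as
\begin{equation*}
\sum_{\tau\in I_k}f(b(\tau))=L^{-1/3}\!\!\sum_{\tau\in I_k}\!\Upsilon_{\mathcal{B}}\bigl(\pi(b(\tau))\bigr)f_{\lim}(s(\tau))+\sum_{\tau\in I_k}\varepsilon_L(\tau),
\end{equation*}
where (\ref{limit}) combined with the uniform convergence on $[a,b]$ bounds $|\varepsilon_L(\tau)|\leq L^{-1/3}\delta_L$ with $\delta_L\to 0$. On $I_k$ the differentiability of $f_{\lim}$ lets me replace $f_{\lim}(s(\tau))$ by $f_{\lim}(s_k)$ at the cost of an error $O(\Delta s)$ per term. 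Summing the two-valued function $\Upsilon_{\mathcal{B}}(\pi)$ over $I_k$ using the horizontal/vertical counts above and the explicit form (\ref{upsilon}) gives
\begin{equation*}
\sum_{\tau\in I_k}\Upsilon_{\mathcal{B}}(\pi(b(\tau)))=\frac{\sqrt{\omega}(1-\zeta^{(0,1)})+\sqrt{p\gamma}(1+\zeta^{(0,1)})}{\sqrt{p\gamma}(1+\zeta^{(0,1)})+\sqrt{\omega}(1-\zeta^{(0,1)})}\,|I_k|+o(|I_k|)=|I_k|+o(L^{1/3}),
\end{equation*}
which is the key cancellation. Hence
\begin{equation*}
\sum_{\tau\in I_k}f(b(\tau))=f_{\lim}(s_k)\,\Delta s+o(\Delta s)+O\bigl(\delta_L\Delta s+(\Delta s)^2\bigr).
\end{equation*}

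Summing over $k=1,\dots,N$ yields a Riemann sum for $\int_a^b f_{\lim}(s)\,ds$ with total error $o(1)+O(\Delta s)$ as $L\to\infty$ first and $N\to\infty$ second; since $f_{\lim}$ is (Riemann) integrable the desired limit (\ref{sum limit}) follows. The main obstacle is the block-wise averaging step: one must show that the alternation between vertical and horizontal unit steps along the staircase matches the macroscopic slope $\zeta^{(0,1)}$ uniformly on scale $L^{1/3}$, which requires the regularity assumption on the approximating curve built into (\ref{b_1^k}) (the correction is $o(L^{\sigma})$ with $\sigma=1/3$, so local counts of step types fluctuate only by $o(L^{1/3})$). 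Once this equidistribution is in hand, the remainder of the argument is a standard Riemann-sum estimate.
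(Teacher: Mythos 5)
Your proposal is correct and follows essentially the same route as the paper: a mesoscopic partition into blocks of size $\sim L^{1/3}$, counting vertical and horizontal steps via the local slope $\zeta^{(0,1)}$ (valid because the $o(L^{1/3})$ correction in \eqref{b_1^k} keeps the staircase locally straight on that scale), the cancellation $\tfrac{1+\zeta^{(0,1)}}{2}\Upsilon_{\mathcal{B}}(p)+\tfrac{1-\zeta^{(0,1)}}{2}\Upsilon_{\mathcal{B}}(1)=1$ from \eqref{upsilon}, and a final Riemann-sum limit. The only difference is cosmetic (blocks of width $\Delta s=(b-a)/N$ versus $\varepsilon$, with the same order of limits), so nothing further is needed.
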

\begin{proof} The proof is based on the fact that the order of the correction term accounting for the difference between
the boundary on the lattice and its continuous differentiable counterpart allows one to consider the boundary as locally straight at the scales up  to the fluctuation scale. This in particular means that in such a  small  scale, where the site-independent part of the limiting function can be considered as constant, the site-dependent part can be summed separately. It turns out that under this summation the site dependence exactly cancels with the slope dependence defined at the macroscopic scale, so that the remaining expression converges to integral of the site-independent part only.

To be specific, let us divide the range of summation into bins of size $\varepsilon L^{1/3}$, where $\varepsilon$ is small, and perform the summation
in two stages: first within each bin and second over all the bins. The first summation yields
\begin{equation}
   \sum_{\tau=\tau_0+[\varepsilon n L^{1/3}]}^{\tau_0+[\varepsilon(n+1) L^{1/3}] }f(b(\tau))\simeq L^{-1/3}(N_{v}^{\varepsilon} \Upsilon_{\mathcal{B}}(p)+N_h^{\varepsilon} \Upsilon_{\mathcal{B}}(1))(f_{\lim}(\varepsilon n)+O(\varepsilon)),
\end{equation}
where  and $N_h^{\varepsilon}$ and $N_v^{\varepsilon}$ are the numbers of horizontal and vertical segments of the boundary within the summation range.
Note that the fraction of these numbers, which corresponds to the slope of the boundary, being defined on the macroscopic scale persists up to the fluctuation scale, i.e. depends only on the value of $\chi=\lim_{L\to \infty}\tau_0/L$:
\begin{eqnarray}
    N_v^{\varepsilon}&=&\delta t \simeq\varepsilon L^{1/3} \left(\frac{\partial \zeta(\theta,\chi)/ \partial \chi+1}{2}\right),\\
      N_h^{\varepsilon}&=&-\delta x \simeq -\varepsilon L^{1/3} \left(\frac{\partial \zeta(\theta,\chi)/ \partial \chi-1}{2}\right).
\end{eqnarray}
From the explicit form of $\Upsilon_{\mathcal{B}}(\pi)$, \eqref{upsilon}, we have
\begin{equation}
   \frac{\partial \zeta(\theta,\chi)/ \partial \chi+1}{2}\Upsilon_{\mathcal{B}}(p)-\frac{\partial \zeta(\theta,\chi)/ \partial \chi-1}{2}\Upsilon_{\mathcal{B}}(1)=1,
\end{equation}
i.e. $ (N_{v}^{\varepsilon} \Upsilon(p)+N_h^{\varepsilon}\Upsilon(1))\simeq\varepsilon L^{1/3}$. Finally, after taking  limit $L\to\infty$, performing the second summation $\sum_{1 \leq n \leq [(b-a)/\epsilon]}\varepsilon f_{\lim}(\varepsilon n)$ and taking limit $\epsilon \to 0$ we arrive at the desired result \eqref{sum limit}.
\end{proof}

\begin{acknowledgements}
We are grateful to Patrik Ferrari and Alexei Borodin for illuminating discussions on the terminology
of space-like and time-like paths. This work was supported by the RFBR grant 12-01-00242-a, the grant of the Heisenberg-Landau program and the DFG grant RI 317/16-1. The work of V.P. was jointly supported by RFBR grant (N 12-02-91333) and grant of NRU HSE Scientific Fund (N 12-09-0051).

\end{acknowledgements}


\begin{thebibliography}{99}




\bibitem{Liggett}
Liggett T M, 1999 \textit{Stochastic interacting systems: contact,
voter and exclusion processes} (Berlin: Springer)


\bibitem{update}
Rajewsky N, Santen L, Schadschneider A and Schreckenberg M, 1998
\textit{J. Stat. Phys.} \textbf{92} 151


\bibitem{Johansson}
Johansson K, 2000 \textit{Comm. Math. Phys.} \textbf{209} 437


\bibitem{RS}
R\'akos A and Sch{\"u}tz G M, 2005 \textit{J. Stat. Phys.} \textbf{118} 511

\bibitem{Nagao}
Nagao T and Sasamoto T, 2004 \textit{Nucl. Phys. B} \textbf{699} 487


\bibitem{Sasamoto} Sasamoto T, 2005 \textit{J. Phys. A} \textbf{38} L549

\bibitem{BFPS}
Borodin A, Ferrari P L, Pr{\"a}hofer M and Sasamoto T,
2007 \textit{J. Stat. Phys.} \textbf{129} 1055

\bibitem{BFP}
Borodin A, Ferrari P L and Pr{\"a}hofer M, 2007 \textit{Int. Math. Res. Papers}
\textbf{2007} rpm002

\bibitem{BFS1}
Borodin A, Ferrari P L and Sasamoto T, 2008 \textit{Comm. Pure Appl. Math.}
\textbf{61} 1603-1629

\bibitem{KPZ}
Kardar M, Parisi G, Zhang Y-C, 1986 \textit{Phys. Rev. Lett.} \textbf{56} 889892

\bibitem{Rost}
Rost H, 1981 \textit{Prob. Theory Relat. Fields} \textbf{58} 41

\bibitem{Spohn}
Spohn H, 1991 \textit{Large Scale Dynamics of Interacting Particles}
(Berlin: Springer)

\bibitem{TW}
Tracy C A and Widom H, 1994 \textit{Comm. Math. Phys.} \textbf{159} 151


\bibitem{Mehta}
Mehta M L, 1991 \textit{Random matrices, 2nd ed.} (Academic Press, New York)

\bibitem{PraehoferSpohn}
Praehofer M and Spohn H, 2004 \textit{J. Stat. Phys.} \textbf{115} 255


\bibitem{ImamSasamoto}
Imamura T and Sasamoto T, 2007 \textit{J. Stat. Phys.} \textbf{128} 799

\bibitem{BorodinFerrari}
Borodin A and Ferrari P L, 2008 \textit{Electron. J. Probab.} \textbf{13} 1380-1418

\bibitem{BFS}
Borodin A, Ferrari P L and Sasamoto T,  2008 \textit{Comm. Math. Phys.} \textbf{283} 417

\bibitem{PovPrS}
Povolotsky A M, Priezzhev V B and Sch{\"u}tz G M, \textit{J. Stat. Phys.} \textbf{142} 754


\bibitem{Ferrari}
Ferrari P L, 2008 \textit{J. Stat. Mech.} P07022

\bibitem{CorwinFerrariPeche}
Corwin I, Ferrari P L and P\'ech\'e S,
2012 \textit{Ann. Inst. H. Poincare' Probab. Statist.} \textbf{48} 134

\bibitem{BorodinOlshanski}
Borodin A and Olshanski G, 2006 \textit{Stochastic dynamics related
to Plancherel measure on partitions, In: Representation Theory,
Dynamical Systems, and Asymptotic Combinatorics (V. Kaimanovich and
A. Lodkin, eds.), Amer. Math. Soc. Translations--Series 2:
Advances in the Mathematical Sciences} \textbf{217}



\bibitem{Brankov}
Brankov J G, Priezzhev V B and Shelest R V, 2004 \textit{Phys. Rev. E}
\textbf{69} 066136

\bibitem{KRB} P. Krapivsky, S. Redner and E. Ben–Naim, 2010 \textit{Kinetic view on statistical physics}
(Cambridge University Press)

\bibitem{priezzhev_traject}
Priezzhev V B, 2005 \textit{Pramana--J. Phys.} \textbf{64} 915

\bibitem{Sch1}
Sch{\"u}tz G M, 1997 \textit{J. Stat. Phys.} \textbf{88} 427


\bibitem{KM}
Karlin S and McGregor G, 1959 \textit{Pacific J. Math.} \textbf{9} 1141

\bibitem{L}
Lindstr{\"o}m B, 1973 \textit{Bull. London Math. Soc.} \textbf{5} 85


\bibitem{GV}
Gessel I M and Viennot X, 1985 \textit{Adv. Math.} \textbf{58} 300


\bibitem{johansson2}
Johansson K, 2003 \textit{Comm. Math. Phys.} \textbf{242} 277



\end{thebibliography}
\end{document}